\long\def\beginpgfgraphicnamed#1#2\endpgfgraphicnamed{\includegraphics{#1}}
\renewcommand{\phi}{\varphi}
\newcommand{\weight}[1]{\lvert #1 \rvert}
\newcommand{\FEL}{\ensuremath{\textup{FEL}}}
\newcommand{\FFEL}{\ensuremath{\textup{FFEL}}}
\newcommand{\EqFFEL}{\ensuremath{\textup{EqFFEL}}}
\newcommand{\FE}{\ensuremath{\textsc{fe}\hspace{0.1em}}}
\newcommand{\FT}{\ensuremath{\textup{FT}}}
\newcommand{\FNF}{\ensuremath{\textup{FNF}}}
\newcommand{\nf}{\ensuremath{f}}
\newcommand{\nftitle}{\textit{f}}
\newcommand{\inv}{\ensuremath{g}}
\newcommand{\invtitle}{\textit{g}}
\newcommand{\SCL}{\ensuremath{\textup{SCL}}}
\newcommand{\FSCL}{\ensuremath{\textup{FSCL}}}
\newcommand{\FSCLT}{\ensuremath{\FSCL_\SE}}
\newcommand{\EqFSCL}{\ensuremath{\textup{EqFSCL}}}
\newcommand{\SE}{\ensuremath{\textsc{se}\hspace{0.1em}}}
\newcommand{\ST}{\ensuremath{\textup{ST}}}
\newcommand{\SNF}{\ensuremath{\textup{SNF}}}
\newcommand{\nfs}{\ensuremath{f}}
\newcommand{\nfstitle}{\textit{f}}
\newcommand{\invs}{\ensuremath{g}}
\newcommand{\invstitle}{\textit{g}}
\newcommand{\sub}[2]{\ensuremath{[#1 \mapsto #2]}}
\newcommand{\ssub}[4]{\ensuremath{[#1 \mapsto #2, #3 \mapsto #4]}}
\newcommand{\tlef}{\ensuremath{\trianglelefteq}}
\newcommand{\trig}{\ensuremath{\trianglerighteq}}
\newcommand{\T}{\ensuremath{\mathcal{T}}}
\newcommand{\Tone}{\ensuremath{\T_\Box}}
\newcommand{\Ttwo}{\ensuremath{\T_{1, 2}}}
\newcommand{\E}{\ensuremath{\textsc{ce}\hspace{0.1em}}}
\newcommand{\cd}{\ensuremath{\textup{cd}}}
\newcommand{\dd}{\ensuremath{\textup{dd}}}
\newcommand{\tsd}{\ensuremath{\textup{tsd}}}
\newcommand{\trans}{\ensuremath{h}}
\newcommand{\CP}{\ensuremath{\textup{CP}}}
\newcommand{\CPT}{\ensuremath{\textup{CT}}}
\newcommand{\lef}{\ensuremath{\triangleleft}}
\newcommand{\rig}{\ensuremath{\triangleright}}
\newcommand{\true}{\ensuremath{\textup{\textsf{T}}}}
\newcommand{\false}{\ensuremath{\textup{\textsf{F}}}}
\newcommand{\leftand}{~
     \mathbin{\setlength{\unitlength}{1ex}
     \begin{picture}(1.4,1.8)(-.3,0)
     \put(-.6,0){$\wedge$}
     \put(-.54,-0.2){\textcolor{black}{\circle*{0.6}}}
     \put(-.54,-0.2){\circle{0.6}}
     \end{picture}
     }}
\newcommand{\leftor}{~
     \mathbin{\setlength{\unitlength}{1ex}
     \begin{picture}(1.4,1.8)(-.3,0)
     \put(-.6,0){$\vee$}
     \put(-.54,1.54){\textcolor{black}{\circle*{0.6}}}
     \put(-.54,1.54){\circle{0.6}}
     \end{picture}
     }}
\newcommand{\rightand}{~
     \mathbin{\setlength{\unitlength}{1ex}
     \begin{picture}(1.4,1.8)(-.3,0)
     \put(-.8,0){$\wedge$}
     \put(.72,-0.2){\textcolor{black}{\circle*{0.6}}}
     \put(.72,-0.2){\circle{0.6}}
     \end{picture}
     }}
\newcommand{\rightor}{~
     \mathbin{\setlength{\unitlength}{1ex}
     \begin{picture}(1.4,1.8)(-.3,0)
     \put(-.8,0){$\vee$}
     \put(.72,1.7){\textcolor{black}{\circle*{0.6}}}
     \put(.72,1.7){\circle{0.6}}
     \end{picture}
     }}
\newcommand{\sleftand}{~
     \mathbin{\setlength{\unitlength}{1ex}
     \begin{picture}(1.4,1.8)(-.3,0)
     \put(-.6,0){$\wedge$}
     \put(-.54,-0.2){\textcolor{white}{\circle*{0.6}}}
     \put(-.54,-0.2){\circle{0.6}}
     \end{picture}
     }}
\newcommand{\sleftor}{~
     \mathbin{\setlength{\unitlength}{1ex}
     \begin{picture}(1.4,1.8)(-.3,0)
     \put(-.6,0){$\vee$}
     \put(-.54,1.54){\textcolor{white}{\circle*{0.6}}}
     \put(-.54,1.54){\circle{0.6}}
     \end{picture}
     }}
\newtheorem{definition}{Definition}[chapter]
\newtheorem{lemma}[definition]{Lemma}
\newtheorem{theorem}[definition]{Theorem}
\titleformat{\chapter}[display]
  {\bfseries\raggedleft\titlerule\vspace{3ex}}
  {\Large \MakeUppercase{\chaptertitlename} \thechapter}
  {3ex}
  {\Huge}
  [\vspace{1pc}\titlerule]
\setlist{itemsep=-2pt}
\begin{document}

%
%
\newgeometry{margin=0cm}
\pdfbookmark[0]{Title page}{titlepage}
\begin{center}
  \null\vfill
  {\Large Completeness for Two Left-Sequential Logics \par}
  \vskip2cm
  {\large \textbf{MSc Thesis} (\textsl{Afstudeerscriptie})}
  \vskip3mm
  written by
  \vskip3mm
  \textbf{D.J.C. Staudt}\\
  (born January 12, 1987 in Amsterdam, The Netherlands)
  \vskip3mm
  under the supervision of \textbf{Dr. Alban Ponse}, and submitted to the
  Board of \\ Examiners in partial fulfillment of the requirements for the
  degree of
  \vskip5mm
  {\large\textbf{MSc in Logic}}
  \vskip5mm
  at the \textit{Universiteit van Amsterdam}.
  \vskip2cm
  \begin{tabular}{ll}
    \textbf{Date of the public defense:}
      & \textbf{Members of the Thesis Committee:} \\
    \textsl{May 31st, 2012}
      & Dr. Alexandru Baltag \\
      & Dr. Inge Bethke \\
      & Dr. Alban Ponse \\
      & Prof. Dr. Frank Veltman \\
  \end{tabular}
  \vfill
  \resizebox{10cm}{!}{\includegraphics{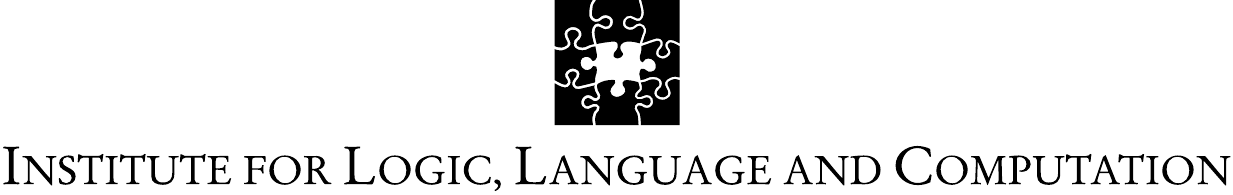}}
  \vskip5cm
\end{center}
\thispagestyle{empty}
\restoregeometry
\cleardoublepage

%
%
\begin{titlepage}
\pdfbookmark[0]{Abstract}{abstract}
\begin{abstract}
\setcounter{page}{3}
Left-sequential logics provide a means for reasoning about (closed)
propositional terms with atomic propositions that may have side effects and
that are evaluated sequentially from left to right. Such propositional terms
are commonly used in programming languages to direct the flow of a program. In
this thesis we explore two such left-sequential logics. First we discuss Fully
Evaluated Left-Sequential Logic, which employs a full evaluation strategy,
i.e., to evaluate a term every one of its atomic propositions is evaluated
causing its possible side effects to occur. We then turn to Short-Circuit
(Left-Sequential) Logic as presented in \cite{scl}, where the evaluation may be
`short-circuited', thus preventing some, if not all, of the atomic propositions
in a term being evaluated. We propose evaluation trees as a natural semantics
for both logics and provide axiomatizations for the least identifying variant
of each. From this, we define a logic with connectives that prescribe a full
evaluation strategy as well as connectives that prescribe a short-circuit
evaluation strategy.
\end{abstract}
\end{titlepage}
\cleardoublepage

%
%
\pdfbookmark[0]{Contents}{contents}
\tableofcontents
\cleardoublepage

%
%
\chapter{Introduction}
\label{chap:intro}
In computer programming it is common to use propositional terms to control the
flow of a program. These expressions occur for example in \verb'if' and
\verb'while' statements. Although at first sight it might appear as though
these expressions are terms of a Boolean algebra, it turns out that their
semantics are not governed by ordinary Propositional Logic (PL). The reason is
that many programming languages allow arbitrary instructions, e.g., method
calls, to occur as atomic propositions in such terms. Those instructions may
have side effects. Therefore the truth value of a term may depend on the state
of the execution environment, e.g., the operating system or the Java Virtual
Machine.  This state in turn can also be altered by the evaluation of (part
of) a term.  For example, in the term \verb'x && y', a side effect of the
evaluation of \verb'x' may be that any subsequent evaluation of \verb'y' yields
false. In that case \verb'x && y' will yield false, while \verb'y && x' may
yield true, i.e., conjunction is no longer commutative.


This shows that, unlike in PL, the evaluation strategy that is used impacts the
truth values of terms. Most programming languages evaluate such terms
sequentially from left to right. We refer to such an evaluation strategy as a
\emph{left-sequential evaluation strategy}. In addition, some programming
languages offer connectives that are evaluated using a \emph{short-circuit
(left-sequential) evaluation strategy}, such as \verb^&&^ and \verb^||^ in the
Java programming language, see, e.g., \cite{java}. A short-circuit evaluation
strategy is one that evaluates only as much of a propositional term as is
necessary to determine its truth value. For example, when evaluating the term
\verb'x && y', if \verb'x' evaluates to false, the entire term will be false,
regardless of the truth value of \verb'y'. In that case the evaluation is
`short-circuited' and \verb'y' is never evaluated. An evaluation strategy that
always evaluates terms in their entirety is called a \emph{full
(left-sequential) evaluation strategy}. In Java, for example, the connectives
\verb^&^ and \verb^|^ are evaluated using a full evaluation strategy. Some
languages provide both short-circuited and full versions of the binary
connectives, as Java does, thus allowing the programmer to write terms that
prescribe a mixed evaluation strategy.


In \cite{pa}, Bergstra and Ponse introduce Proposition Algebra as a means for
reasoning about sequential evaluations of propositional terms using a ternary
conditional connective, $y \lef x \rig z$, to be read as `if $x$ then $y$ else
$z$'. In \cite{scl}, they define \emph{Short-Circuit Logic} ($\SCL$) in terms
of Proposition Algebra using left-sequential versions of the standard logical
connectives. $\SCL$ formalizes equality between propositional terms that are
evaluated with a short-circuit evaluation strategy. They use $\neg$ for
negation, $\sleftor$ for (short-circuit) left-sequential disjunction and
$\sleftand$ for (short-circuit) left-sequential conjunction. The position of
the circle indicates the direction of the evaluation, i.e., from left to
right.  The negation symbol does not have a circle, because it has only one
possible evaluation strategy, i.e., evaluate the negated subterm and then
negate the result. Several variants of $\SCL$ are described in \cite{scl},
ranging from the least identifying, Free $\SCL$ ($\FSCL$), to the most
identifying, Static $\SCL$, which corresponds to PL. The only difference
between Static $\SCL$ and PL is that the connectives in Static $\SCL$ are
left-sequential and that the evaluation is short-circuited. Several semantics
have been given for $\SCL$, such as valuation congruences \cite{pa},
Hoare-McCarthy algebras \cite{hma} and truth tables \cite{tt}.


In \cite{sel} Blok first defined \emph{Fully Evaluated Left-Sequential Logic},
or \emph{Fully Evaluated Logic} ($\FEL$) for short. $\FEL$ is used for dealing
with terms that are to be evaluated using a full evaluation strategy. Blok
refers to this logic as Side-Effecting Logic, but we prefer the name $\FEL$ so
that we do not implicitly discount $\SCL$ as a logic that can be used for
reasoning about side effects. To allow for a mixed setting of $\FEL$ and
$\SCL$, we must distinguish the symbols used in $\FEL$-terms from those of
Bergstra and Ponse. We use $\leftand$ for full left-sequential conjunction and
$\leftor$ for full left-sequential disjunction. We still use $\neg$ for
negation, because it is evaluated with the same strategy as in $\SCL$. We can
now view the open circles in $\sleftand$ and $\sleftor$ as indicating
short-circuiting while the closed circles of $\leftand$ and $\leftor$ indicate
full evaluations. No variants of $\FEL$ other than Free $\FEL$ ($\FFEL$) have
yet been formally defined.


In this thesis we will also define a logic for reasoning about propositional
terms that contain both short-circuit left-sequential connectives and full
left-sequential connectives. We refer to a logic that offers both types of
connectives as a \emph{general left-sequential logic}.


The main differences between these left-sequential logics and PL is that they
employ a left-sequential evaluation strategy and that their atoms may have side
effects. We note that logics employing right-sequential evaluation strategies
can easily be expressed in terms of their left-sequential counterparts. We
study the left-sequential versions because most programming languages are
oriented left-to-right, mainly due to having been developed in the Western
world. Although side effects are well understood in programming, see e.g.,
\cite{black} or \cite{norrish}, they are often explained without a general
formal definition. In Chapter \ref{chap:concl} we will substantiate our claim
that these logics can be used to formally reason about propositional terms
whose atoms may have side effects. We note that both $\SCL$ and $\FEL$ are
sublogics of PL, in the sense that they identify fewer propositions, i.e.,
closed terms, although both have extreme variants that are equivalent with PL.


We start in Chapter \ref{chap:fel} by formally introducing $\FFEL$ and the set
of equations $\EqFFEL$. We prove that $\EqFFEL$ is an axiomatization of
$\FFEL$. In Chapter \ref{chap:scl} we introduce $\FSCLT$ as an alternative
semantics for the Proposition Algebra semantics of $\FSCL$. We also discuss the
set of equations $\EqFSCL$ and prove that it axiomatizes $\FSCLT$. In Section
\ref{sec:fsclpa} we prove that it also axiomatizes $\FSCL$. Chapters
\ref{chap:fel} and \ref{chap:scl} are written to be self-contained, hence there
is some duplication of definitions and narrative. In Chapter \ref{chap:rel} we
investigate the relations between $\FFEL$ and $\FSCL$. We show in Section
\ref{sec:ffelpa} that $\FFEL$ can also be expressed in terms of Proposition
Algebra. In Section \ref{sec:fsclpa} we prove that $\FSCLT$ is equivalent to
$\FSCL$. In Section \ref{sec:ffelfscl} we show that $\FFEL$ is a sublogic of
$\FSCL$ and use this fact to define a general left-sequential logic. We
conclude with some final remarks and provide an outlook for further study in
Chapter \ref{chap:concl}.


\chapter{Free Fully Evaluated Logic (FFEL)}
\label{chap:fel}
In this chapter we define Free Fully Evaluated Left-Sequential Logic, or Free
Fully Evaluated Logic ($\FFEL$) for short, and the set of equations $\EqFFEL$,
which we will prove axiomatizes $\FFEL$ in Section \ref{sec:felcpl}. We start
by defining $\FEL$-terms, which are built up from atomic propositions, referred
to as atoms, the truth value constants $\true$ for true and $\false$ for false
and the connectives $\neg$ for negation, $\leftand$ for full left-sequential
conjunction and $\leftor$ for full left-sequential disjunction.
\begin{definition}
Let $A$ be a countable set of atoms. \textbf{$\FEL$-terms $(\FT)$} have the
following grammar presented in Backus-Naur Form.
\begin{equation*}
P \in \FT ::= a \in A ~\mid~ \true ~\mid~ \false ~\mid~ \neg P
  ~\mid~ (P \leftand P) ~\mid~ (P \leftor P)
\end{equation*}
\end{definition}
If $A = \varnothing$ then the resulting logic is trivial.

Let us return for a moment to our motivation for left-sequential logics, i.e.,
propositional terms as used in programming languages. We will consider the
$\FEL$-term $a \leftor b$ and informally describe its evaluation, naturally
using a full evaluation strategy. We start by evaluating $a$ and let its
yield determine our next action. If $a$ yielded $\false$ we proceed by
evaluating $b$, i.e., the yield of the term as a whole will be the yield of
$b$. If $a$ yielded $\true$, we already know at this point that $a \leftor b$
will yield $\true$. We still evaluate $b$ though, but ignore its yield and
instead have the term yield $\true$. Evaluating a subterm even though its yield
is not needed to determine the yield of the term as a whole is the quintessence
of a full evaluation strategy.

Considering the more complex term $(a \leftor b) \leftand c$, we find that we
start by evaluating $a \leftor b$ and if it yielded $\true$ we proceed by
evaluating $c$. If it yielded $\false$ we still evaluate $c$, even though we
know that the term as a whole will now yield $\false$. The discussion of the
evaluations of these terms may have evoked images of trees in the mind of the
reader. We will indeed define equality of $\FEL$-terms using (evaluation)
trees. We define the set $\T$ of binary trees over $A$ with leaves in $\{\true,
\false\}$ recursively. We have that
\begin{equation*}
\true \in \T, \qquad
\false \in \T, \qquad\textrm{and}\qquad
(X \tlef a \trig Y) \in \T \textrm{ for any $X, Y \in \T$ and $a \in A$}.
\end{equation*}
In the expression $X \tlef a \trig Y$ the root is represented by $a$, the left
branch by $X$ and the right branch by $Y$. As is common, the depth of a tree
$X$ is defined recursively by $d(\true) = d(\false) = 0$ and for all $a \in A$,
$d(Y \tlef a \trig Z) = 1 + \max(d(Y), d(Z))$. Our reason for choosing this
particular notation for trees, out of the many that exist, is explained in
Chapter~\ref{chap:rel}. We shall refer to trees in $\T$ as evaluation trees, or
simply trees for short. Figure \ref{fig:exfe} shows the trees corresponding to
the evaluations of $(a \leftor b) \leftand c$ and $(a \leftand b) \leftor c$.

Returning to our example, we have seen that the tree corresponding to the
evaluation of $(a \leftor b) \leftand c$ can be composed from the tree
corresponding to the evaluation of $a \leftor b$ and that corresponding to the
evaluation of $c$. We said above that if $a \leftor b$ yielded $\true$, we
would proceed with the evaluation of $c$. This can be seen as replacing each
$\true$-leaf in the tree corresponding to the evaluation of $a \leftor b$ with
the tree that corresponds to the evaluation of $c$. Formally we define the leaf
replacement operator, `replacement' for short, on trees in $\T$ as follows. Let
$X, X', X'', Y, Z \in \T$ and $a \in A$. The replacement of $\true$ with $Y$
and $\false$ with $Z$ in $X$, denoted $X\ssub{\true}{Y}{\false}{Z}$, is defined
recursively as
\begin{align*}
\true\ssub{\true}{Y}{\false}{Z} &= Y \\
\false\ssub{\true}{Y}{\false}{Z} &= Z \\
(X' \tlef a \trig X'')\ssub{\true}{Y}{\false}{Z} &=
  X'\ssub{\true}{Y}{\false}{Z} \tlef a \trig
  X''\ssub{\true}{Y}{\false}{Z}.
\end{align*}
We note that the order in which the replacements of the leaves of $X$ is listed
inside the brackets is irrelevant. We will adopt the convention of not listing
any identities inside the brackets, i.e.,
\begin{equation*}
X\sub{\false}{Y} = X\ssub{\true}{\true}{\false}{Y}.
\end{equation*}
Furthermore we let replacements associate to the left. We also use that fact
that 
\begin{equation*}
X\sub{\true}{Y}\sub{\false}{Z} = X\ssub{\true}{Y}{\false}{Z}
\end{equation*}
if $Y$ does not contain $\false$, which can be shown by a trivial induction.
Similarly,
\begin{equation*}
X\sub{\false}{Z}\sub{\true}{Y} = X\ssub{\true}{Y}{\false}{Z}
\end{equation*}
if $Z$ does not contain $\true$. We now have the terminology and notation to
formally define the evaluation of $\FEL$-terms.

\begin{definition}
Let $A$ be a countable set of atoms and let $\T$ be the set of all finite
binary trees over $A$ with leaves in $\{\true, \false\}$. We define the unary
\textbf{Full Evaluation} function $\FE: \FT \to \T$ as:
\begin{align*}
\FE(\true) &= \true \\
\FE(\false) &= \false \\
\FE(a) &= \true \tlef a \trig \false &\textrm{for $a \in A$} \\
\FE(\neg P) &= \FE(P)\ssub{\true}{\false}{\false}{\true} \\
\FE(P \leftand Q) &=
  \FE(P)\ssub{\true}{\FE(Q)}{\false}{\FE(Q) \sub{\true}{\false}} \\
\FE(P \leftor Q) &=
  \FE(P)\ssub{\true}{\FE(Q)\sub{\false}{\true}}{\false}{\FE(Q)}.
\end{align*}
\end{definition}
Note that because we require $A$ to be a set, $\T$ is also a set. By a trivial
induction we can show that all trees in the image of $\FE$ are perfect binary
trees, i.e., all their paths are of equal length. As we can see from the
definition on atoms, the evaluation continues in the left branch if an atom
yields $\true$ and in the right branch if it yields $\false$.  Revisiting our
example once more, we indeed see how the evaluation of $a \leftor b$ is
composed of the evaluation of $a$ followed by the evaluation of $b$ in case $a$
yields $\false$ and by the evaluation of $b$, but with a constant yield of
$\true$, in case $a$ yields $\true$. We can compute $\FE(a \leftor b)$ as
follows.
\begin{align*}
\FE(a \leftor b) &= (\true \tlef a \trig \false)\ssub{\true}{(\true \tlef b
  \trig \false)\sub{\false}{\true}}{\false}{(\true \tlef b \trig \false)} \\
&= (\true \tlef a \trig \false)\ssub{\true}{(\true \tlef b \trig
  \true)}{\false}{(\true \tlef b \trig \false)} \\
&= (\true \tlef b \trig \true) \tlef a \trig (\true \tlef b \trig \false)
\end{align*}
Now the evaluation of $(a \leftor b) \leftand c$ is a composition of this tree
and $\true \tlef c \trig \false$, as can be seen in Figure \ref{fig:exfe2}.

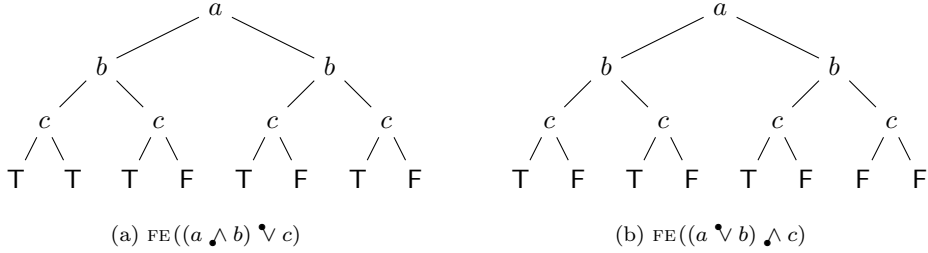
\begin{figure}[thbp]
\hrule
\centering
\subfloat[$\FE((a \protect\leftand b) \protect\leftor c)$]{\label{fig:exfe1}
\beginpgfgraphicnamed{fel1}
\begin{tikzpicture}[%
level distance=7.5mm,
level 1/.style={sibling distance=30mm},
level 2/.style={sibling distance=15mm},
level 3/.style={sibling distance=7.5mm}
]
\node (a) {$a$}
  child {node (b1) {$b$}
    child {node (c1) {$c$}
      child {node (d1) {$\true$}} 
      child {node (d2) {$\true$}}
    }
    child {node (c2) {$c$}
      child {node (d3) {$\true$}} 
      child {node (d4) {$\false$}}
    }
  }
  child {node (b2) {$b$}
    child {node (c3) {$c$}
      child {node (d5) {$\true$}} 
      child {node (d6) {$\false$}}
    }
    child {node (c4) {$c$}
      child {node (d7) {$\true$}} 
      child {node (d8) {$\false$}}
    }
  };
\end{tikzpicture}\endpgfgraphicnamed}
\qquad
\subfloat[$\FE((a \protect\leftor b) \protect\leftand c)$]{\label{fig:exfe2}
\beginpgfgraphicnamed{fel2}
\begin{tikzpicture}[%
level distance=7.5mm,
level 1/.style={sibling distance=30mm},
level 2/.style={sibling distance=15mm},
level 3/.style={sibling distance=7.5mm}
]
\node (a) {$a$}
  child {node (b1) {$b$}
    child {node (c1) {$c$}
      child {node (d1) {$\true$}} 
      child {node (d2) {$\false$}}
    }
    child {node (c2) {$c$}
      child {node (d3) {$\true$}} 
      child {node (d4) {$\false$}}
    }
  }
  child {node (b2) {$b$}
    child {node (c3) {$c$}
      child {node (d5) {$\true$}} 
      child {node (d6) {$\false$}}
    }
    child {node (c4) {$c$}
      child {node (d7) {$\false$}} 
      child {node (d8) {$\false$}}
    }
  };
\end{tikzpicture}\endpgfgraphicnamed}
\vspace{1em}
\hrule
\vspace{1em}
\caption{Trees depicting the evaluation of two $\FEL$-terms. The evaluation
starts at the root. When (the atom at) an inner node yields $\true$ the
evaluation continues in its left branch and when it yields $\false$ it
continues in its right branch. The leaves indicate the yield of the terms as a
whole.}
\label{fig:exfe}
\end{figure}

Informally we see that two $\FEL$-terms are equal when they not only yield the
same truth value given the truth values of their constituent atoms, but also
contain the same atoms \emph{in the same order}. Consider for example the terms
$a$ and $a \leftand (b \leftor \true)$ and note that the truth value of both
is determined entirely by the truth value of $a$. Also note that since $b$
occurs after $a$ in the second term, no side effect of $b$ could ever affect
$a$. When both terms would be placed in the context of another term, e.g., $a
\leftand c$ and $(a \leftand (b \leftor \true)) \leftand c$, the situation
changes. A side effect of $b$ might, for example, be that $c$ will yield
$\false$. In that case the truth value of the first term is determined by $a$
and $c$, while that of the second term is always $\false$. We are now ready to
define Fully Evaluated Left-Sequential Logic.
\begin{definition}
A \textbf{Fully Evaluated Left-Sequential Logic $(\FEL)$} is a logic that
satisfies the consequences of $\FE$-equality. \textbf{Free Fully Evaluated
Left-Sequential Logic $(\FFEL)$} is the fully evaluated left-sequential logic
that satisfies no more consequences than those of $\FE$-equality, i.e., for all
$P, Q \in \FT$,
\begin{equation*}
\FEL \vDash P = Q \ \Longleftarrow\  \FE(P) = \FE(Q)
\quad\textrm{and}\quad
\FFEL \vDash P = Q \ \Longleftrightarrow\  \FE(P) = \FE(Q).
\end{equation*}
\end{definition}
It is not considered standard to define a logic equationally, but in this case
we feel it is warranted to avoid having to mix the connectives from PL with
those from $\FEL$.


There is an immediate correspondence between trees and sets of traces, namely
the paths of such trees annotated with the truth value of each internal node.
For example, Figure \ref{fig:exfe1} would correspond to the set of traces
\begin{align*}
\{&(a\true b\true c\true, \true),
(a\true b\true c\false, \true),
(a\true b\false c\true, \true),
(a\true b\false c\false, \false), \\
&(a\false b\true c\true, \true),
(a\false b\true c\false, \false),
(a\false b\false c\true, \true),
(a\false b\false c\false, \false)\}.
\end{align*}
This means we could have defined the image of $\FE$ to be sets of such
annotated traces. We chose to define $\FEL$ with tree semantics rather than
with trace semantics because the former affords us a more succinct notation.


We now turn to the set of equations $\EqFFEL$, listed in Table
\ref{tab:eqffel}, which we will show in Section \ref{sec:felcpl} is an
axiomatization of $\FFEL$. This set of equations was first presented by Blok in
\cite{sel}.
\begin{table}[htpb]
\hrule
\begin{align}
\false &= \neg \true
  \label{ax:ft1}\tag{FEL1} \\
x \leftor y &= \neg(\neg x \leftand \neg y)
  \label{ax:ft2}\tag{FEL2} \\
\neg \neg x &= x
  \label{ax:ft3}\tag{FEL3} \\
(x \leftand y) \leftand z &= x \leftand (y \leftand z)
  \label{ax:ft7}\tag{FEL4} \\
\true \leftand x &= x
  \label{ax:ft4}\tag{FEL5} \\
x \leftand \true &= x
  \label{ax:ft5}\tag{FEL6} \\
x \leftand \false &= \false \leftand x
  \label{ax:ft6}\tag{FEL7} \\
x \leftand \false &= \neg x \leftand \false
  \label{ax:ft8}\tag{FEL8} \\
(x \leftand \false) \leftor y &= (x \leftor \true) \leftand y
  \label{ax:ft9}\tag{FEL9} \\
x \leftor (y \leftand \false) &= x \leftand (y \leftor \true)
  \label{ax:ft10}\tag{FEL10}
\end{align}
\hrule
\vspace{1em}
\caption{The set of equations \textbf{\EqFFEL}.}
\label{tab:eqffel}
\end{table}
\clearpage
If two $\FEL$-terms $s$ and $t$, possibly containing variables, are
derivable by equational logic and $\EqFFEL$, we denote this by $\EqFFEL \vdash
s = t$ and say that $s$ and $t$ are derivably equal. By virtue of
\eqref{ax:ft1} through \eqref{ax:ft3}, $\leftand$ is the dual of $\leftor$
and hence the duals of the equations in $\EqFFEL$ are also derivable. We will
use this fact implicitly throughout our proofs.


The following lemma shows some useful equations illustrating the special
properties of terms of the form $x \leftand \false$ and $x \leftor \true$. The
first is an `extension' of \eqref{ax:ft8} and the others show two different
ways how terms of the form $x \leftor \true$, and by duality terms of the form
$x \leftand \false$, can change the main connective of a term.
\begin{lemma}
\label{lem:feqs}
The following equations can all be derived by equational logic and $\EqFFEL$.
\begin{enumerate}[itemsep=5pt]
\item $x \leftand (y \leftand \false) = \neg x \leftand (y \leftand \false)$
  \label{eq:a1}
\item $(x \leftor \true) \leftand y = \neg(x \leftor \true) \leftor y$
  \label{eq:a5}
\item $x \leftor (y \leftand (z \leftor \true)) = (x \leftor y) \leftand
  (z \leftor \true)$
  \label{eq:a2}
\end{enumerate}
\end{lemma}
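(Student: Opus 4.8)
The plan is to derive all three equations purely equationally from $\EqFFEL$, relying chiefly on the ``interchange'' axioms \eqref{ax:ft6}, \eqref{ax:ft8}, \eqref{ax:ft9} and \eqref{ax:ft10}, on associativity \eqref{ax:ft7}, and on the De Morgan-style consequences of \eqref{ax:ft1}--\eqref{ax:ft3}. Throughout I will use duals of the axioms freely, as the remark after Table~\ref{tab:eqffel} allows; in particular I will need associativity of $\leftor$ and the dual $x \leftor \true = \neg x \leftor \true$ of \eqref{ax:ft8}. One small preliminary fact I will want is $\neg(x \leftor \true) = \neg x \leftand \false$, obtained by expanding $\leftor$ with \eqref{ax:ft2}, cancelling the double negation with \eqref{ax:ft3}, and rewriting $\neg\true$ as $\false$ via \eqref{ax:ft1}.

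For \eqref{eq:a1} I would convert $y \leftand \false$ to $\false \leftand y$ with \eqref{ax:ft6}, reassociate by \eqref{ax:ft7} so that the subterm $x \leftand \false$ is exposed, apply \eqref{ax:ft8} to replace it by $\neg x \leftand \false$, and then reassociate and undo the first move; in symbols, $x \leftand (y \leftand \false) = x \leftand (\false \leftand y) = (x \leftand \false) \leftand y = (\neg x \leftand \false) \leftand y = \neg x \leftand (\false \leftand y) = \neg x \leftand (y \leftand \false)$.

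For \eqref{eq:a5} I would start from the right-hand side, use the preliminary fact to rewrite $\neg(x \leftor \true) \leftor y$ as $(\neg x \leftand \false) \leftor y$, apply \eqref{ax:ft9} to obtain $(\neg x \leftor \true) \leftand y$, and finally strip the negation with the dual of \eqref{ax:ft8} to arrive at $(x \leftor \true) \leftand y$.

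For \eqref{eq:a2} the key observation is that \eqref{ax:ft10} can be read in reverse as $y \leftand (z \leftor \true) = y \leftor (z \leftand \false)$; after that substitution the left-hand side becomes $x \leftor (y \leftor (z \leftand \false))$, which reassociates to $(x \leftor y) \leftor (z \leftand \false)$ by the dual of \eqref{ax:ft7}, and one forward use of \eqref{ax:ft10} (with $x \leftor y$ playing the role of the first variable) yields $(x \leftor y) \leftand (z \leftor \true)$. None of these derivations is long; the only real obstacle is the bookkeeping --- selecting the right axiom or dual at each step, and, for \eqref{eq:a5} and \eqref{eq:a2}, spotting that \eqref{ax:ft9} and \eqref{ax:ft10} are meant to be used in both directions, i.e.\ as rules that swap the ``absorbing-up-to-side-effects'' terms $x \leftand \false$ and $x \leftor \true$ for one another while flipping the outer connective.
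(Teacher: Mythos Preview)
Your proposal is correct and follows essentially the same route as the paper: the derivations for \eqref{eq:a1} and \eqref{eq:a2} are step-for-step identical to the paper's, and for \eqref{eq:a5} you merely traverse the same chain of equalities in the opposite direction (using the dual of \eqref{ax:ft8} where the paper uses \eqref{ax:ft8} itself).
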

\begin{proof}
We derive the equations in order.
\begin{align*}
x \leftand (y \leftand \false)
&= (x \leftand \false) \leftand y
  &\textrm{by \eqref{ax:ft6} and \eqref{ax:ft7}} \\
&= (\neg x \leftand \false) \leftand y
  &\textrm{by \eqref{ax:ft8}} \\
&= \neg x \leftand (y \leftand \false)
  &\textrm{by \eqref{ax:ft6} and \eqref{ax:ft7}} \\[5pt]
(x \leftor \true) \leftand y
&= (x \leftand \false) \leftor y
  &\textrm{by \eqref{ax:ft9}} \\
&= (\neg x \leftand \false) \leftor y
  &\textrm{by \eqref{ax:ft8}} \\
&= (\neg x \leftand \neg \true) \leftor y
  &\textrm{by \eqref{ax:ft1}} \\
&= \neg(x \leftor \true) \leftor y
  &\textrm{by \eqref{ax:ft3} and \eqref{ax:ft2}} \\[5pt]
x \leftor (y \leftand (z \leftor \true))
&= x \leftor (y \leftor (z \leftand \false))
  &\textrm{by \eqref{ax:ft10}} \\
&= (x \leftor y) \leftor (z \leftand \false)
  &\textrm{by the dual of \eqref{ax:ft7}} \\
&= (x \leftor y) \leftand (z \leftor \true)
  &\textrm{by \eqref{ax:ft10}} &\qedhere
\end{align*}
\end{proof}


\begin{theorem}
\label{thm:felsnd}
For all $P, Q \in \FT$, if $\EqFFEL \vdash P = Q$ then $\FFEL \vDash P = Q$.
\end{theorem}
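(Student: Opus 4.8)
The plan is to prove soundness in the standard way: show that $\FE$-equality is preserved by each axiom in $\EqFFEL$, and that it is a congruence with respect to the connectives, so that it is closed under the rules of equational logic. Since $\FFEL \vDash P = Q$ is \emph{defined} to mean $\FE(P) = \FE(Q)$, it suffices to verify that (i) for every axiom $s = t$ of $\EqFFEL$ and every substitution of $\FEL$-terms for the variables, the two sides have the same $\FE$-image, and (ii) $\FE(P) = \FE(P')$ and $\FE(Q) = \FE(Q')$ imply $\FE(\neg P) = \FE(\neg P')$, $\FE(P \leftand Q) = \FE(P' \leftand Q')$, and similarly for $\leftor$. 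Part (ii) is immediate from the recursive clauses defining $\FE$ on $\neg$, $\leftand$, $\leftor$, since those clauses build $\FE(\neg P)$, $\FE(P \leftand Q)$, etc., purely out of $\FE(P)$ and $\FE(Q)$ via the replacement operator; reflexivity, symmetry and transitivity of $=$ on $\T$ are trivial.

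For part (i) I would first prove a handful of small lemmas about the replacement operator on $\T$ that package the computations used repeatedly: associativity/composition facts like $X\sub{\true}{Y}\sub{\false}{Z} = X\ssub{\true}{Y}{\false}{Z}$ when $Y$ has no $\false$-leaf (already noted in the text), the behaviour of nested replacements, and the key observation that for any $X \in \T$ the tree $\FE(P)\sub{\true}{\false}$ (replace every $\true$ by $\false$) is the ``constant-$\false$'' tree with the same shape as $\FE(P)$, and dually for $\sub{\false}{\true}$. With these in hand, each axiom reduces to a routine equality of trees. Most are easy: (FEL1) is $\FE(\false)=\FE(\neg\true)$ by direct computation; (FEL2) follows because negation is an involutive leaf-swap that commutes appropriately through the replacements defining $\leftand$ and $\leftor$ (this is essentially De Morgan at the tree level); (FEL3) is that swapping $\true$ and $\false$ leaves twice is the identity; (FEL5) and (FEL6) are $\true\ssub{\true}{X}{\false}{\cdots} = X$ and $X\ssub{\true}{\true}{\false}{\false} = X$; associativity (FEL4) is a short induction on the structure of $\FE(P)$ using the composition lemma for replacements.

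The main obstacle will be the axioms that mix $x \leftand \false$ / $x \leftor \true$ with the other connectives, namely (FEL7), (FEL8), (FEL9), (FEL10) — and correspondingly Lemma~\ref{lem:feqs} would be mirrored here at the semantic level. The point is that $\FE(P \leftand \false)$ is, after unfolding, $\FE(P)$ with \emph{every} leaf replaced by $\false$, i.e. the constant-$\false$ tree of the same shape as $\FE(P)$; hence it ``forgets'' the leaf labels of $\FE(P)$ but remembers its branching structure (the atoms and their order). The same is true of $\FE(\neg P \leftand \false)$ since negation does not change shape, which gives (FEL8); and $\FE(\false \leftand P)$ unfolds to $\false\ssub{\true}{\FE(P)}{\false}{\FE(P)\sub{\true}{\false}} = \FE(P)\sub{\true}{\false}$, again the constant-$\false$ tree of the shape of $\FE(P)$, giving (FEL7). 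For (FEL9) and (FEL10) I would compute both sides explicitly: e.g. for (FEL9), $\FE((x \leftand \false)\leftor y)$ is the shape of $\FE(P)$ with each leaf replaced by $\FE(Q)$ (the $\false$-leaves directly, the $\true$-leaves by $\FE(Q)\sub{\false}{\true}$ then... — careful bookkeeping needed), and $\FE((x \leftor \true)\leftand y)$ similarly resolves to the shape of $\FE(P)$ with every leaf replaced by $\FE(Q)$; the two match because in both cases the leaf data of $\FE(P)$ is washed out and replaced uniformly by $\FE(Q)$. Getting these replacement computations exactly right — tracking which of $\sub{\true}{\cdot}$ and $\sub{\false}{\cdot}$ applies where, and in which order — is the fiddly part, but it is purely mechanical once the composition lemmas for replacement are established. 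I expect no conceptual difficulty beyond that.
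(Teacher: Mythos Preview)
Your proposal is correct and follows essentially the same approach as the paper: verify closure under the rules of equational logic (the paper spells out one congruence case, you note them all), then check each axiom by unfolding $\FE$ and computing with the replacement operator. The paper illustrates only \eqref{ax:ft8}, while you sketch more of the axioms and make explicit the useful observation that $\FE(P \leftand \false)$ is the constant-$\false$ tree with the shape of $\FE(P)$; this is exactly the content of the paper's example computation, so there is no real difference in method.
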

\begin{proof}
It is immediately clear that identity, symmetry and transitivity are preserved.
For congruence we show only that for all $P, Q, R \in \FT$, $\FFEL \vDash P =
Q$ implies $\FFEL \vDash R \leftand P = R \leftand Q$. The other cases proceed
in a similar fashion. If $\FFEL \vDash P = Q$, then $\FE(P) = \FE(Q)$ and hence
$\FE(P)\sub{\true}{\false} = \FE(Q)\sub{\true}{\false}$, so
\begin{equation*}
\FE(R)\ssub{\true}{\FE(P)}{\false}{\FE(P)\sub{\true}{\false}} =
\FE(R)\ssub{\true}{\FE(Q)}{\false}{\FE(Q)\sub{\true}{\false}}.
\end{equation*}
Therefore by definition of $\FE$, $\FFEL \vDash R \leftand P = R \leftand Q$.

The validity of the equations in $\EqFFEL$ is also easily verified. As an
example we show this for \eqref{ax:ft8}. 
\begin{align*}
\FE(P \leftand \false) &=
\FE(P)\ssub{\true}{\false}{\false}{\false\sub{\true}{\false}}
  &\textrm{by definition} \\
&= \FE(P)\sub{\true}{\false}
  &\textrm{because $\false\sub{\true}{\false} = \false$} \\
&= \FE(P)\ssub{\true}{\false}{\false}{\true}\sub{\true}{\false}
  &\textrm{by induction} \\
&= \FE(\neg P \leftand \false),
\end{align*}
where the induction that proves the third equality is on the structure of
evaluation trees.
\end{proof}


%
%
%
\section{FEL Normal Form}
\label{sec:felnf}
To aid in our completeness proof we define a normal form for $\FEL$-terms. Due
to the possible presence of side effects, $\FFEL$ does not identify terms which
contain different atoms or the same atoms in a different order. Because of
this, common normal forms for PL are not normal forms for $\FEL$-terms. For
example, rewriting a term to Conjunctive Normal Form or Disjunctive Normal Form
may require duplicating some of the atoms in the term, thus yielding a term
that is not derivably equal to the original. We first present the grammar for
our normal form, before motivating it. The normal form we present here is an
adaptation of a normal form proposed by Blok in \cite{sel}.


\begin{definition}
\label{def:fnf}
A term $P \in \FT$ is said to be in \textbf{$\FEL$ Normal Form $(\FNF)$} if it
is generated by the following grammar.
\begin{align*}
P \in \FNF &::= P^\true ~\mid~ P^\false ~\mid~ P^\true \leftand P^* \\
P^* &::= P^c ~\mid~ P^d \\
P^c &::= P^\ell ~\mid~ P^* \leftand P^d \\
P^d &::= P^\ell ~\mid~ P^* \leftor P^c \\
P^\ell &::= a \leftand P^\true ~\mid~ \neg a \leftand P^\true \\
P^\true &::= \true ~\mid~ a \leftor P^\true \\
P^\false &::= \false ~\mid~ a \leftand P^\false,
\end{align*}
where $a \in A$. We refer to $P^*$-forms as $*$-terms, to $P^\ell$-forms as
$\ell$-terms, to $P^\true$-forms as $\true$-terms and to $P^\false$-forms as
$\false$-terms. A term of the form $P^\true \leftand P^*$ is referred to as a
$\true$-$*$-term.
\end{definition}

We immediately note that if it were not for the presence of $\true$ and
$\false$ we could define a much simpler normal form. In that case it would
suffice to `push in' or `push down' the negations, thus obtaining a Negation
Normal Form, as exists for PL. Naturally if our set $A$ of atoms is empty, the
truth value constants would be a normal form.

When considering the image of $\FE$ we note that some trees only have
$\true$-leaves, some only have $\false$-leaves and some have both
$\true$-leaves and $\false$-leaves. For any $\FEL$-term $P$, $\FE(P \leftor
\true)$ is a tree with only $\true$-leaves, as can easily be seen from the
definition of $\FE$. All terms $P$ such that $\FE(P)$ only has $\true$-leaves
are rewritten to $\true$-terms.  Similarly $\FE(P \leftand \false)$ is a tree
with only $\false$-leaves. All terms $P$ such that $\FE(P)$ only has
$\false$-leaves are rewritten to $\false$-terms. The simplest trees in the
image of $\FE$ that have both $\true$-leaves and $\false$-leaves are $\FE(a)$
for $a \in A$. Any (occurrence of an) atom that determines (in whole or in part)
the yield of a term, such as $a$ in this example, is referred to as a
determinative (occurrence of an) atom. This as opposed to a non-determinative
(occurrence of an) atom, such as the $a$ in $a \leftor \true$, which does not
determine (either in whole or in part) the yield of the term.  Note that a term
$P$ such that $\FE(P)$ contains both $\true$ and $\false$ must contain at least
one determinative atom.

Terms that contain at least one determinative atom will be rewritten to
$\true$-$*$-terms. In $\true$-$*$-terms we encode each determinative atom
together with the non-determinative atoms that occur between it and the next
determinative atom in the term (reading from left to right) as an $\ell$-term.
Observe that the first atom in an $\ell$-term is the (only) determinative atom
in that $\ell$-term and that determinative atoms only occur in $\ell$-terms.
Also observe that the yield of an $\ell$-term is the yield of its determinative
atom. This is intuitively convincing, because the remainder of the atoms in any
$\ell$-term are non-determinative and hence do not contribute to its yield. The
non-determinative atoms that may occur before the first determinative atom are
encoded as a $\true$-term. A $\true$-$*$-term is the conjunction of a
$\true$-term encoding such atoms and a $*$-term, which contains only
conjunctions and disjunctions of $\ell$-terms. We could also have encoded such
atoms as an $\false$-term and then taken the disjunction with a $*$-term to
obtain a term with the same semantics. We consider $\ell$-terms to be `basic'
in $*$-terms in the sense that they are the smallest grammatical unit that
influences the yield of the $*$-term.

In the following, $P^\true, P^\ell$, etc.~are used both to denote grammatical
categories and as variables for terms in those categories. The remainder of
this section is concerned with defining and proving correct the normalization
function $\nf: \FT \to \FNF$.  We will define $\nf$ recursively using the
functions
\begin{equation*}
\nf^n: \FNF \to \FNF \quad\text{and}\quad
\nf^c: \FNF \times \FNF \to \FNF.
\end{equation*}
The first of these will be used to rewrite negated $\FNF$-terms to $\FNF$-terms
and the second to rewrite the conjunction of two $\FNF$-terms to an
$\FNF$-term. By \eqref{ax:ft2} we have no need for a dedicated function that
rewrites the disjunction of two $\FNF$-terms to an $\FNF$-term.

We start by defining $\nf^n$. Analyzing the semantics of $\true$-terms and
$\false$-terms together with the definition of $\FE$ on negations, it becomes
clear that $\nf^n$ must turn $\true$-terms into $\false$-terms and vice versa.
We also remark that $\nf^n$ must preserve the left-associativity of the
$*$-terms in $\true$-$*$-terms, modulo the associativity within $\ell$-terms.
We define $\nf^n: \FNF \to \FNF$ as follows, using the auxiliary function
$\nf^n_1: P^* \to P^*$ to `push down' or `push in' the negation symbols when
negating a $\true$-$*$-term. We note that there is no ambiguity between the
different grammatical categories present in an $\FNF$-term, i.e., any
$\FNF$-term is in exactly one of the grammatical categories identified in
Definition \ref{def:fnf}.
\begin{align}
\nf^n(\true) &= \false
  \label{eq:nfn1} \\
\nf^n(a \leftor P^\true) &= a \leftand \nf^n(P^\true)
  \label{eq:nfn2} \\
\nf^n(\false) &= \true
  \label{eq:nfn3} \\
\nf^n(a \leftand P^\false) &= a \leftor \nf^n(P^\false).
  \label{eq:nfn4} \\
\nf^n(P^\true \leftand Q^*) &= P^\true \leftand \nf^n_1(Q^*)
  \label{eq:nfn5} \\ \displaybreak[0]
\nf^n_1(a \leftand P^\true) &= \neg a \leftand P^\true
  \label{eq:nfn6} \\
\nf^n_1(\neg a \leftand P^\true) &= a \leftand P^\true
  \label{eq:nfn7} \\
\nf^n_1(P^* \leftand Q^d) &= \nf^n_1(P^*) \leftor \nf^n_1(Q^d)
  \label{eq:nfn8} \\
\nf^n_1(P^* \leftor Q^c) &= \nf^n_1(P^*) \leftand \nf^n_1(Q^c)
  \label{eq:nfn9}
\end{align}

Now we turn to defining $\nf^c$. These definitions have a great deal of
inter-dependence so we first present the definition for $\nf^c$ when the first
argument is a $\true$-term. We see that the conjunction of a $\true$-term with
another term always yields a term of the same grammatical category as the
second conjunct.
\begin{align}
\nf^c(\true, P) &= P
  \label{eq:nfc1} \\
\nf^c(a \leftor P^\true, Q^\true) &= a \leftor \nf^c(P^\true, Q^\true)
  \label{eq:nfc2} \\
\nf^c(a \leftor P^\true, Q^\false) &= a \leftand \nf^c(P^\true, Q^\false)
  \label{eq:nfc3} \\
\nf^c(a \leftor P^\true, Q^\true \leftand R^*) &= \nf^c(a \leftor P^\true,
  Q^\true) \leftand R^*
  \label{eq:nfc4}
\end{align}

For defining $\nf^c$ where the first argument is an $\false$-term we make use
of \eqref{ax:ft6} when dealing with conjunctions of $\false$-terms with
$\true$-$*$-terms. The definition of $\nf^c$ for the arguments used in the
right hand side of \eqref{eq:nfc7} starts at \eqref{eq:nfc14}. We note that
despite the high level of inter-dependence in these definitions, this does not
create a circular definition. We also note that the conjunction of an
$\false$-term with another term is always itself an $\false$-term. 
\begin{align}
\nf^c(\false, P^\true) &= \nf^n(P^\true)
  \label{eq:nfc5} \\
\nf^c(\false, P^\false) &= P^\false
  \label{eq:nfc6} \\
\nf^c(\false, P^\true \leftand Q^*) &= \nf^c(P^\true \leftand Q^*, \false)
  \label{eq:nfc7} \\
\nf^c(a \leftand P^\false, Q) &= a \leftand \nf^c(P^\false, Q)
  \label{eq:nfc8}
\end{align}

The case where the first conjunct is a $\true$-$*$-term is the most
complicated. Therefore we first consider the case where the second conjunct is
a $\true$-term. In this case we must make the $\true$-term part of the last
(rightmost) $\ell$-term in the $\true$-$*$-term, so that the result will
again be a $\true$-$*$-term. For this `pushing in' of the second conjunct we
define an auxiliary function $\nf^c_1: P^* \times P^\true \to P^*$.
\begin{align}
\nf^c(P^\true \leftand Q^*, R^\true) &= P^\true \leftand \nf^c_1(Q^*, R^\true)
  \label{eq:nfc9} \\
\nf^c_1(a \leftand P^\true, Q^\true) &= a \leftand \nf^c(P^\true, Q^\true)
  \label{eq:nfc10} \\
\nf^c_1(\neg a \leftand P^\true, Q^\true) &= \neg a \leftand \nf^c(P^\true,
  Q^\true)
  \label{eq:nfc11} \\
\nf^c_1(P^* \leftand Q^d, R^\true) &= P^* \leftand \nf^c_1(Q^d, R^\true)
  \label{eq:nfc12} \\
\nf^c_1(P^* \leftor Q^c, R^\true) &= P^* \leftor \nf^c_1(Q^c, R^\true)
  \label{eq:nfc13}
\end{align}

When the second conjunct is an $\false$-term, the result will naturally be an
$\false$-term itself. So we need to convert the $\true$-$*$-term to an
$\false$-term. Using \eqref{ax:ft7} we reduce this problem to converting a
$*$-term to an $\false$-term, for which we use the auxiliary function
$\nf^c_2: P^* \times P^\false \to P^\false$.
\begin{align}
\nf^c(P^\true \leftand Q^*, R^\false) &= \nf^c(P^\true, \nf^c_2(Q^*, R^\false)) 
  \label{eq:nfc14} \\
\nf^c_2(a \leftand P^\true, R^\false) &= a \leftand \nf^c(P^\true, R^\false)
  \label{eq:nfc15} \\
\nf^c_2(\neg a \leftand P^\true, R^\false) &= a \leftand \nf^c(P^\true,
  R^\false)
  \label{eq:nfc16} \\
\nf^c_2(P^* \leftand Q^d, R^\false) &= \nf^c_2(P^*, \nf^c_2(Q^d, R^\false))
  \label{eq:nfc17} \\
\nf^c_2(P^* \leftor Q^c, R^\false) &= \nf^c_2(P^*, \nf^c_2(Q^c, R^\false))
  \label{eq:nfc18}
\end{align}

Finally we are left with conjunctions and disjunctions of two
$\true$-$*$-terms, thus completing the definition of $\nf^c$.  We use the
auxiliary function $\nf^c_3: P^* \times P^\true \leftand P^* \to P^*$ to ensure
that the result is a $\true$-$*$-term.
\begin{align}
\nf^c(P^\true \leftand Q^*, R^\true \leftand S^*) &= P^\true \leftand 
  \nf^c_3(Q^*, R^\true \leftand S^*)
  \label{eq:nfc19} \\
\nf^c_3(P^*, Q^\true \leftand R^\ell) &= \nf^c_1(P^*, Q^\true) \leftand R^\ell
  \label{eq:nfc20} \\
\nf^c_3(P^*, Q^\true \leftand (R^* \leftand S^d)) &= \nf^c_3(P^*, Q^\true
  \leftand R^*) \leftand S^d
  \label{eq:nfc21} \\
\nf^c_3(P^*, Q^\true \leftand (R^* \leftor S^c)) &= \nf^c_1(P^*, Q^\true)
  \leftand (R^* \leftor S^c)
  \label{eq:nfc22}
\end{align}

As promised, we now define the normalization function $\nf: \FT \to \FNF$
recursively, using $\nf^n$ and $\nf^c$, as follows.
\begin{align}
\nf(a) &= \true \leftand (a \leftand \true)
  \label{eq:nf1} \\
\nf(\true) &= \true
  \label{eq:nf2} \\
\nf(\false) &= \false
  \label{eq:nf3} \\
\nf(\neg P) &= \nf^n(\nf(P))
  \label{eq:nf4} \\
\nf(P \leftand Q) &= \nf^c(\nf(P), \nf(Q))
  \label{eq:nf5} \\
\nf(P \leftor Q) &= \nf^n(\nf^c(\nf^n(\nf(P)), \nf^n(\nf(Q))))
  \label{eq:nf6}
\end{align}

\begin{restatable}{theorem}{thmnfcorrect}
\label{thm:nf}
For any $P \in \FT$, $\nf(P)$ terminates, $\nf(P) \in \FNF$ and $\EqFFEL \vdash
\nf(P) = P$.
\end{restatable}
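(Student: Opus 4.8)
The statement bundles three claims about $\nf$ and its many auxiliary functions: termination, well-typedness (the output lands in $\FNF$), and derivable correctness ($\EqFFEL \vdash \nf(P) = P$). The natural strategy is a simultaneous induction that handles all three at once, because the recursive calls of $\nf$, $\nf^n$, $\nf^c$ and their subordinate helpers ($\nf^n_1$, $\nf^c_1, \nf^c_2, \nf^c_3$) are thoroughly intertwined. First I would prove the corresponding lemmas for the auxiliary functions, in dependency order, and only then assemble the result for $\nf$ itself. Concretely, the plan is:

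\begin{enumerate}[itemsep=3pt]
\item \textbf{Termination.} Exhibit a well-founded measure that strictly decreases on every recursive call. For $\nf$ the obvious measure is the structural size of the $\FT$-argument, since \eqref{eq:nf1}--\eqref{eq:nf6} only ever call $\nf$, $\nf^n$, $\nf^c$ on proper subterms. For the auxiliary functions the measure is (lexicographically) the sizes of their $\FNF$-arguments: each defining clause \eqref{eq:nfn1}--\eqref{eq:nfc22} either recurses on a strict subterm of one argument, or calls a ``lower'' function in the dependency order (e.g.\ $\nf^c_2$ calls $\nf^c$ on strictly smaller $\true$-terms, $\nf^c$ on a $\true$-$*$-term calls $\nf^c_1$/$\nf^c_2$/$\nf^c_3$ on its $*$-component). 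The one place that needs care is the mutual recursion between $\nf^c$ and $\nf^c_2$ around \eqref{eq:nfc7} and \eqref{eq:nfc14}: I would check explicitly that the $\true$-$*$-term fed back into $\nf^c$ via \eqref{eq:nfc7} has a strictly smaller second argument ($\false$), so no loop arises.
\item \textbf{Well-typedness.} By the same induction, verify that each clause produces a term in the grammatical category its signature promises. The paper has already flagged the crucial facts: every $\FNF$-term lies in exactly one category, and the signatures ($\nf^n: \FNF\to\FNF$, $\nf^n_1: P^*\to P^*$, $\nf^c_1: P^*\times P^\true\to P^*$, etc.) tell us what to prove. A representative step: in \eqref{eq:nfn8}, $\nf^n_1(P^*\leftand Q^d)=\nf^n_1(P^*)\leftor\nf^n_1(Q^d)$ needs $\nf^n_1(P^*)$ to be a $P^*$ and $\nf^n_1(Q^d)$ to be a $P^c$, so that the disjunction is a $P^d\subseteq P^*$; this is exactly the inductive hypothesis for the two smaller arguments plus the observation that $\nf^n_1$ swaps $P^c$ and $P^d$ (which itself must be strengthened into the induction hypothesis). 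Similarly $\nf^c_3$ must be shown to map into $P^*$ while preserving the left-associative ``spine'' of $*$-terms, matching \eqref{eq:nfc20}--\eqref{eq:nfc22}.
\item \textbf{Derivable correctness.} Again by the same induction, show $\EqFFEL \vdash \nf^n(P)=\neg P$, $\EqFFEL\vdash \nf^n_1(P^*)=\neg P^*$, $\EqFFEL\vdash \nf^c(P,Q)=P\leftand Q$, $\EqFFEL\vdash\nf^c_1(P^*,Q^\true)=P^*\leftand Q^\true$, $\EqFFEL\vdash\nf^c_2(P^*,Q^\false)=P^*\leftand Q^\false$, $\EqFFEL\vdash\nf^c_3(P^*,Q^\true\leftand R^*)=P^*\leftand(Q^\true\leftand R^*)$, and finally $\EqFFEL\vdash\nf(P)=P$. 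Each clause becomes one equational-logic calculation: e.g.\ \eqref{eq:nf1} is $\true\leftand(a\leftand\true)=a$ by \eqref{ax:ft4} and \eqref{ax:ft5}; \eqref{eq:nf6} is just unfolding the definition and using \eqref{ax:ft3} twice together with \eqref{ax:ft2} (De Morgan) and the induction hypothesis for $\nf^c$ and $\nf^n$. The substantive clauses are the ones that invoke the ``spine-rearranging'' axioms: \eqref{eq:nfc7} uses \eqref{ax:ft6} ($x\leftand\false=\false\leftand x$), \eqref{eq:nfc14}--\eqref{eq:nfc18} use \eqref{ax:ft7} and the duals, and \eqref{eq:nfn6}--\eqref{eq:nfn9} together with \eqref{eq:nfn5} repackage De Morgan across a $*$-term; \eqref{eq:nfc16} crucially uses \eqref{ax:ft8} (or rather Lemma~\ref{lem:feqs}\eqref{eq:a1}) to absorb the $\neg$ in front of a non-determinative atom. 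I would present a handful of these in full and declare the rest analogous.
\end{enumerate}

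\textbf{Main obstacle.} The real work is not any single calculation but \emph{setting up the induction correctly}: because the functions are mutually recursive with genuinely different signatures, the three properties have to be generalized into a single conjunction of statements about all seven functions, ordered so that the induction hypothesis is available for exactly the calls each clause makes. Pinning down the well-founded order that simultaneously justifies termination and licenses every appeal to the induction hypothesis — in particular taming the $\nf^c$/$\nf^c_2$ interplay at \eqref{eq:nfc7}/\eqref{eq:nfc14} and the $\nf^c_3$ recursion that peels off $S^d$ in \eqref{eq:nfc21} — is the delicate part. Once the measure and the strengthened induction statement are fixed, the per-clause verifications for well-typedness and derivability are routine (if lengthy) equational bookkeeping, of the same flavour as the proof of Lemma~\ref{lem:feqs} and Theorem~\ref{thm:felsnd} already given.
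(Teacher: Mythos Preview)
Your proposal is correct and matches the paper's approach: a sequence of lemmas for $\nf^n$, $\nf^c$ and their helpers, proved in dependency order, each bundling the grammatical-category claim with the derivable-equality claim, followed by a short structural induction for $\nf$ itself. The one ingredient you do not name but will need immediately is the auxiliary fact $\EqFFEL \vdash P^\true = P^\true \leftor \true$ and dually $\EqFFEL \vdash P^\false = P^\false \leftand \false$ (Lemma~\ref{lem:ptpf}), which the paper establishes first and invokes in almost every derivability calculation to bring subterms into the shape where \eqref{ax:ft9}, \eqref{ax:ft10}, and Lemma~\ref{lem:feqs} can fire; conversely, the paper hand-waves termination entirely (``would closely mimic'' the grammatical proofs), so your explicit well-founded measure is more careful than the original.
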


In Appendix \ref{sec:lslnf} we first prove a number of lemmas showing that the
definitions $\nf^n$ and $\nf^c$ are correct and use those to prove the theorem.
The reader might wonder why we have used a normalization function rather than a
term rewriting system to prove the correctness of $\FNF$. The main reason is
the author's lack of experience with term rewriting systems, although the fact
that using a function relieves us of the need to prove the confluence of the
induced rewriting system, thus simplifying the proof, is also a factor.

\section{Tree Structure}
\label{sec:feltrees}
In Section \ref{sec:felcpl} we will prove that $\EqFFEL$ axiomatizes $\FFEL$ by
showing that for $P \in \FNF$ we can invert $\FE(P)$. To do this we need to
prove several structural properties of the trees in the image of $\FE$. In the
definition of $\FE$ we can see how $\FE(P \leftand Q)$ is assembled from
$\FE(P)$ and $\FE(Q)$ and similarly for $\FE(P \leftor Q)$. To decompose
these trees we will introduce some notation. The trees in the image of $\FE$
are all finite binary trees over $A$ with leaves in $\{\true, \false\}$, i.e.,
$\FE[\FT] \subseteq \T$. We will now also consider the set $\Tone$ of binary
trees over $A$ with leaves in $\{\true, \false, \Box\}$, where the `$\Box$'
symbol is pronounced `box'. Similarly we consider $\Ttwo$, the set of binary
trees over $A$ with leaves in $\{\true, \false, \Box_1, \Box_2\}$. The $\Box$,
$\Box_1$ and $\Box_2$ will be used as placeholders when composing or
decomposing trees.  Replacement of the leaves of trees in $\Tone$ and $\Ttwo$
by trees (either in $\T$, $\Tone$ or $\Ttwo$) is defined analogous to
replacement for trees in $\T$, adopting the same notational conventions.

For example we have by definition of $\FE$ that $\FE(P \leftand Q)$ can be
decomposed as
\begin{equation*}
\FE(P)\ssub{\true}{\Box_1}{\false}{\Box_2}
\ssub{\Box_1}{\FE(Q)}{\Box_2}{\FE(Q)\sub{\true}{\false}},
\end{equation*}
where $\FE(P)\ssub{\true}{\Box_1}{\false}{\Box_2} \in \Ttwo$ and $\FE(Q)$ and
$\FE(Q)\sub{\true}{\false}$ are in $\T$.  We note that this only works because
the trees in the image of $\FE$, or more general, in $\T$, do not contain any
boxes. Similarly, as we discussed previously, $\FE(P \leftand \false) =
\FE(P)\sub{\true}{\false}$, which we can write as
$\FE(P)\sub{\true}{\Box}\sub{\Box}{\false}$. We start by analyzing the
$\FE$-image of $\ell$-terms.

\begin{lemma}[Structure of $\ell$-terms]
\label{lem:litstf}
There is no $\ell$-term $P$ such that $\FE(P)$ can be decomposed as
$X\sub{\Box}{Y}$ with $X \in \Tone$ and $Y \in \T$, where $X \neq \Box$, but
does contain $\Box$, and $Y$ contains occurrences of both $\true$ and
$\false$.
\end{lemma}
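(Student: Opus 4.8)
The plan is to first determine the exact shape of $\FE(P)$ when $P$ is an $\ell$-term, and then to read off from that shape that the forbidden decomposition cannot exist.

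First I would establish, by a trivial induction on the grammar of $\true$-terms, that $\FE(P^\true)$ is always a tree all of whose leaves are $\true$: the base case is $\FE(\true) = \true$, and in the step $\FE(a \leftor Q^\true) = \FE(Q^\true)\sub{\false}{\true} \tlef a \trig \FE(Q^\true)$ the substitution $\sub{\false}{\true}$ is vacuous by the induction hypothesis. Unfolding the definition of $\FE$ on $\ell$-terms then gives $\FE(a \leftand P^\true) = \FE(P^\true) \tlef a \trig \FE(P^\true)\sub{\true}{\false}$ and $\FE(\neg a \leftand P^\true) = \FE(P^\true)\sub{\true}{\false} \tlef a \trig \FE(P^\true)$. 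In both cases $\FE(P)$ has an atom at its root, one child subtree with only $\true$-leaves, and the other child subtree (the same tree with $\true$ and $\false$ interchanged) with only $\false$-leaves. The structural fact I would extract is: \emph{every proper subtree of $\FE(P)$ has all of its leaves equal}, since any proper subtree lies entirely inside one of the two child subtrees of the root.

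Next, I would argue by contradiction. Suppose $\FE(P) = X\sub{\Box}{Y}$ with $X \in \Tone$, $Y \in \T$, $X \neq \Box$, $X$ containing an occurrence of $\Box$, and $Y$ containing both $\true$ and $\false$. Since $X \neq \Box$ yet $X$ contains $\Box$, $X$ cannot be a single leaf, so its root is an atom and every $\Box$-leaf of $X$ sits at a non-root position; fix one such position $p$. By the recursive clauses defining leaf replacement, the subtree of $X\sub{\Box}{Y}$ rooted at $p$ is exactly $Y$ (the replacement substitutes $Y$ verbatim, and $Y$ contains no $\Box$), so $Y$ occurs as a proper subtree of $\FE(P)$. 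By the structural fact above, $Y$ then has all of its leaves equal, contradicting the assumption that it contains both $\true$ and $\false$.

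I do not expect a serious obstacle; the only points that need a little care are the degenerate base case $P^\true = \true$ — where the two child subtrees of $\FE(P)$ collapse to the single leaves $\true$ and $\false$, and the argument still goes through because a leaf has no proper subtree at all — and making the informal phrase ``$Y$ occurs as a subtree'' precise, which follows directly from the definition of replacement on trees in $\Tone$.
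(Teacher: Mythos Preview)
Your proof is correct and takes essentially the same approach as the paper: both establish that one branch from the root of $\FE(P)$ has only $\true$-leaves and the other only $\false$-leaves, and conclude that any $Y$ containing both truth values cannot sit at a proper subtree position. You are simply more explicit than the paper in unfolding the definition of $\FE$ on $\ell$-terms and in spelling out why a $\Box$-leaf at a non-root position of $X$ forces $Y$ to be a proper subtree.
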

\begin{proof}
Let $P$ be some $\ell$-term. When we analyze the grammar of $P$ we find that
one branch from the root of $\FE(P)$ will only contain $\true$ and not $\false$
and the other branch vice versa.  Hence if $\FE(P) = X\sub{\Box}{Y}$ and $Y$
contains occurrences of both $\true$ and $\false$, then $Y$ must contain the
root and hence $X = \Box$.
\end{proof}

By definition a $*$-term contains at least one $\ell$-term and hence for any
$*$-term $P$, $\FE(P)$ contains both $\true$ and $\false$. The following lemma
provides the $\FE$-image of the rightmost $\ell$-term in a $*$-term to witness
this fact.

\begin{lemma}[Determinativeness]
\label{lem:pert}
For all $*$-terms $P$, $\FE(P)$ can be decomposed as $X\sub{\Box}{Y}$ with $X
\in \Tone$ and $Y \in \T$ such that $X$ contains $\Box$ and $Y = \FE(Q)$ for
some $\ell$-term $Q$. Note that $X$ may be $\Box$. We will refer to $Y$ as the
witness for this lemma for $P$.
\end{lemma}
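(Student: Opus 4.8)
The plan is to prove the lemma by structural induction on the $*$-term $P$, unfolding its grammar into the three shapes $P = P^\ell$, $P = P^* \leftand P^d$, and $P = P^* \leftor P^c$; in every case the witness $Q$ will be the \emph{rightmost} $\ell$-term of $P$, i.e., the $\ell$-term obtained by repeatedly descending into the right operand of the top connective. Note that $P^d$ and $P^c$ are themselves $*$-terms, so the induction hypothesis applies to them. The base case $P = P^\ell$ is immediate: take $X = \Box$, $Y = \FE(P)$, and $Q = P$, so that $\FE(P) = \Box\sub{\Box}{Y}$ with $X = \Box$ (trivially) containing $\Box$.

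For the inductive step I would first treat $P = R \leftand S$ with $R$ a $*$-term and $S$ a $P^d$. By the induction hypothesis $\FE(S) = X'\sub{\Box}{Y}$ with $X' \in \Tone$ containing $\Box$, $Y = \FE(Q)$, and $Q$ the rightmost $\ell$-term of $S$ (hence of $P$). By the definition of $\FE$, $\FE(P) = \FE(R)\ssub{\true}{\FE(S)}{\false}{\FE(S)\sub{\true}{\false}}$. The crucial point is that the substituted copy of the witness $Y$ survives \emph{unchanged} only beneath the $\true$-leaves of $\FE(R)$: beneath the $\false$-leaves we plug in $\FE(S)\sub{\true}{\false}$, and a one-line induction on $X'$ gives $\FE(S)\sub{\true}{\false} = (X'\sub{\Box}{Y})\sub{\true}{\false} = X'\sub{\true}{\false}\sub{\Box}{Y\sub{\true}{\false}}$, which is a box-free tree $Z$. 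Now set $X := \FE(R)\ssub{\true}{X'}{\false}{Z}$. Since the $\true$-leaves of $\FE(R)$ are replaced by copies of $X' \in \Tone$ and the $\false$-leaves by the box-free tree $Z$, we get $X \in \Tone$, and $X\sub{\Box}{Y} = \FE(R)\ssub{\true}{X'\sub{\Box}{Y}}{\false}{Z} = \FE(R)\ssub{\true}{\FE(S)}{\false}{\FE(S)\sub{\true}{\false}} = \FE(P)$, because $Z$ contains no $\Box$ and $\FE(R)$ is box-free. Finally $X$ contains $\Box$ because $X'$ does and $\FE(R)$ has at least one $\true$-leaf — the latter being exactly the observation recorded just before the lemma, that $\FE$ of any $*$-term contains both $\true$ and $\false$. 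The case $P = R \leftor S$ with $S$ a $P^c$ is entirely dual: $\FE(P) = \FE(R)\ssub{\true}{\FE(S)\sub{\false}{\true}}{\false}{\FE(S)}$, the intact copy of $Y$ now sits beneath the $\false$-leaves of $\FE(R)$, and one takes $X := \FE(R)\ssub{\true}{X'\sub{\false}{\true}\sub{\Box}{Y\sub{\false}{\true}}}{\false}{X'}$.

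The step I expect to require the most care is the leaf-replacement bookkeeping: one must notice that the $\sub{\true}{\false}$ applied to one branch of $\FE(R \leftand S)$ (respectively $\sub{\false}{\true}$ in the $\leftor$ case) destroys the witness on that side, so the placeholder $\Box$ can only be harvested from the other branch, which is precisely why the argument needs $\FE(R)$ to contain a leaf of the appropriate kind, and hence why $R$ being a $*$-term is used. The supporting commutation identity $(X\sub{\Box}{Y})\sub{\true}{\false} = X\sub{\true}{\false}\sub{\Box}{Y\sub{\true}{\false}}$, together with its $\true \leftrightarrow \false$ mirror — both valid because $\true$, $\false$, and $\Box$ are distinct leaf symbols — should be dispatched by a routine induction on $X$ in the style already used in this chapter.
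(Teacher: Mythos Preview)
Your proposal is correct and follows essentially the same route as the paper: structural induction on $*$-terms, applying the induction hypothesis to the right operand $S$ to obtain $\FE(S) = X'\sub{\Box}{Y}$, and setting the new template to $\FE(R)\ssub{\true}{X'}{\false}{\FE(S)\sub{\true}{\false}}$ in the $\leftand$ case (dually for $\leftor$). You are in fact more careful than the paper, which simply writes ``this gives our desired decomposition'' without explicitly checking that the new $X$ still contains $\Box$; your appeal to $\FE(R)$ having a $\true$-leaf is the right justification. One small simplification: the commutation identity $(X'\sub{\Box}{Y})\sub{\true}{\false} = X'\sub{\true}{\false}\sub{\Box}{Y\sub{\true}{\false}}$ is not needed, since $\FE(S)\sub{\true}{\false}$ is already box-free just because $\FE(S) \in \T$.
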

\begin{proof}
By induction on the complexity of $*$-terms $P$ modulo the complexity of
$\ell$-terms. In the base case $P$ is an $\ell$-term and $\FE(P) =
\Box\sub{\Box}{\FE(P)}$ is the desired decomposition by Lemma \ref{lem:litstf}.
For the induction we have to consider both $\FE(P \leftand Q)$ and $\FE(P
\leftor Q)$.

We treat only the case for $\FE(P \leftand Q)$, the case for $\FE(P \leftor Q)$
is analogous. Let $X\sub{\Box}{Y}$ be the decomposition for $\FE(Q)$ which we
have by induction hypothesis. Since by definition of $\FE$ on $\leftand$ we
have
\begin{equation*}
\FE(P \leftand Q) = \FE(P)\ssub{\true}{\FE(Q)}{\false}{\FE(Q)
\sub{\true}{\false}},
\end{equation*}
we also have
\begin{align*}
\FE(P \leftand Q) &= \FE(P)\ssub{\true}{X\sub{\Box}{Y}}{\false}{\FE(Q)
  \sub{\true}{\false}} \\
&= \FE(P)\ssub{\true}{X}{\false}{\FE(Q) \sub{\true}{\false}}\sub{\Box}{Y},
\end{align*}
where the second equality is due to the fact that the only boxes in
\begin{equation*}
\FE(P)\ssub{\true}{X}{\false}{\FE(Q) \sub{\true}{\false}}
\end{equation*}
are those occurring in $X$. This gives our desired decomposition.
\end{proof}

The following lemma illustrates another structural property of trees in the
image of $*$-terms under $\FE$, namely that the left branch of any
determinative atom in such a tree is different from its right branch.

\begin{lemma}[Non-decomposition]
\label{lem:nondectf}
There is no $*$-term $P$ such that $\FE(P)$ can be decomposed as
$X\sub{\Box}{Y}$ with $X \in \Tone$ and $Y \in \T$, where $X \neq \Box$ and $X$
contains $\Box$, but not $\true$ or $\false$.
\end{lemma}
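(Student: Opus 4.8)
The plan is to prove the contrapositive by induction on the complexity of $*$-terms modulo the complexity of $\ell$-terms, as in the proof of Lemma~\ref{lem:pert}. So assume $P$ is a $*$-term and $\FE(P) = X\sub{\Box}{Y}$ with $X \in \Tone$, $Y \in \T$, $X \neq \Box$, and $X$ containing $\Box$ but neither $\true$ nor $\false$ (i.e.\ every leaf of $X$ is $\Box$), and derive a contradiction. Two preliminary observations are used throughout. First, since $\FE(P)$ is a perfect binary tree of some depth $n$ and, as $X$ has no $\true$- or $\false$-leaves, every leaf of $\FE(P)$ lies in a copy of $Y$, the tree $X$ must itself be perfect of some depth $k$ with $1 \le k < n$ and all leaves $\Box$, and $Y$ must be perfect of depth $n - k$; hence every subtree of $\FE(P)$ rooted at depth $k$ equals $Y$. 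Second, $Y$ must contain both a $\true$-leaf and a $\false$-leaf, for otherwise $\FE(P) = X\sub{\Box}{Y}$ would contain only one of $\true,\false$, contradicting the fact that $\FE$ applied to a $*$-term contains both $\true$ and $\false$ (Lemma~\ref{lem:pert} together with Lemma~\ref{lem:litstf}).

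In the base case $P$ is an $\ell$-term, and by (the proof of) Lemma~\ref{lem:litstf} one branch of $\FE(P)$ from the root contains only $\true$ and the other only $\false$. Since $X \neq \Box$, the root of $X$ is an atom, so each branch of $\FE(P)$ is built from a branch of $X$ all of whose leaves are $\Box$, and therefore contains every leaf value occurring in $Y$. As $Y$ has both a $\true$- and a $\false$-leaf, both branches of $\FE(P)$ would contain $\true$ and $\false$ — a contradiction.

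For the inductive step $P$ is $P_1 \leftand P_2$ or $P_1 \leftor P_2$ with $P_1,P_2$ $*$-terms of strictly smaller complexity; the two cases are symmetric (note $\FE(P\leftor Q)=\FE(P)\ssub{\true}{\FE(Q)\sub{\false}{\true}}{\false}{\FE(Q)}$), so take $P = P_1 \leftand P_2$. Put $m = d(\FE(P_1))$, $A_2 = \FE(P_2)$, $B_2 = \FE(P_2)\sub{\true}{\false}$, so $\FE(P) = \FE(P_1)\ssub{\true}{A_2}{\false}{B_2}$, where $\FE(P_1)$ is perfect of depth $m \ge 1$ with leaves of both kinds, $B_2$ has only $\false$-leaves, and $A_2 \neq B_2$ (as $\FE(P_2)$ has a $\true$-leaf). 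Split on $k$. If $k = m$, the depth-$k$ subtrees of $\FE(P)$ are exactly the copies of $A_2$ (over the $\true$-leaves of $\FE(P_1)$) and of $B_2$ (over its $\false$-leaves); since both kinds of leaf occur, this forces $A_2 = Y = B_2$, impossible. If $k > m$, every depth-$k$ node of $\FE(P)$ lies inside a copy of $A_2$ or $B_2$; inside a copy of $A_2$ — which exists since $\FE(P_1)$ has a $\true$-leaf — all depth-$(k-m)$ subtrees equal $Y$, so $\FE(P_2) = W\sub{\Box}{Y}$ with $W$ perfect of depth $k - m \ge 1$ and all leaves $\Box$, a forbidden decomposition of the smaller $*$-term $P_2$, contradicting the induction hypothesis. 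If $k < m$, then for each node $v$ of $\FE(P_1)$ at depth $k$ the subtree of $\FE(P)$ at $v$, namely $T^{(v)}\ssub{\Box_1}{A_2}{\Box_2}{B_2}$ where $T = \FE(P_1)\ssub{\true}{\Box_1}{\false}{\Box_2} \in \Ttwo$, equals $Y$; since $A_2$ and $B_2$ have the same internal (atom) structure and $A_2 \neq B_2$, the subtree $T^{(v)}$ is uniquely recoverable from $Y$ (its atoms are the top $m-k$ levels of $Y$, and each depth-$(m-k)$ subtree of $Y$, being $A_2$ or $B_2$, determines whether the corresponding leaf is $\Box_1$ or $\Box_2$), hence is one fixed tree $Z$ for all such $v$. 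Therefore $T = W\sub{\Box}{Z}$ with $W$ perfect of depth $k \ge 1$ and all leaves $\Box$, and replacing $\Box_1$ by $\true$ and $\Box_2$ by $\false$ gives $\FE(P_1) = W\sub{\Box}{Z'}$ with $Z' \in \T$ — a forbidden decomposition of the smaller $*$-term $P_1$, again contradicting the induction hypothesis.

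I expect the $k < m$ case to be the main obstacle: one must recognise that the hypothesis $\FE(P) = X\sub{\Box}{Y}$ imposes a periodic, self-similar structure on $\FE(P)$ even when the cut depth $k$ falls strictly inside the ``$\FE(P_1)$ layer'', and that the inequality $A_2 \neq B_2$ is exactly what lets this periodicity be lifted back through the leaf replacement that builds $\FE(P)$ from $\FE(P_1)$, producing a decomposition of $\FE(P_1)$ of the forbidden shape. The perfectness of all trees in the image of $\FE$ is what makes ``depth $k$'' and ``the depth-$k$ subtrees'' well behaved, and it is used at essentially every step.
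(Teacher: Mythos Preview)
Your proof is correct; both you and the paper proceed by induction on $*$-terms modulo $\ell$-terms and derive a contradiction from a putative decomposition, but the inductive step is organized quite differently.

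You exploit the perfectness of trees in the image of $\FE$ head-on: since $X$ has only $\Box$-leaves, it must itself be perfect of some depth $k\ge 1$, and you do a three-way case split on $k$ versus $m=d(\FE(P_1))$. The case $k>m$ reduces to a forbidden decomposition of $\FE(P_2)$, and the case $k<m$ — where you carefully argue that $T^{(v)}$ is uniquely recoverable from $Y$ because $A_2\neq B_2$ — reduces to a forbidden decomposition of $\FE(P_1)$. The paper, by contrast, does not make the depth parameter explicit. Instead it invokes the witness $R$ from Lemma~\ref{lem:pert} (the image of the rightmost $\ell$-term in $P$), observes that the corresponding subtree of $\FE(P\leftand Q)$ has one branch containing only $\false$-leaves and the other containing copies of $\FE(Q)$, and uses Lemma~\ref{lem:litstf} to conclude that $Y$ must strictly contain $\FE(Q)$ and $\FE(Q)\sub{\true}{\false}$. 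From this the paper passes directly to a decomposition $\FE(P)=X\sub{\Box}{Z}$, always reducing to the \emph{first} conjunct. What the paper leaves implicit (the injectivity of $T\mapsto T\ssub{\true}{A_2}{\false}{B_2}$ when $A_2\neq B_2$, and the exclusion of the cases you label $k\ge m$) you spell out explicitly. Your argument is more elementary and self-contained; the paper's is shorter but leans harder on Lemmas~\ref{lem:litstf} and~\ref{lem:pert} and asks more of the reader.
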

\begin{proof}
By induction on $P$ modulo the complexity of $\ell$-terms. The base case covers
$\ell$-terms and follows immediately from Lemma \ref{lem:pert} ($\FE(P)$
contains occurrences of both $\true$ and $\false$) and Lemma \ref{lem:litstf}
(no non-trivial decomposition exists that contains both). For the induction
we assume that the lemma holds for all $*$-terms with lesser complexity than $P
\leftand Q$ and $P \leftor Q$.

We start with the case for $\FE(P \leftand Q)$. Suppose for contradiction that
$\FE(P \leftand Q) = X\sub{\Box}{Y}$ with $X \neq \Box$ and $X$ not containing
any occurrences of $\true$ or $\false$. Let $R$ be a witness of Lemma
\ref{lem:pert} for $P$. Now note that $\FE(P \leftand Q)$ has a subtree
\begin{equation*}
R\ssub{\true}{\FE(Q)}{\false}{\FE(Q)\sub{\true}{\false}}.
\end{equation*}
Because $Y$ must contain both the occurrences of $\false$ in the one branch
from the root of this subtree as well as the occurrences of $\FE(Q)$ in the
other (because they contain $\true$ and $\false$), Lemma \ref{lem:litstf}
implies that $Y$ must (strictly) contain $\FE(Q)$ and
$\FE(Q)\sub{\true}{\false}$. Hence there is a $Z \in \T$ such that $\FE(P) =
X\sub{\Box}{Z}$, which violates the induction hypothesis. The case for $\FE(P
\leftor Q)$ proceeds analogously.
\end{proof}

We now arrive at two crucial definitions for our completeness proof. When
considering $*$-terms we already know that $\FE(P \leftand Q)$ can be
decomposed as
\begin{equation*}
\FE(P)\ssub{\true}{\Box_1}{\false}{\Box_2}
\ssub{\Box_1}{\FE(Q)}{\Box_2}{\FE(Q)\sub{\true}{\false}}.
\end{equation*}
Our goal now is to give a definition for a type of decomposition so that this
is the only such decomposition for $\FE(P \leftand Q)$. We also ensure that
$\FE(P \leftor Q)$ does not have a decomposition of that type, so that we can
distinguish $\FE(P \leftand Q)$ from $\FE(P \leftor Q)$.  Similarly, we define
another type of decomposition so that $\FE(P \leftor Q)$ can only be decomposed
as
\begin{equation*}
\FE(P)\ssub{\true}{\Box_1}{\false}{\Box_2}
\ssub{\Box_1}{\FE(Q)\sub{\false}{\true}}{\Box_2}{\FE(Q)}
\end{equation*}
and that $\FE(P \leftand Q)$ does not have a decomposition of that type.

\begin{definition}
The pair $(Y, Z) \in \Ttwo \times \T$ is a \textbf{candidate conjunction
decomposition (ccd)} of $X \in \T$, if 
\begin{itemize}
\item $X = Y\ssub{\Box_1}{Z}{\Box_2}{Z\sub{\true}{\false}}$,
\item $Y$ contains both $\Box_1$ and $\Box_2$,
\item $Y$ contains neither $\true$ nor $\false$, and
\item $Z$ contains both $\true$ and $\false$.
\end{itemize}
Similarly, $(Y, Z)$ is a \textbf{candidate disjunction decomposition (cdd)} of
$X$, if
\begin{itemize}
\item $X = Y\ssub{\Box_1}{Z\sub{\false}{\true}}{\Box_2}{Z}$,
\item $Y$ contains both $\Box_1$ and $\Box_2$,
\item $Y$ contains neither $\true$ nor $\false$, and
\item $Z$ contains both $\true$ and $\false$.
\end{itemize}
\end{definition}

The ccd and cdd are not necessarily the decompositions we are looking for,
because, for example, $\FE((P \leftand Q) \leftand R)$ has a ccd
$(\FE(P)\ssub{\true}{\Box_1}{\false}{\Box_2}, \FE(Q \leftand R))$, whereas the
decomposition we need is $(\FE(P \leftand
Q)\ssub{\true}{\Box_1}{\false}{\Box_2}, \FE(R))$. Therefore we refine these
definitions to obtain the decompositions we seek.

\begin{definition}
The pair $(Y, Z) \in \Ttwo \times \T$ is a \textbf{conjunction decomposition
(cd)} of $X \in \T$, if it is a ccd of $X$ and there is no other ccd $(Y', Z')$
of $X$ where the depth of $Z'$ is smaller than that of $Z$.  Similarly, $(Y,
Z)$ is a \textbf{disjunction decomposition (dd)} of $X$, if it is a cdd of $X$
and there is no other cdd $(Y', Z')$ of $X$ where the depth of $Z'$ is smaller
than that of $Z$.
\end{definition}

\begin{theorem}
\label{thm:cddd}
For any $*$-term $P \leftand Q$, i.e., with $P \in P^*$ and $Q \in P^d$, $\FE(P
\leftand Q)$ has the (unique) cd
\begin{equation*}
(\FE(P)\ssub{\true}{\Box_1}{\false}{\Box_2}, \FE(Q))
\end{equation*}
and no dd. For any $*$-term $P \leftor Q$, i.e., with $P \in P^*$ and $Q \in
P^c$, $\FE(P \leftor Q)$ has no cd and its (unique) dd is
\begin{equation*}
(\FE(P)\ssub{\true}{\Box_1}{\false}{\Box_2}, \FE(Q)).
\end{equation*}
\end{theorem}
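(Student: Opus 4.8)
The statement has the familiar two‑part shape, and I would prove the clause about $\FE(P\leftand Q)$ in full, obtaining the clause about $\FE(P\leftor Q)$ by the evident symmetry that interchanges $\true$ with $\false$ (hence the notions of ccd and cdd, the $\FE$-clauses for $\leftand$ and $\leftor$, and the categories $P^c$ and $P^d$ — $\nf^n_1$ realises this interchange at the term level, cf.\ \eqref{eq:nfn8}--\eqref{eq:nfn9}) and leaves Lemmas \ref{lem:litstf}, \ref{lem:pert}, \ref{lem:nondectf} fixed. Two facts I would use constantly: the trees in sight are perfect, so replacement merely stacks trees and depths add; and hence, in any ccd $(Y,Z)$ of a tree $X$, both $Y$ and $Z$ are perfect with $d(Y)+d(Z)=d(X)$, so a cd is a ccd with $d(Z)$ as small as possible, and the proposed candidate has $d(Z)=d(\FE(Q))$, i.e.\ $d(Y)=d(\FE(P))$.

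The engine of the argument is a claim $(\star)$ about $P^d$-terms: \emph{for every $Q\in P^d$ there is no $Y^\ast$ with all leaves among $\{\Box_1,\Box_2\}$ and $Y^\ast\notin\{\Box_1,\Box_2\}$, and no $Z^\ast\in\T$ containing both $\true$ and $\false$, such that $\FE(Q)=Y^\ast\ssub{\Box_1}{Z^\ast}{\Box_2}{Z^\ast\sub{\true}{\false}}$.} Granting $(\star)$, the $\leftand$-clause follows readily. That $(\FE(P)\ssub{\true}{\Box_1}{\false}{\Box_2},\FE(Q))$ is a ccd is immediate: reading the $\FE$-clause for $\leftand$ as replacement applied to $\FE(P)\ssub{\true}{\Box_1}{\false}{\Box_2}$ (valid since $\FE$-images contain no boxes) gives the first condition, and Lemma \ref{lem:pert} gives that $\FE(P)$ and $\FE(Q)$ each contain both $\true$ and $\false$, which is the rest. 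For uniqueness of the cd, suppose a ccd $(Y',Z')$ has $d(Z')<d(\FE(Q))$, so $d(Y')>d(\FE(P))$; restricting to the subtree of $\FE(P\leftand Q)$ below a $\true$-leaf of $\FE(P)$ (one exists by Lemma \ref{lem:pert}), which is $\FE(Q)$, produces a nontrivial $Y_0$ with $\FE(Q)=Y_0\ssub{\Box_1}{Z'}{\Box_2}{Z'\sub{\true}{\false}}$, contradicting $(\star)$; hence $d(Z')\ge d(\FE(Q))$ for every ccd, and when $d(Z')=d(\FE(Q))$ a level-by-level comparison of $Y'$ with $\FE(P\leftand Q)$ forces $\Box_1$ below each $\true$-leaf of $\FE(P)$ and $\Box_2$ below each $\false$-leaf, so $(Y',Z')$ is exactly the candidate. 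Finally $\FE(P\leftand Q)$ has no dd: if $(Y,Z)$ were a cdd, the subtree below a $\false$-leaf of $\FE(P)$ is $\FE(Q)\sub{\true}{\false}$, all of whose leaves are $\false$ and of positive depth, so no leaf of $Y$ can sit at or below that $\false$-leaf (it would be replaced by a copy of $Z$ or $Z\sub{\false}{\true}$, each containing a $\true$), giving $d(Y)<d(\FE(P))$; but then a $\Box_1$-leaf of $Y$ (a cdd has one) carries below it the tree $Z\sub{\false}{\true}$, all of whose leaves are $\true$, which is rooted at depth $d(Y)<d(\FE(P))$ and has depth $d(\FE(P\leftand Q))-d(Y)>d(\FE(Q))$, so its frontier at depth $d(\FE(P))$ meets roots of copies of $\FE(Q)$ or $\FE(Q)\sub{\true}{\false}$, each of which contributes a $\false$-leaf below — contradiction.

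It remains to establish $(\star)$, and it is here that $Q\in P^d$, rather than merely $Q\in P^*$, does the work: $(\star)$ genuinely fails for typical $P^c$-terms, which is precisely why a $P^c$ right operand would admit a shorter ccd. I would argue by the $P^d$-grammar. If $Q$ is an $\ell$-term, substitute the all-$\false$ tree $Z^\ast\sub{\true}{\false}$ for every $\Box_2$ in $Y^\ast$; this turns the putative decomposition into $\FE(Q)=W\sub{\Box}{Z^\ast}$ with $W\in\Tone$, $W\neq\Box$, $W$ still containing $\Box$ (a $\Box_1$ survives, since $\FE(Q)$ has a $\true$-leaf and $Z^\ast\sub{\true}{\false}$ has none), and $Z^\ast$ two-valued — contradicting Lemma \ref{lem:litstf}. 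If $Q=Q_1\leftor Q_2$ with $Q_1,Q_2\in P^*$: when $d(Y^\ast)\le d(\FE(Q_1))$, every depth-$d(Y^\ast)$ subtree of $\FE(Q)$ contains a $\true$-leaf — its leaves lie in copies of $\FE(Q_2)\sub{\false}{\true}$ (all $\true$) or of $\FE(Q_2)$ (two-valued, by Lemma \ref{lem:pert}) — so no such subtree is all-$\false$, whence $Y^\ast$ uses no $\Box_2$ and $\FE(Q)=Y^\ast\sub{\Box_1}{Z^\ast}$ is a box-only decomposition forbidden by Lemma \ref{lem:nondectf}; when $d(Y^\ast)>d(\FE(Q_1))$, restrict below a $\true$-leaf of $\FE(Q_1)$ (one exists by Lemma \ref{lem:pert}), where the subtree of $\FE(Q)$ is $\FE(Q_2)\sub{\false}{\true}$, all of whose leaves are $\true$; the nontrivial restriction of $Y^\ast$ there has a leaf, replaced by a copy of $Z^\ast$ or $Z^\ast\sub{\true}{\false}$, each containing a $\false$-leaf — contradiction. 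The one genuinely delicate point is handling contexts that mix $\Box_1$ and $\Box_2$ (the $\ell$-term case, via the ``merge the $\Box_2$'s'' reduction to Lemma \ref{lem:litstf}; and the first subcase above, via the absence of any all-$\false$ slot of positive depth); the remaining work is depth-and-perfectness bookkeeping, which I expect to be routine once set up carefully.
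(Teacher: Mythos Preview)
Your proof is correct. It shares the paper's overall strategy --- verify the candidate is a ccd, rule out any ccd with smaller second component by exhibiting a forbidden decomposition of $\FE(Q)$, and exclude cdds by a leaf-parity argument --- but differs in one structural respect: you isolate the key obstruction as the claim $(\star)$ and prove it by a case split on the $P^d$-grammar ($\ell$-term versus disjunction $Q_1\leftor Q_2$), whereas the paper applies Lemma~\ref{lem:nondectf} directly with no such split, writing $\FE(Q)=U\sub{\Box}{Z}$ and then asserting $Y=\FE(P)\ssub{\true}{U\sub{\Box}{\Box_1}}{\false}{U\sub{\Box}{\Box_2}}$ inherits the $\true$ or $\false$ forced on $U$. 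Your route is longer but more transparent at precisely the point you flag as delicate --- when the restriction of $Y$ below a $\true$-leaf of $\FE(P)$ carries both $\Box_1$ and $\Box_2$, the paper's displayed form of $Y$ is not obviously the given one (any $\Box_2$ would introduce $\false$-leaves into $U$ via $Z\sub{\true}{\false}$ rather than via Lemma~\ref{lem:nondectf}), and your $(\star)$ sidesteps this by working with the two-box context directly and using the $P^d$-structure of $Q$ essentially. Your systematic reduction of ``contains / is contained in'' to depth comparisons via perfectness is likewise a clarification of what the paper does more informally; the paper's version is terser but leans on the reader to supply these details.
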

\begin{proof}
We first treat the case for $P \leftand Q$ and start with cd. Note that $\FE(P
\leftand Q)$ has a ccd $(\FE(P)\ssub{\true}{\Box_1}{\false}{\Box_2}, \FE(Q))$
by definition of $\FE$ (for the first condition) and by Lemma~\ref{lem:pert}
(for the fourth condition). It is immediate that it satisfies the second and
third conditions. It also follows that for any ccd $(Y, Z)$ either $Z$ contains
or is contained in $\FE(Q)$, for suppose otherwise, then $Y$ will contain an
occurrence of $\true$ or of $\false$, namely those we know by
Lemma~\ref{lem:pert} that $\FE(Q)$ has. Therefore it suffices to show that
there is no ccd $(Y, Z)$ where $Z$ is strictly contained in $\FE(Q)$. Suppose
for contradiction that $(Y, Z)$ is such a ccd. If $Z$ is strictly contained in
$\FE(Q)$ we can decompose $\FE(Q)$ as $\FE(Q) = U\sub{\Box}{Z}$ for some $U \in
\Tone$ that contains but is not equal to $\Box$. By Lemma \ref{lem:nondectf}
this implies that $U$ contains $\true$ or $\false$. But then so does $Y$,
because
\begin{equation*}
Y = \FE(P)\ssub{\true}{U\sub{\Box}{\Box_1}}{\false}{U\sub{\Box}{\Box_2}},
\end{equation*}
and so $(Y, Z)$ is not a ccd for $\FE(P \leftand Q)$. Therefore
$(\FE(P)\ssub{\true}{\Box_1}{\false}{\Box_2}, \FE(Q))$ is the \emph{unique} cd
for $\FE(P \leftand Q)$.

Now for the dd. It suffices to show that there is no cdd for $\FE(P \leftand
Q)$. Suppose for contradiction that $(Y, Z)$ is a cdd for $\FE(P \leftand Q)$.
We note that $Z$ cannot be contained in $\FE(Q)$, for then by Lemma
\ref{lem:nondectf} $Y$ would contain $\true$ or $\false$. So $Z$ (strictly)
contains $\FE(Q)$. But then because
\begin{equation*}
Y\ssub{\Box_1}{Z\sub{\false}{\true}}{\Box_2}{Z} = \FE(P \leftand Q),
\end{equation*}
we would have by Lemma \ref{lem:pert} that $\FE(P \leftand Q)$ does not contain
an occurrence of $\FE(Q)\sub{\true}{\false}$. But the cd of $\FE(P \leftand Q)$
tells us that it does, contradiction! Therefore there is no cdd, and hence no
dd, for $\FE(P \leftand Q)$. The case for $\FE(P \leftor Q)$ proceeds
analogously.
\end{proof}

At this point we have the tools necessary to invert $\FE$ on $*$-terms, at
least down to the level of $\ell$-terms. We note that we can easily detect if a
tree in the image of $\FE$ is in the image of $P^\ell$, because all leaves to
the left of the root are one truth value, while all the leaves to the right are
the other. To invert $\FE$ on $\true$-$*$-terms we still need to be able to
reconstruct $\FE(P^\true)$ and $\FE(Q^*)$ from $\FE(P^\true \leftand Q^*)$. To
this end we define a $\true$-$*$-decomposition.

\begin{definition}
The pair $(Y, Z) \in \Tone \times \T$ is a \textbf{$\true$-$*$-decomposition
(tsd)} of $X \in \T$, if $X = Y\sub{\Box}{Z}$, $Y$ does not contain $\true$ or
$\false$ and there is no decomposition $(U, V) \in \Tone \times \T$ of $Z$ such
that
\begin{itemize}
\item $Z = U\sub{\Box}{V}$,
\item $U$ contains $\Box$,
\item $U \neq \Box$, and
\item $U$ contains neither $\true$ nor $\false$.
\end{itemize}
\end{definition}

\begin{theorem}
\label{thm:tsd}
For any $\true$-term $P$ and $*$-term $Q$ the (unique) tsd of $\FE(P \leftand
Q)$ is 
\begin{equation*}
(\FE(P)\sub{\true}{\Box}, \FE(Q)).
\end{equation*}
\end{theorem}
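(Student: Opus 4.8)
The plan is to show, for a $\true$-term $P$ and a $*$-term $Q$, that the pair $(\FE(P)\sub{\true}{\Box}, \FE(Q))$ satisfies the definition of a tsd and that it is the only such pair. First I would verify that it is a valid decomposition: by definition of $\FE$ on $\leftand$ together with the fact that $\FE(P)$ (being the image of a $\true$-term) contains only $\true$-leaves, we get $\FE(P \leftand Q) = \FE(P)\ssub{\true}{\FE(Q)}{\false}{\FE(Q)\sub{\true}{\false}} = \FE(P)\sub{\true}{\FE(Q)}$, which we rewrite as $\FE(P)\sub{\true}{\Box}\sub{\Box}{\FE(Q)}$. Here $\FE(P)\sub{\true}{\Box} \in \Tone$ contains no $\true$ (we replaced them all by $\Box$) and no $\false$ (since $\FE(P)$, being in the image of a $\true$-term, has no $\false$-leaf — this is the observation already made in Section~\ref{sec:felnf} that $\FE(P \leftor \true)$, and more generally $\FE$ of any $\true$-term, has only $\true$-leaves). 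So the first two conditions of a tsd hold with $Y = \FE(P)\sub{\true}{\Box}$ and $Z = \FE(Q)$.

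Next I would check the third, ``minimality'' condition: that $Z = \FE(Q)$ admits no decomposition $(U,V) \in \Tone \times \T$ with $Z = U\sub{\Box}{V}$, $U$ containing $\Box$, $U \neq \Box$, and $U$ containing neither $\true$ nor $\false$. This is exactly what Lemma~\ref{lem:nondectf} (Non-decomposition) rules out for $*$-terms, so the condition is immediate.

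For uniqueness, suppose $(Y', Z')$ is another tsd of $\FE(P \leftand Q)$. I would argue that $Z'$ must coincide with $\FE(Q)$. Since $Y'$ contains neither $\true$ nor $\false$ while $\FE(P \leftand Q) = Y'\sub{\Box}{Z'}$ and $\FE(P \leftand Q)$ certainly contains both truth-value leaves (as $Q$ is a $*$-term, Lemma~\ref{lem:pert} gives that $\FE(Q)$, hence $\FE(P\leftand Q)$, contains both $\true$ and $\false$), every leaf of $\FE(P\leftand Q)$ lies inside an occurrence of $Z'$. Comparing with the decomposition $\FE(P)\sub{\true}{\Box}\sub{\Box}{\FE(Q)}$: each $\Box$-slot of $\FE(P)\sub{\true}{\Box}$ is filled by a copy of $\FE(Q)$, and the nodes of $\FE(P)$ above these slots are non-determinative atoms coming from the $\true$-term $P$. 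One shows $Z'$ cannot strictly contain $\FE(Q)$ — by the structure of $\FE(P)$, reaching above a $\FE(Q)$-slot would force $Y'$ to contain part of that slot's sibling copy of $\FE(Q)$, and hence a $\true$ or $\false$, contradiction; conversely $Z'$ cannot be strictly contained in $\FE(Q)$ by the minimality clause of the tsd definition together with Lemma~\ref{lem:nondectf}. Hence $Z' = \FE(Q)$, and then $Y' = \FE(P)\sub{\true}{\Box}$ is forced since replacement is cancellative when the substituted tree is box-free and the ambient tree determines the ``shape'' above each occurrence.

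The main obstacle I anticipate is the uniqueness argument, specifically pinning down that $Z'$ is neither strictly larger nor strictly smaller than $\FE(Q)$. The ``smaller'' direction is handled cleanly by Lemma~\ref{lem:nondectf}. The ``larger'' direction requires a careful look at how $\FE(P)$ for a $\true$-term $P$ is built — it is a ``left-leaning caterpillar'' of non-determinative atoms whose right branches are pure-$\true$ subtrees — so that any attempt to place the top of $Z'$ strictly above some copy of $\FE(Q)$ engulfs a pure-$\true$ sibling subtree and thereby puts a $\true$ into $Y'$. Making this precise will likely need a small auxiliary induction on the structure of $\true$-terms, analogous in spirit to Lemma~\ref{lem:litstf}.
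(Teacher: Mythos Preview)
Your overall structure matches the paper's, and your treatment of the ``smaller'' case (\,$Z'$ strictly contained in $\FE(Q)$\,) via Lemma~\ref{lem:nondectf} is exactly right. The gap is in the ``larger'' case.

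Your proposed mechanism there --- that if the root of $Z'$ sits strictly above a $\FE(Q)$-slot then $Y'$ must pick up a $\true$ or $\false$ from a sibling copy --- does not work in the $\FEL$ setting, and your ``left-leaning caterpillar'' picture of $\FE(P)$ for a $\true$-term $P$ is the $\SCL$ picture, not the $\FEL$ one. In $\FEL$, for a $\true$-term $a \leftor P'$ one has $\FE(a \leftor P') = \FE(P') \tlef a \trig \FE(P')$: the two branches are \emph{identical}, and by induction this holds at every internal node of $\FE(P)$. Consequently, if $Z'$ is rooted at some atom coming from $P$, then $Z'$ swallows \emph{both} sibling copies symmetrically, and $Y'$ is simply the top $k$ levels of $\FE(P)$ with $\Box$'s at level $k$ --- containing no $\true$ or $\false$ at all. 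So no contradiction arises from $Y'$.

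The correct contradiction, and the one the paper uses, comes from the non-decomposability clause on $Z'$ in the tsd definition itself: if $Z'$ strictly contains $\FE(Q)$ then its root is an atom from $P$ with equal left and right branches, so $Z' = W \tlef a \trig W = (\Box \tlef a \trig \Box)\sub{\Box}{W}$ with $U = \Box \tlef a \trig \Box \neq \Box$ containing neither $\true$ nor $\false$. That exhibits exactly the forbidden decomposition of $Z'$, so $(Y',Z')$ was not a tsd. Replace your sibling argument with this one-line observation and the proof goes through; no auxiliary induction on $\true$-terms is needed.
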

\begin{proof}
First we observe that $(\FE(P)\sub{\true}{\Box}, \FE(Q))$ is a tsd because by
definition of $\FE$ on $\leftand$ we have $\FE(P)\sub{\true}{\FE(Q)} = \FE(P
\leftand Q)$ and $\FE(Q)$ is non-decomposable by Lemma \ref{lem:nondectf}.

Suppose for contradiction that there is another tsd $(Y, Z)$ of $\FE(P
\leftand Q)$. Now $Z$ must contain or be contained in $\FE(Q)$ for otherwise
$Y$ would contain $\true$ or $\false$, i.e., the ones we know $\FE(Q)$ has by
Lemma \ref{lem:pert}.

If $Z$ is strictly contained in $\FE(Q)$, then $\FE(Q) = U\sub{\Box}{Z}$ for
some $U \in \Tone$ with $U \neq \Box$ and $U$ not containing $\true$ or
$\false$ (because then $Y$ would too).  But this violates Lemma
\ref{lem:nondectf}, which states that no such decomposition exists. If $Z$
strictly contains $\FE(Q)$, then $Z$ contains at least one atom from $P$.  But
the left branch of any atom in $\FE(P)$ is equal to its right branch and hence
$Z$ is decomposable. Therefore $(\FE(P)\sub{\true}{\Box}, \FE(Q))$ is the
\emph{unique} tsd of $\FE(P \leftand Q)$.
\end{proof}

\section{Completeness}
\label{sec:felcpl}
With the two theorems from the previous section, we can prove completeness for
$\FFEL$. We define three auxiliary functions to aid in our definition of the
inverse of $\FE$ on $\FNF$. Let $\cd : \T \to \Ttwo \times \T$ be the function
that returns the conjunction decomposition of its argument, $\dd$ of the same
type its disjunction decomposition and $\tsd: \T \to \Tone \times \T$ its
$\true$-$*$-decomposition. Naturally, these functions are undefined when their
argument does not have a decomposition of the specified type. Each of these
functions returns a pair and we will use $\cd_1$ ($\dd_1$, $\tsd_1$) to denote
the first element of this pair and $\cd_2$ ($\dd_2$, $\tsd_2$) to denote the
second element.

We define $\inv: \T \to \FT$ using the functions $\inv^\true: \T \to \FT$ for
inverting trees in the image of $\true$-terms and $\inv^\false$, $\inv^\ell$
and $\inv^*$ of the same type for inverting trees in the image of
$\false$-terms, $\ell$-terms and $*$-terms, respectively. These functions are
defined as follows.
\begin{align}
\inv^\true(X) &=
  \begin{cases}
    \true
      &\textrm{if $X = \true$} \\
    a \leftor \inv^\true(Y)
      &\textrm{if $X = Y \tlef a \trig Z$}
  \end{cases} \\
\intertext{We note that we might as well have used the right branch from the
root in the recursive case. We chose the left branch here to more closely
mirror the definition of the corresponding function for $\FSCLT$, defined in
Chapter \ref{chap:scl}.}
\inv^\false(X) &=
  \begin{cases}
    \false
      &\textrm{if $X = \false$} \\
    a \leftand \inv^\false(Z)
      &\textrm{if $X = Y \tlef a \trig Z$}
  \end{cases} \\
\intertext{Similarly, we could have taken the left branch in this case.}
\inv^\ell(X) &=
  \begin{cases}
    a \leftand \inv^\true(Y) 
      &\textrm{if $X = Y \tlef a \trig Z$ for some $a \in A$} \\
      &\textrm{and $Y$ only has $\true$-leaves} \\
    \neg a \leftand \inv^\true(Z)
      &\textrm{if $X = Y \tlef a \trig Z$ for some $a \in A$} \\
      &\textrm{and $Z$ only has $\true$-leaves}
  \end{cases} \\
\inv^*(X) &=
  \begin{cases}
    \inv^*(\cd_1(X)\ssub{\Box_1}{\true}{\Box_2}{\false}) \leftand
      \inv^*(\cd_2(X))
      &\textrm{if $X$ has a cd} \\
    \inv^*(\dd_1(X)\ssub{\Box_1}{\true}{\Box_2}{\false}) \leftor
      \inv^*(\dd_2(X))
      &\textrm{if $X$ has a dd} \\
    \inv^\ell(X)
      &\textrm{otherwise}
  \end{cases} \\
\intertext{We can immediately see how Theorem \ref{thm:cddd} will be used in
the correctness proof of $\inv^*$.}
\inv(X) &=
  \begin{cases}
    \inv^\true(X)
      &\textrm{if $X$ has only $\true$-leaves} \\
    \inv^\false(X)
      &\textrm{if $X$ has only $\false$-leaves} \\
    \inv^\true(\tsd_1(X)\sub{\Box}{\true}) \leftand \inv^*(\tsd_2(X))
      &\textrm{otherwise}
  \end{cases}
\end{align}
Similarly, we can see how Theorem \ref{thm:tsd} is used in the correctness
proof of $\inv$. It should come as no surprise that $\inv$ is indeed correct
and inverts $\FE$ on $\FNF$.

\begin{restatable}{theorem}{thminvcorrect}
\label{thm:felinv}
For all $P \in \FNF$, $\inv(\FE(P)) \equiv P$.
\end{restatable}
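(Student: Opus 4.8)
The plan is to prove $\inv(\FE(P)) \equiv P$ by induction on the structure of $\FNF$-terms, following exactly the grammatical cases in Definition~\ref{def:fnf}. The function $\inv$ was designed to mirror the clauses of $\FE$ read backwards, and the structural theorems of Section~\ref{sec:feltrees} guarantee that at each step $\inv$ selects the correct case. The key observation that makes the induction go through is that each of $\inv^\true$, $\inv^\false$, $\inv^\ell$, $\inv^*$ is correct on the image under $\FE$ of the corresponding grammatical category, and that these images are distinguishable (only-$\true$-leaves, only-$\false$-leaves, cd present, dd present, neither), so the case analysis inside $\inv$ and $\inv^*$ always branches as intended.

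\medskip

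\noindent\textbf{Key steps.} First I would prove by induction that for every $\true$-term $P$, $\inv^\true(\FE(P)) \equiv P$: the base case $P = \true$ is immediate, and for $P = a \leftor P^\true$ one computes $\FE(a \leftor P^\true) = \FE(P^\true)\sub{\false}{\true} \tlef a \trig \FE(P^\true)$ from the definition of $\FE$, so the left branch from the root is $\FE(P^\true)\sub{\false}{\true}$; since $P^\true$ contains no $\false$ in its $\FE$-image this equals $\FE(P^\true)$, and $\inv^\true$ recurses correctly. The dual argument handles $\false$-terms with $\inv^\false$. Next, for an $\ell$-term $P = a \leftand P^\true$ (resp. $\neg a \leftand P^\true$), I would check that $\FE(P)$ has root $a$ with one all-$\true$ branch equal to $\FE(P^\true)$ and the other branch all-$\false$ (resp. all-$\true$), so $\inv^\ell$ picks the right clause and returns $a \leftand \inv^\true(\FE(P^\true)) \equiv P$ by the $\true$-term case; here I use that $\FE(P^\true)$ has only $\true$-leaves while $\FE(\true)\sub{\true}{\false} = \false$, distinguishing the two $\ell$-term shapes.

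\medskip

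\noindent\textbf{The $*$-term and top-level cases.} The heart of the proof is correctness of $\inv^*$ on $*$-terms, by induction on complexity modulo $\ell$-terms. If $P$ is an $\ell$-term, Lemma~\ref{lem:litstf} and Lemma~\ref{lem:nondectf} show $\FE(P)$ has neither a cd nor a dd, so $\inv^*$ falls through to $\inv^\ell$, handled above. If $P = P' \leftand Q'$ with $P' \in P^*$, $Q' \in P^d$, then Theorem~\ref{thm:cddd} tells us $\FE(P)$ has a cd and no dd, and that $\cd(\FE(P)) = (\FE(P')\ssub{\true}{\Box_1}{\false}{\Box_2}, \FE(Q'))$; hence $\cd_1(\FE(P))\ssub{\Box_1}{\true}{\Box_2}{\false} = \FE(P')$ and $\cd_2(\FE(P)) = \FE(Q')$, so by the induction hypothesis $\inv^*(\FE(P)) \equiv P' \leftand Q' \equiv P$. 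The case $P = P' \leftor Q'$ is symmetric via the dd clause. Finally, for the top-level function $\inv$: if $P$ is a $\true$-term or $\false$-term, $\FE(P)$ has only $\true$- or only $\false$-leaves and $\inv$ reduces to the cases already done; if $P = P^\true \leftand Q^*$ is a $\true$-$*$-term, then $\FE(P)$ has both truth values as leaves (since $Q^*$ contains an $\ell$-term, by Lemma~\ref{lem:pert}), so $\inv$ enters the third clause, and Theorem~\ref{thm:tsd} gives $\tsd(\FE(P)) = (\FE(P^\true)\sub{\true}{\Box}, \FE(Q^*))$; thus $\tsd_1(\FE(P))\sub{\Box}{\true} = \FE(P^\true)$ and $\tsd_2(\FE(P)) = \FE(Q^*)$, and combining the $\true$-term case with the $*$-term case yields $\inv(\FE(P)) \equiv P^\true \leftand Q^* \equiv P$. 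I would also need the small observation that a $\true$-$*$-term's $\FE$-image genuinely has both kinds of leaves, so that it is not accidentally caught by one of the first two clauses of $\inv$ — this follows from Lemma~\ref{lem:pert}.

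\medskip

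\noindent\textbf{Main obstacle.} The delicate point is verifying that the case distinctions are \emph{mutually exclusive and exhaustive} on the relevant $\FE$-images: one must be sure that a $\true$-$*$-term never has only one leaf value, that a $*$-term of conjunction shape has a cd but provably no dd (and dually), and that an $\ell$-term has neither. All of this is exactly what Theorems~\ref{thm:cddd} and~\ref{thm:tsd} together with Lemmas~\ref{lem:litstf}, \ref{lem:pert}, \ref{lem:nondectf} were proved for, so modulo careful bookkeeping the argument is routine; the only real work is unwinding the definitions of $\FE$ and of the decomposition operators to confirm that $\cd_1(X)\ssub{\Box_1}{\true}{\Box_2}{\false}$, $\dd_1(X)\ssub{\Box_1}{\true}{\Box_2}{\false}$ and $\tsd_1(X)\sub{\Box}{\true}$ recover precisely $\FE$ of the left-hand subterm, which is a direct substitution computation.
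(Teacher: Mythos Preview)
Your proposal is correct and follows essentially the same approach as the paper's proof: induction on the grammatical categories $P^\true$, $P^\false$, $P^\ell$, $P^*$, then the top-level case split, invoking Theorem~\ref{thm:cddd} for the $*$-term step and Theorem~\ref{thm:tsd} for the $\true$-$*$-term step. You are in fact slightly more careful than the paper in justifying why an $\ell$-term has no cd or dd and why a $\true$-$*$-term's image has both leaf values (the paper simply asserts these); Lemma~\ref{lem:litstf} alone already suffices for the former, so citing Lemma~\ref{lem:nondectf} there is harmless but not needed.
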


The proof for this theorem can be found in Appendix \ref{sec:felinv}. For the
sake of completeness, we separately state the completeness result below.

\begin{theorem}
\label{thm:felcpl}
For all $P, Q \in \FT$, if $\FFEL \vDash P = Q$ then $\EqFFEL \vdash
P = Q$.
\end{theorem}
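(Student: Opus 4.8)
The plan is to combine the normalization result (Theorem~\ref{thm:nf}), the invertibility result (Theorem~\ref{thm:felinv}) and soundness (Theorem~\ref{thm:felsnd}) in the standard way. Suppose $P, Q \in \FT$ with $\FFEL \vDash P = Q$; by definition of $\FFEL$ this means $\FE(P) = \FE(Q)$. First I would apply Theorem~\ref{thm:nf} to obtain $\nf(P), \nf(Q) \in \FNF$ together with derivations $\EqFFEL \vdash P = \nf(P)$ and $\EqFFEL \vdash Q = \nf(Q)$.

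Next I would push these derivations through soundness: by Theorem~\ref{thm:felsnd}, $\EqFFEL \vdash P = \nf(P)$ yields $\FFEL \vDash P = \nf(P)$, i.e., $\FE(P) = \FE(\nf(P))$, and likewise $\FE(Q) = \FE(\nf(Q))$. Chaining these with the hypothesis gives $\FE(\nf(P)) = \FE(P) = \FE(Q) = \FE(\nf(Q))$.

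Now, since $\nf(P)$ and $\nf(Q)$ lie in $\FNF$, Theorem~\ref{thm:felinv} applies and gives $\nf(P) \equiv \inv(\FE(\nf(P)))$ and $\nf(Q) \equiv \inv(\FE(\nf(Q)))$; as the two evaluation trees are equal, we conclude $\nf(P) \equiv \nf(Q)$, that is, $\nf(P)$ and $\nf(Q)$ are literally the same term. Hence $\EqFFEL \vdash \nf(P) = \nf(Q)$ holds trivially (by reflexivity), and by symmetry and transitivity of equational logic $\EqFFEL \vdash P = \nf(P) = \nf(Q) = Q$, as required.

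I would expect no serious obstacle at this stage: the entire difficulty of the completeness argument is encapsulated in Theorems~\ref{thm:nf} and~\ref{thm:felinv}, whose proofs are deferred to the appendices. The only subtlety worth flagging is the need to invoke soundness so that the equational derivation relating $P$ to its normal form can be converted into an identity of evaluation trees — without that step one cannot feed $\nf(P)$ and $\nf(Q)$ into the invertibility theorem. One should also take care that the reasoning uses only the already-established soundness direction and does not circularly appeal to completeness.
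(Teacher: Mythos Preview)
Your proof is correct and follows essentially the same route as the paper: normalize both terms via Theorem~\ref{thm:nf}, use soundness (Theorem~\ref{thm:felsnd}) to conclude the normal forms have equal evaluation trees, then invoke Theorem~\ref{thm:felinv} to get syntactic identity of the normal forms and finish by transitivity. The paper presents the argument with slightly different variable names (reducing first to the claim that $\FE(P)=\FE(Q)$ implies $P\equiv Q$ for $P,Q\in\FNF$), but the content is the same.
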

\begin{proof}
It suffices to show that for $P,Q \in \FNF$, $\FE(P) = \FE(Q)$ implies $P
\equiv Q$. To see this suppose that $P'$ and $Q'$ are two $\FEL$-terms and
$\FE(P') = \FE(Q')$. We know that $P'$ is derivably equal to an $\FNF$-term
$P$, i.e., $\EqFFEL \vdash P' = P$, and that $Q'$ is derivably equal to an
$\FNF$-term $Q$, i.e., $\EqFFEL \vdash Q' = Q$. Theorem \ref{thm:felsnd} then
gives us $\FE(P') = \FE(P)$ and $\FE(Q') = \FE(Q)$. Hence by the result $P
\equiv Q$ and in particular $\EqFFEL \vdash P = Q$. Transitivity then gives
us $\EqFFEL \vdash P' = Q'$ as desired.

The result follows immediately from Theorem \ref{thm:felinv}.
\end{proof}

\chapter{Free Short-Circuit Logic (FSCL)}
\label{chap:scl}
In this chapter we define Free Short-Circuit Logic on evaluation trees and
present the set of equations $\EqFSCL$, which we will prove axiomatizes this
logic in Section \ref{sec:felcpl}. Formally, $\SCL$-terms are built up from
atomic propositions that may have side effects, called atoms, the truth value
constants $\true$ for true and $\false$ for false and the connectives $\neg$
for negation, $\sleftand$ for (short-circuit) left-sequential conjunction and
$\sleftor$ for (short-circuit) left-sequential disjunction.
\begin{definition}
Let $A$ be a countable set of atoms. \textbf{$\SCL$-terms $(\ST)$} have the
following grammar presented in Backus-Naur Form.
\begin{equation*}
P \in \ST ::= a \in A ~\mid~ \true ~\mid~ \false ~\mid~ \neg P
  ~\mid~ (P \sleftand P) ~\mid~ (P \sleftor P)
\end{equation*}
\end{definition}
As is the case with $\FEL$, if $A = \varnothing$ then resulting logic is
trivial.

First we return for a moment to our motivation for left-sequential logics,
i.e., propositional terms as used in programming languages. We will consider
the $\SCL$-term $a \sleftor b$ and informally describe its evaluation,
naturally using a short-circuit evaluation strategy. We start by evaluating $a$
and let its yield determine our next action. If $a$ yielded $\false$ we proceed
by evaluating $b$, i.e., the yield of the term as a whole will be the yield of
$b$. If $a$ yielded $\true$, we already know at this point that $a \sleftor b$
will yield $\true$.  We skip the evaluation of $b$ and let the term yield
$\true$, i.e., $b$ is short-circuited.

Considering the more complex term $(a \sleftor b) \sleftand c$, we find that we
start by evaluating $a \sleftor b$ and if it yields $\true$ we proceed by
evaluating $c$. If it yields $\false$ we skip the evaluation of $c$, because we
know the term will yield $\false$. This example shows that evaluating
$\SCL$-terms is an interactive procedure, where the yield of the previous atom
is needed to determine which atom to evaluate next. We believe these semantics
are best captured in trees. Hence we will define equality of $\SCL$-terms
using (evaluation) trees. We define the set $\T$ of finite binary trees over
$A$ with leaves in $\{\true, \false\}$ recursively. We have that
\begin{equation*}
\true \in \T, \qquad
\false \in \T, \qquad\textrm{and}\qquad
(X \tlef a \trig Y) \in \T \textrm{ for any $X, Y \in \T$ and $a \in A$}.
\end{equation*}
In the expression $X \tlef a \trig Y$ the root is represented by $a$, the left
branch by $X$ and the right branch by $Y$. We define the depth of a tree $X$
recursively by $d(\true) = d(\false) = 0$ and $d(Y \tlef a \trig Z) = 1 +
\max(d(Y), d(Z))$ for $a \in A$. The reason for our choice of notation for
trees will become apparent in Chapter \ref{chap:rel}. We refer to trees in $\T$
as evaluation trees, or trees for short.  Figure \ref{fig:exse} shows the trees
corresponding to the evaluations of $(a \sleftor b) \sleftand c$ and $(a
\sleftand b) \sleftor c$.

Returning to our example, we have seen that the tree corresponding to the
evaluation of $(a \sleftor b) \sleftand c$ can be composed from the tree
corresponding to the evaluation of $a \sleftor b$ and that corresponding to the
evaluation of $c$. We said above that if $a \sleftor b$ yielded $\true$, we
would proceed with the evaluation of $c$. This can be seen as replacing each
$\true$-leaf in the tree corresponding to the evaluation of $a \sleftor b$ with
the tree that corresponds to the evaluation of $c$. Formally we define the leaf
replacement operator, `replacement' for short, on trees in $\T$ as follows. Let
$X, X', X'' Y, Z \in \T$ and $a \in A$. The replacement of $\true$ with $Y$ and
$\false$ with $Z$ in $X$, denoted $X\ssub{\true}{Y}{\false}{Z}$, is defined
recursively as
\begin{align*}
\true\ssub{\true}{Y}{\false}{Z} &= Y \\
\false\ssub{\true}{Y}{\false}{Z} &= Z \\
(X' \tlef a \trig X'')\ssub{\true}{Y}{\false}{Z} &=
  X'\ssub{\true}{Y}{\false}{Z} \tlef a \trig
  X''\ssub{\true}{Y}{\false}{Z}.
\end{align*}
We note that the order in which the replacements of the leaves of $X$ is listed
inside the brackets is irrelevant. We will adopt the convention of not listing
any identities inside the brackets, i.e.,
\begin{equation*}
X\sub{\false}{Y} = X\ssub{\true}{\true}{\false}{Y}.
\end{equation*}
Furthermore we let replacements associate to the left. We also use that fact
that
\begin{equation*}
X\sub{\true}{Y}\sub{\false}{Z} = X\ssub{\true}{Y}{\false}{Z}
\end{equation*}
if $Y$ does not contain $\false$, which can be shown by a trivial induction.
Similarly,
\begin{equation*}
X\sub{\false}{Z}\sub{\true}{Y} = X\ssub{\true}{Y}{\false}{Z}
\end{equation*}
if $Z$ does not contain $\true$.  We now have the terminology and notation to
formally define the mapping from $\SCL$-terms to evaluation trees.

\begin{definition}
Let $A$ be a countable set of atoms and let $\T$ be the set all finite binary
trees over $A$ with leaves in $\{\true, \false\}$. We define the unary
\textbf{Short-Circuit Evaluation} function $\SE: \ST \to \T$ as:
\begin{align*}
\SE(\true) &= \true \\
\SE(\false) &= \false \\
\SE(a) &= \true \tlef a \trig \false &\textrm{for $a \in A$} \\
\SE(\neg P) &= \SE(P)\ssub{\true}{\false}{\false}{\true} \\
\SE(P \sleftand Q) &= \SE(P)\sub{\true}{\SE(Q)} \\
\SE(P \sleftor Q) &= \SE(P)\sub{\false}{\SE(Q)}.
\end{align*}
\end{definition}
As we can see from the definition on atoms, the evaluation continues in the
left branch if an atom yields $\true$ and in the right branch if it yields
$\false$. Revisiting our example once more, we indeed see how the evaluation of
$a \sleftor b$ is composed of the evaluation of $a$ followed by the evaluation
of $b$ in case $a$ yields $\false$. We can compute
\begin{align*}
\SE(a \sleftor b)
&= (\true \tlef a \trig \false)\sub{\false}{(\true \tlef b \trig \false)} \\
&= \true \tlef a \trig (\true \tlef b \trig \false).
\end{align*}
Now the evaluation of $(a \sleftor b) \sleftand c$ is a composition of this
tree and $\true \tlef c \trig \false$, as can be seen in Figure
\ref{fig:exfe2}.

\begin{figure}[thbp]
\hrule
\centering
\subfloat[$\SE((a \protect\sleftand b) \protect\sleftor c)$]{\label{fig:exse1}
\beginpgfgraphicnamed{scl1}
\begin{tikzpicture}[%
level distance=7.5mm,
level 1/.style={sibling distance=30mm},
level 2/.style={sibling distance=15mm},
level 3/.style={sibling distance=7.5mm}
]
\node (a) {$a$}
  child {node (b1) {$b$}
    child {node (c1) {$\true$}}
    child {node (c2) {$c$}
      child {node (d3) {$\true$}} 
      child {node (d4) {$\false$}}
    }
  }
  child {node (b2) {$c$}
    child {node (d5) {$\true$}} 
    child {node (d6) {$\false$}}
  };
\end{tikzpicture}\endpgfgraphicnamed}
\qquad
\subfloat[$\SE((a \protect\sleftor b) \protect\sleftand c)$]{\label{fig:exse2}
\beginpgfgraphicnamed{scl2}
\begin{tikzpicture}[%
level distance=7.5mm,
level 1/.style={sibling distance=30mm},
level 2/.style={sibling distance=15mm},
level 3/.style={sibling distance=7.5mm}
]
\node (a) {$a$}
  child {node (b1) {$c$}
    child {node (d1) {$\true$}} 
    child {node (d2) {$\false$}}
  }
  child {node (b2) {$b$}
    child {node (c3) {$c$}
      child {node (d5) {$\true$}} 
      child {node (d6) {$\false$}}
    }
    child {node (c4) {$\false$}}
  };
\end{tikzpicture}\endpgfgraphicnamed}
\vspace{1em}
\hrule
\vspace{1em}
\caption{Trees depicting the evaluation of two $\SCL$-terms. The evaluation
starts at the root. When (the atom at) an inner node yields $\true$ the
evaluation continues in its left branch and when it yields $\false$ it
continues in its right branch. The leaves indicate the yield of the terms as a
whole.}
\label{fig:exse}
\end{figure}
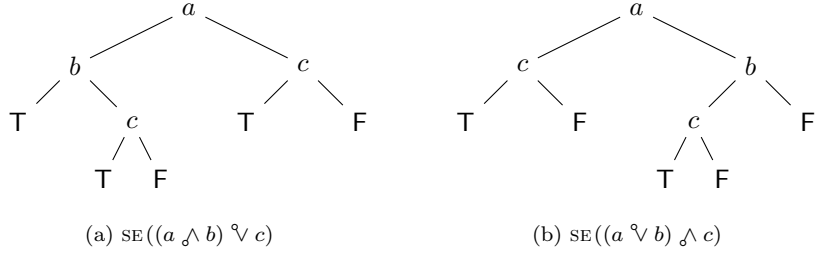

These trees show us that a function of the yield of the atoms in an $\SCL$-term
is insufficient to determine the semantics of the term as a whole. They show us
that we must also consider the (conditional) order in which the atoms occur in
the term. In particular we see that in $P \sleftor Q$, $Q$ will be
short-circuited if $P$ yields $\true$, while in $P \sleftand Q$, it will be
short-circuited if $P$ yields $\false$. We are now ready to define
Short-Circuit Logic on evaluation trees.
\begin{definition}
\textbf{Free Short-Circuit Logic $(\FSCLT)$} is the logic that satisfies
exactly the consequences of $\SE$-equality, i.e., for all $P, Q \in \ST$,
\begin{equation*}
\FSCLT \vDash P = Q \ \Longleftrightarrow\  \SE(P) = \SE(Q).
\end{equation*}
\end{definition}
Using the completeness result we shall prove in Section \ref{sec:sclcpl}, we
will show that $\FSCLT$ is in fact equivalent to $\FSCL$ as defined by Bergstra
and Ponse in \cite{scl}. This should come as no surprise given the tree-like
structure that Proposition Algebra terms exhibit, see, e.g., \cite{pa} or
\cite{pascl}.

We choose a representation of $\T$ as trees rather than as sets of traces,
i.e., the paths of those trees annotated with truth values for the atoms,
because the tree notation allows us to be more succinct. These tree semantics
were first given, although presented as trace semantics, by Ponse in
\cite{tt}.

We now turn to the set of equations $\EqFSCL$, listed in Table
\ref{tab:eqfscl}, which we will show in Section \ref{sec:sclcpl} is an
axiomatization of $\FSCLT$. This set of equations is based on one presented by
Bergstra and Ponse in \cite{scl}. If two $\SCL$-terms $s$ and $t$, where we
extend the definition to allow for terms containing variables, are derivable by
equational logic and $\EqFSCL$, we denote this by $\EqFSCL \vdash s = t$ and
say that $s$ and $t$ are derivably equal. As a consequence of \eqref{ax:scl1}
through \eqref{ax:scl3}, $\sleftand$ is the dual of $\sleftor$ and hence the
duals of the equations in $\EqFSCL$ are also derivable. We will use this fact
implicitly throughout our proofs. Observe that unlike with $\EqFFEL$, we have
an equation in $\EqFSCL$ for (a special case of) distributivity, i.e.,
\eqref{ax:scl10s}.

\begin{table}[htbp]
\hrule
\begin{align}
\false &= \neg\true 
  \label{ax:scl1}\tag{SCL1}\\
x \sleftor y &= \neg(\neg x \sleftand \neg y) 
  \label{ax:scl2}\tag{SCL2}\\
\neg \neg x &= x 
  \label{ax:scl3}\tag{SCL3}\\
(x \sleftand y) \sleftand z &= x \sleftand (y \sleftand z)
  \label{ax:scl7}\tag{SCL4}\\
\true \sleftand x &= x 
  \label{ax:scl4}\tag{SCL5}\\
x \sleftand \true &= x 
  \label{ax:scl5}\tag{SCL6}\\
\false \sleftand x &= \false 
  \label{ax:scl6}\tag{SCL7}\\
x \sleftand \false &= \neg x \sleftand \false
  \label{ax:scl8s}\tag{SCL8} \\
(x \sleftand \false) \sleftor y &= (x \sleftor \true) \sleftand y
  \label{ax:scl9s}\tag{SCL9} \\
(x \sleftand y) \sleftor (z \sleftand \false) &= (x \sleftor (z \sleftand
  \false)) \sleftand (y \sleftor (z \sleftand \false))
  \label{ax:scl10s}\tag{SCL10}
\end{align}
\hrule
\vspace{1em}
\caption{The set of equations \textbf{\EqFSCL}.}
\label{tab:eqfscl}
\end{table}


The following lemma shows some equations that will prove useful in Section
\ref{sec:snf}. These equations show how terms of the form $x \sleftand \false$
and $x \sleftor \true$ can be used to change the order in which atoms occur in
an $\SCL$-term. This is very different from the situation with $\FEL$, where
terms that contain the same atoms, but in a different order, are never
derivably equal. In terms of a comparison between $\EqFSCL$ and $\EqFFEL$ this
can be seen as a consequence of \eqref{ax:scl10s}.
\clearpage
\begin{lemma}
\label{lem:seqs}
The following equations can all be derived by equational logic and $\EqFSCL$.
\begin{enumerate}[itemsep=5pt]
\item $(x \sleftor y) \sleftand (z \sleftand \false) = (\neg x \sleftor (z
  \sleftand \false)) \sleftand (y \sleftand (z \sleftand \false))$
  \label{eq:b1}
\item $(x \sleftor (y \sleftand \false)) \sleftand (z \sleftand \false) = (\neg
  x \sleftor (z \sleftand \false)) \sleftand (y \sleftand \false)$
  \label{eq:b2}
\item $(x \sleftand (y \sleftor \true)) \sleftor (z \sleftand \false) = (x
  \sleftor (z \sleftand \false)) \sleftand (y \sleftor \true)$
  \label{eq:b3}
\end{enumerate}
\end{lemma}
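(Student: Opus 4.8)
The plan is to derive the three equations in order; only \eqref{eq:b1} requires real work, and \eqref{eq:b2} and \eqref{eq:b3} will drop out of it and of \eqref{ax:scl10s} by instantiation. Before starting I would record a few routine consequences of $\EqFSCL$ that get used repeatedly: the dual unit laws $\false \sleftor x = x$ and $\true \sleftor x = \true$ (duals of \eqref{ax:scl4} and \eqref{ax:scl6}); the identity $x \sleftor \false = x$ (from \eqref{ax:scl2}, \eqref{ax:scl1}, \eqref{ax:scl5}, \eqref{ax:scl3}); the De~Morgan laws $\neg(x \sleftand y) = \neg x \sleftor \neg y$ and $\neg(x \sleftor y) = \neg x \sleftand \neg y$ (from \eqref{ax:scl2} and \eqref{ax:scl3}); and the small identity $(x \sleftor y) \sleftand \false = \neg x \sleftand (y \sleftand \false)$, obtained from \eqref{ax:scl8s}, De~Morgan, \eqref{ax:scl7} and \eqref{ax:scl8s} once more.

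The heart of the lemma is \eqref{eq:b1}, and, as the surrounding discussion indicates, the derivation must run through the restricted distributivity axiom \eqref{ax:scl10s}: this is the only axiom capable of separating the two operands of the disjunction $x \sleftor y$ sitting immediately to the left of the constant-false factor $z \sleftand \false$, and --- as witnessed by the doubled occurrence of $z$ on the right-hand side --- it is also the only axiom that may duplicate such a factor. So the plan for \eqref{eq:b1} is: rewrite $(x \sleftor y) \sleftand (z \sleftand \false)$ using associativity \eqref{ax:scl7} to absorb $z$, \eqref{ax:scl8s} to flip negations inside any $(\cdot \sleftand \false)$ context, and the De~Morgan laws to move negations across $\sleftand$ and $\sleftor$, so as to reach a term on which \eqref{ax:scl10s} (possibly together with \eqref{ax:scl9s}) applies; apply it, which is what yields the two conjuncts $\neg x \sleftor (z \sleftand \false)$ and $y \sleftand (z \sleftand \false)$; and reassemble with the auxiliary identity above, \eqref{ax:scl7}, and \eqref{ax:scl8s}. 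The main obstacle is precisely this middle part: hitting on the intermediate term to which \eqref{ax:scl10s} applies, and getting the negation bookkeeping right, since the $\neg x$ on the right of \eqref{eq:b1} has to be built from an ``inner'' De~Morgan negation together with the negation introduced when $x \sleftor y$ is written as $\neg(\neg x \sleftand \neg y)$. Checking the three equations against $\SE$ first is worthwhile, both to guard against transcription errors and to keep the search on track.

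With \eqref{eq:b1} in hand, \eqref{eq:b2} follows by substituting $y \sleftand \false$ for $y$ in it and simplifying $(y \sleftand \false) \sleftand (z \sleftand \false)$ to $y \sleftand \false$ by \eqref{ax:scl7} and \eqref{ax:scl6}; and \eqref{eq:b3} follows by substituting $y \sleftor \true$ for $y$ in \eqref{ax:scl10s} and simplifying $(y \sleftor \true) \sleftor (z \sleftand \false)$ to $y \sleftor \true$ by the dual of \eqref{ax:scl7} and the dual of \eqref{ax:scl6}. Both of these reductions are immediate, so the whole lemma rests on the derivation of \eqref{eq:b1}.
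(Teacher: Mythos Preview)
Your proposal is correct and follows essentially the same route as the paper: for \eqref{eq:b1} the paper likewise uses \eqref{ax:scl7} and \eqref{ax:scl8s} to flip negations inside the $(\cdot \sleftand \false)$ context, rewrites $x \sleftor y$ via De~Morgan as $\neg(\neg x \sleftand \neg y)$ so that \eqref{ax:scl10s} applies, and then reassembles with the same tools (it never needs \eqref{ax:scl9s}). Your derivations of \eqref{eq:b2} and \eqref{eq:b3} by instantiation are exactly the paper's.
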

\begin{proof}
We derive the equations in order.
\begin{align*}
(x \sleftor y) &\sleftand (z \sleftand \false) \\
&= (x \sleftor y) \sleftand ((z \sleftand \false) \sleftand \false)
  &\textrm{by \eqref{ax:scl6} and \eqref{ax:scl7}} \\
&= (x \sleftor y) \sleftand (\neg(z \sleftand \false) \sleftand \false)
  &\textrm{by \eqref{ax:scl8s}} \\
&= ((x \sleftor y) \sleftand \neg(z \sleftand \false)) \sleftand \false
  &\textrm{by \eqref{ax:scl7}} \\
&= ((\neg x \sleftand \neg y) \sleftor (z \sleftand \false)) \sleftand \false
  &\textrm{by \eqref{ax:scl8s}, \eqref{ax:scl2} and \eqref{ax:scl3}} \\
&= ((\neg x \sleftor (z \sleftand \false)) \sleftand (\neg y \sleftor (z
  \sleftand \false))) \sleftand \false
  &\textrm{by \eqref{ax:scl10s}} \\
&= (\neg x \sleftor (z \sleftand \false)) \sleftand ((\neg y \sleftor (z
  \sleftand \false)) \sleftand \false)
  &\textrm{by \eqref{ax:scl7}} \\
&= (\neg x \sleftor (z \sleftand \false)) \sleftand ((y \sleftand \neg(z
  \sleftand \false)) \sleftand \false)
  &\textrm{by \eqref{ax:scl8s}, \eqref{ax:scl2} and \eqref{ax:scl3}} \\
&= ((\neg x \sleftor (z \sleftand \false)) \sleftand y) \sleftand (\neg(z
  \sleftand \false) \sleftand \false)
  &\textrm{by \eqref{ax:scl7}} \\
&= ((\neg x \sleftor (z \sleftand \false)) \sleftand y) \sleftand ((z
  \sleftand \false) \sleftand \false)
  &\textrm{by \eqref{ax:scl8s}} \\
&= ((\neg x \sleftor (z \sleftand \false)) \sleftand y) \sleftand (z
  \sleftand \false)
  &\textrm{by \eqref{ax:scl7} and \eqref{ax:scl6}} \\
&= (\neg x \sleftor (z \sleftand \false)) \sleftand (y \sleftand (z
  \sleftand \false))
  &\textrm{by \eqref{ax:scl7}} \\[5pt]
(x \sleftor (y &\sleftand \false)) \sleftand (z \sleftand
  \false)\\
&= (\neg x \sleftor (z \sleftand \false)) \sleftand ((y \sleftand \false)
  \sleftand (z \sleftand \false))
  &\textrm{by part \eqref{eq:b1} of this lemma} \\
&= (\neg x \sleftor (z \sleftand \false)) \sleftand (y \sleftand \false)
  &\textrm{by \eqref{ax:scl6} and \eqref{ax:scl7}} \\[5pt]
(x \sleftand (y &\sleftor \true)) \sleftor (z \sleftand
  \false)\\
&= (x \sleftor (z \sleftand \false)) \sleftand ((y \sleftor \true) \sleftor (z
  \sleftand \false))
  &\textrm{by \eqref{ax:scl10s}} \\
&= (x \sleftor (z \sleftand \false)) \sleftand (y \sleftor \true)
  &\textrm{by the duals of \eqref{ax:scl6} and \eqref{ax:scl7}}
  &\qedhere
\end{align*}
\end{proof}

\begin{theorem}
\label{thm:sclsnd}
For all $P, Q \in \ST$, if $\EqFSCL \vdash P = Q$ then $\FSCLT \vDash P = Q$.
\end{theorem}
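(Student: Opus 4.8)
The plan is to mirror the soundness proof for $\FFEL$ (Theorem~\ref{thm:felsnd}). Since $\EqFSCL \vdash {\cdot}$ is generated from $\EqFSCL$ by equational logic, it suffices to check that $\SE$-equality on closed terms is a congruence closed under every instance of the equations of $\EqFSCL$. Reflexivity, symmetry and transitivity of $\SE$-equality are immediate from those of tree identity. For congruence it is enough to verify, for each connective and each argument position, that $\SE(P) = \SE(Q)$ implies the corresponding equality after substitution into a context; this is read off directly from the compositional clauses of $\SE$. For instance, if $\SE(P) = \SE(Q)$ then $\SE(R \sleftand P) = \SE(R)\sub{\true}{\SE(P)} = \SE(R)\sub{\true}{\SE(Q)} = \SE(R \sleftand Q)$ and $\SE(P \sleftand R) = \SE(P)\sub{\true}{\SE(R)} = \SE(Q)\sub{\true}{\SE(R)} = \SE(Q \sleftand R)$; the cases for $\sleftor$ and $\neg$ are the same, using $\SE(P)\sub{\false}{\SE(R)}$ and $\SE(P)\ssub{\true}{\false}{\false}{\true}$ respectively.

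It then remains to verify that each of \eqref{ax:scl1}--\eqref{ax:scl10s} is valid, i.e.\ holds for arbitrary trees substituted for the variables, exactly as was done for \eqref{ax:ft8} in the $\FFEL$ case. For this I would first record a few easy facts about leaf replacement, each by a routine induction on tree structure: $\SE(\neg P) = \SE(P)\ssub{\true}{\false}{\false}{\true}$ simply swaps the two leaf values, so negation is an involution (giving \eqref{ax:scl3}) and commutes with leaf replacement up to that swap (giving \eqref{ax:scl2}); replacing $\true$-leaves by $\true$ (or $\false$-leaves by $\false$) is the identity (giving \eqref{ax:scl5}, \eqref{ax:scl5}, i.e.\ $x\sleftand\true=x$); $\SE(P)\sub{\true}{\false}$ is $\SE(P)$ with every leaf relabelled $\false$, hence depends only on the shape of $\SE(P)$ and equals $\SE(\neg P)\sub{\true}{\false}$ (giving \eqref{ax:scl8s}); and the nesting identities $X\sub{\true}{Y}\sub{\true}{Z} = X\sub{\true}{Y\sub{\true}{Z}}$ and $X\sub{\true}{Y}\sub{\false}{Z} = X\ssub{\true}{Y\sub{\false}{Z}}{\false}{Z}$. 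With these, \eqref{ax:scl1}, \eqref{ax:scl4}, \eqref{ax:scl6} are direct computations, \eqref{ax:scl7} (associativity) follows from the nested-$\sub{\true}{\cdot}$ identity, and \eqref{ax:scl9s} follows by computing both sides as the tree obtained from $\SE(P)$ by replacing every leaf with $\SE(Q)$.

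The main obstacle is the distributivity axiom \eqref{ax:scl10s}, which has no analogue in $\EqFFEL$. I would handle it by abbreviating $F := \SE(R)\sub{\true}{\false}$ (the all-$\false$ tree of the shape of $\SE(R)$) and showing that both sides evaluate under $\SE$ to the tree obtained from $\SE(P)$ by replacing each $\true$-leaf with $\SE(Q)\sub{\false}{F}$ and each $\false$-leaf with $F$. For the left-hand side, $\SE((P \sleftand Q) \sleftor (R \sleftand \false)) = \SE(P)\sub{\true}{\SE(Q)}\sub{\false}{F}$, and pushing the $\sub{\false}{F}$ inward via the replacement identities turns the $\false$-leaves of $\SE(P)$ into $F$ and the embedded copies of $\SE(Q)$ into $\SE(Q)\sub{\false}{F}$. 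For the right-hand side, $\SE((P \sleftor (R \sleftand \false)) \sleftand (Q \sleftor (R \sleftand \false))) = \SE(P)\sub{\false}{F}\sub{\true}{\SE(Q)\sub{\false}{F}}$, where $\sub{\false}{F}$ replaces $\false$-leaves by $F$ and leaves $\true$-leaves untouched, and the subsequent $\sub{\true}{\cdot}$ fills those with $\SE(Q)\sub{\false}{F}$. The two results coincide, establishing \eqref{ax:scl10s}. Care is needed only to track which of $F$ and $\SE(Q)\sub{\false}{F}$ may contain $\true$- or $\false$-leaves, so that the various $X\sub{\true}{Y}\sub{\false}{Z}$ collapses are licit; this bookkeeping, rather than any conceptual difficulty, is the one point demanding attention. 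Finally, since $\sleftand$ is the dual of $\sleftor$ by \eqref{ax:scl1}--\eqref{ax:scl3}, no separate check of the dual equations is needed.
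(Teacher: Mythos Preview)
Your proposal is correct and follows essentially the same approach as the paper: show that $\SE$-equality is a congruence (by the compositional definition of $\SE$) and then verify each axiom of $\EqFSCL$ by direct computation on trees. The paper's own proof is much terser---it refers back to Theorem~\ref{thm:felsnd} for the congruence part and then only works out the example of \eqref{ax:scl3}, dismissing the remaining axiom checks as ``cumbersome, but not difficult''---so your treatment of \eqref{ax:scl10s} in particular goes well beyond what the paper spells out, but the underlying strategy is identical.
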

\begin{proof}
To see that identity, symmetry, transitivity and congruence hold in $\FSCLT$,
we refer the reader to the proof of Theorem \ref{thm:felsnd} and note that the
proofs for $\FSCLT$ are highly similar.

Verifying the validity of the equations in $\EqFSCL$ is cumbersome, but not
difficult. As an example we show it for \eqref{ax:scl3}. We have 
\begin{equation*}
\SE(\neg \neg P) = \SE(P)\ssub{\true}{\false}{\false}{\true}
\ssub{\true}{\false}{\false}{\true}  = \SE(P)
\end{equation*}
by a trivial structural induction on evaluation trees.  
\end{proof}

\section{SCL Normal Form}
\label{sec:snf}
To aid in our completeness proof we define a normal form for $\SCL$-terms.
Because the atoms in $\SCL$-terms may have side effects common normal forms for
PL such as Conjunctive Normal Form or Disjunctive Normal Form are not normal
forms for $\SCL$. For example, the term $a \sleftand (b \sleftor c)$ would be
written as $(a \sleftand b) \sleftor (a \sleftand c)$ in Disjunctive Normal
Form, but a trivial examination shows that the evaluation trees of these terms
are not the same. Our normal form is inspired by the $\FEL$ Normal Form
presented in Chapter \ref{chap:fel}. We present the grammar for our normal form
before we motivate it.


\begin{definition}
\label{def:snf}
A term $P \in \ST$ is said to be in \textbf{$\SCL$ Normal Form $(\SNF)$} if it
is generated by the following grammar.
\begin{align*}
P \in \SNF &::= P^\true ~\mid~ P^\false ~\mid~ P^\true \sleftand P^* \\
P^* &::= P^c ~\mid~ P^d \\
P^c &::= P^\ell ~\mid~ P^* \sleftand P^d \\
P^d &::= P^\ell ~\mid~ P^* \sleftor P^c \\
P^\ell &::= (a \sleftand P^\true ) \sleftor P^\false
  ~\mid~ (\neg a \sleftand P^\true ) \sleftor P^\false \\
P^\true &::= \true ~\mid~ (a \sleftand P^\true) \sleftor P^\true \\
P^\false &::= \false ~\mid~ (a \sleftor P^\false) \sleftand P^\false,
\end{align*}
where $a \in A$. We refer to $P^*$-forms as $*$-terms, to $P^\ell$-forms as
$\ell$-terms, to $P^\true$-forms as $\true$-terms and to $P^\false$-forms as
$\false$-terms. A term of the form $P^\true \sleftand P^*$ is referred to as a
$\true$-$*$-term.
\end{definition}

Without the presence of $\true$ and $\false$ in our language, a traditional
Negation Normal Form would have sufficed. Furthermore, if $A = \varnothing$, an
even more trivial normal form could be used, i.e., just $\true$ or $\false$.

When considering trees in the image of $\SE$ we note that some trees only have
$\true$-leaves, some only $\false$-leaves and some both $\true$-leaves and
$\false$-leaves. For any $\SCL$-term $P$, $\SE(P \sleftor \true)$ is a tree
with only $\true$-leaves, as can easily be seen from the definition of $\SE$.
All terms $P$ such that $\SE(P)$ is a tree with only $\true$-leaves are
rewritten to $\true$-terms. Similarly, for any term $P$, $\SE(P \sleftand
\false)$ is a tree with only $\false$-leaves. All $P$ such that $\SE(P)$ has
only $\false$-leaves are rewritten to $\false$-terms. The simplest trees in the
image of $\SE$ that have both types of leaves are $\SE(a)$ for $a \in A$. Any
(occurrence of an) atom that determines (in whole or in part) the yield of the
term, such as $a$ in this example, is referred to as a determinative (occurrence
of an) atom. This as opposed to a non-determinative (occurrence of an) atom,
such as the $a$ in $a \sleftor \true$, which does not determine (either in
whole or in part) the yield of the term.  Note that a term $P$ such that
$\SE(P)$ contains both $\true$ and $\false$ must contain at least one
determinative atom.

Terms that contain at least one determinative atom will be rewritten to
$\true$-$*$-terms. In $\true$-$*$-terms we encode each determinative atom
together with the non-determinative atoms that occur between it and the next
determinative atom in the term (reading from left to right) as an $\ell$-term.
Observe that the first atom in an $\ell$-term is the (only) determinative atom
in that $\ell$-term and that determinative atoms only occur in $\ell$-terms.
Also observe that the yield of an $\ell$-term is the yield of its determinative
atom. This is intuitively convincing, because the remainder of the atoms in any
$\ell$-term are non-determinative and hence do not contribute to its yield. The
non-determinative atoms that may occur before the first determinative atom are
encoded as a $\true$-term. A $\true$-$*$-term is the conjunction of a
$\true$-term encoding such atoms and a $*$-term, which contains only
conjunctions and disjunctions of $\ell$-terms. We could also have encoded such
atoms as an $\false$-term and then taken the disjunction with a $*$-term to
obtain a term with the same semantics. We consider $\ell$-terms to be `basic'
in $*$-terms in the sense that they are the smallest grammatical unit that
influences the yield of the $*$-term.

The $\ell$-terms in $\SNF$ are more complex than those in $\FEL$ Normal Form,
because short-circuiting allows for the possibility of evaluating different
non-determinative atoms depending on the yield of the determinative atom. This
is also the reason why the $\true$-terms and the $\false$-terms are more
complex. Although the atoms occurring in them are not determinative, their
yield can influence which atoms in the $\true$-term ($\false$-term) are
evaluated next.

We use $P^\true$, $P^*$, etc.~both to denote grammatical categories and as
variables for terms in those categories. The remainder of this section is
concerned with defining and proving correct the normalization function $\nfs:
\ST \to \SNF$. We will define $\nfs$ recursively using the functions
\begin{equation*}
\nfs^n: \SNF \to \SNF \quad\text{and}\quad
\nfs^c: \SNF \times \SNF \to \SNF.
\end{equation*}
The first of these will be used to rewrite negated $\SNF$-terms to $\SNF$-terms
and the second to rewrite the conjunction of two $\SNF$-terms to an
$\SNF$-term. By \eqref{ax:scl2} we have no need for a dedicated function that
rewrites the disjunction of two $\SNF$-terms to an $\SNF$-term.

We start by defining $\nfs^n$. Analyzing the semantics of $\true$-terms and
$\false$-terms together with the definition of $\SE$ on negations, it becomes
clear that $\nfs^n$ must turn $\true$-terms into $\false$-terms and vice versa.
We also remark that $\nfs^n$ must preserve the left-associativity of the
$*$-terms in $\true$-$*$-terms, modulo the associativity within $\ell$-terms.
We define $\nfs^n: \SNF \to \SNF$ as follows, using the auxiliary function
$\nfs^n_1: P^* \to P^*$ to `push down' or `push in' the negation symbols when
negating a $\true$-$*$-term. We note that there is no ambiguity between the
different grammatical categories present in an $\SNF$-term, i.e., any
$\SNF$-term is in exactly one of the grammatical categories identified in
Definition
\ref{def:snf}.
\begin{align}
\nfs^n(\true) &= \false
  \label{eq:nfsn1} \\
\nfs^n((a \sleftand P^\true) \sleftor Q^\true) &= (a \sleftor
  \nfs^n(Q^\true)) \sleftand \nfs^n(P^\true)
  \label{eq:nfsn2} \\
\nfs^n(\false) &= \true
  \label{eq:nfsn3} \\
\nfs^n((a \sleftor P^\false) \sleftand Q^\false) &= (a \sleftand
  \nfs^n(Q^\false)) \sleftor \nfs^n(P^\false)
  \label{eq:nfsn4} \\
\nfs^n(P^\true \sleftand Q^*) &= P^\true \sleftand \nfs^n_1(Q^*)
  \label{eq:nfsn5} \\
\nfs^n_1((a \sleftand P^\true) \sleftor Q^\false) &= (\neg a \sleftand
  \nfs^n(Q^\false)) \sleftor \nfs^n(P^\true)
  \label{eq:nfsn6} \\
\nfs^n_1((\neg a \sleftand P^\true) \sleftor Q^\false) &= (a \sleftand
  \nfs^n(Q^\false)) \sleftor \nfs^n(P^\true)
  \label{eq:nfsn7} \\
\nfs^n_1(P^* \sleftand Q^d) &= \nfs^n_1(P^*) \sleftor \nfs^n_1(Q^d)
  \label{eq:nfsn8} \\
\nfs^n_1(P^* \sleftor Q^c) &= \nfs^n_1(P^*) \sleftand \nfs^n_1(Q^c).
  \label{eq:nfsn9}
\end{align}

Now we turn to defining $\nfs^c$. These definitions have a great deal of
inter-dependence so we first present the definition for $\nfs^c$ when the first
argument is a $\true$-term. We see that the conjunction of a $\true$-term with
another terms always yields a term of the same grammatical category as the
second conjunct. 
\begin{align}
\nfs^c(\true, P) &= P
  \label{eq:nfsc1} \\
\nfs^c((a \sleftand P^\true) \sleftor Q^\true, R^\true) &= (a \sleftand
  \nfs^c(P^\true, R^\true)) \sleftor \nfs^c(Q^\true, R^\true)
  \label{eq:nfsc2} \\
\nfs^c((a \sleftand P^\true) \sleftor Q^\true, R^\false) &= (a \sleftor
  \nfs^c(Q^\true, R^\false)) \sleftand \nfs^c(P^\true, R^\false)
  \label{eq:nfsc3} \\
\nfs^c((a \sleftand P^\true) \sleftor Q^\true, R^\true \sleftand S^*) &=
  \nfs^c((a \sleftand P^\true) \sleftor Q^\true, R^\true) \sleftand S^*
  \label{eq:nfsc4}
\end{align}

For defining $\nfs^c$ where the first argument is an $\false$-term, we make use
of \eqref{ax:scl6}. This immediately shows that the conjunction of an
$\false$-term with another term is itself an $\false$-term.
\begin{align}
\nfs^c(P^\false, Q) &= P^\false
  \label{eq:nfsc5}
\end{align}

The case where the first conjunct is a $\true$-$*$-term and the second conjunct
is a $\true$-term is defined next. We will use an auxiliary function,
$\nfs^c_1: P^* \times P^\true \to P^*$, to turn conjunctions of a $*$-term with
a $\true$-term into $*$-terms. Together with \eqref{ax:scl7} this allows us to
define $\nfs^c$ for this case.
\begin{align}
\nfs^c(P^\true \sleftand Q^*, R^\true) &= P^\true \sleftand
  \nfs^c_1(Q^*, R^\true) 
  \label{eq:nfsc6} \\
\nfs^c_1((a \sleftand P^\true) \sleftor Q^\false, R^\true) &= (a \sleftand
  \nfs^c(P^\true, R^\true)) \sleftor Q^\false
  \label{eq:nfsc7} \\
\nfs^c_1((\neg a \sleftand P^\true) \sleftor Q^\false, R^\true) &= (\neg a
  \sleftand \nfs^c(P^\true, R^\true)) \sleftor Q^\false
  \label{eq:nfsc8} \\
\nfs^c_1(P^* \sleftand Q^d, R^\true) &= P^* \sleftand \nfs^c_1(Q^d, R^\true)
  \label{eq:nfsc9} \\
\nfs^c_1(P^* \sleftor Q^c, R^\true) &= \nfs^c_1(P^*, R^\true) \sleftor
  \nfs^c_1(Q^c, R^\true)
  \label{eq:nfsc10}
\end{align}

When the second conjunct is an $\false$-term, the result will naturally be an
$\false$-term itself. So we need to convert the $\true$-$*$-term to an
$\false$-term. Using \eqref{ax:scl7} we reduce this problem to converting a
$*$-term to an $\false$-term, for which we use the auxiliary function
$\nfs^c_2: P^* \times P^\false \to P^\false$.
\begin{align}
\nfs^c(P^\true \sleftand Q^*, R^\false) &= \nfs^c(P^\true, \nfs^c_2(Q^*,
  R^\false))
  \label{eq:nfsc11} \\
\nfs^c_2((a \sleftand P^\true) \sleftor Q^\false, R^\false) &= (a \sleftor
  Q^\false) \sleftand \nfs^c(P^\true, R^\false)
  \label{eq:nfsc12} \\
\nfs^c_2((\neg a \sleftand P^\true) \sleftor Q^\false, R^\false) &= (a
  \sleftor \nfs^c(P^\true, R^\false)) \sleftand Q^\false
  \label{eq:nfsc13} \\
\nfs^c_2(P^* \sleftand Q^d, R^\false) &= \nfs^c_2(P^*, \nfs^c_2(Q^d,
  R^\false))
  \label{eq:nfsc14} \\
\nfs^c_2(P^* \sleftor Q^c, R^\false) &= \nfs^c_2(\nfs^n(\nfs^c_1(P^*,
  \nfs^n(R^\false))), \nfs^c_2(Q^c, R^\false))
  \label{eq:nfsc15}
\end{align}

Finally we are left with conjunctions of two $\true$-$*$-terms, thus completing
the definition of $\nfs^c$.  We use the auxiliary function $\nfs^c_3: P^*
\times P^\true \sleftand P^* \to P^*$ to ensure that the result is a
$\true$-$*$-term.
\begin{align}
\nfs^c(P^\true \sleftand Q^*, R^\true \sleftand S^*) &= P^\true \sleftand 
  \nfs^c_3(Q^*, R^\true \sleftand S^*)
  \label{eq:nfsc16} \\
\nfs^c_3(P^*, Q^\true \sleftand R^\ell) &= \nfs^c_1(P^*, Q^\true) \sleftand
  R^\ell
  \label{eq:nfsc17} \\
\nfs^c_3(P^*, Q^\true \sleftand (R^* \sleftand S^d)) &= \nfs^c_3(P^*, Q^\true
  \sleftand R^*) \sleftand S^d
  \label{eq:nfsc18} \\
\nfs^c_3(P^*, Q^\true \sleftand (R^* \sleftor S^c)) &= \nfs^c_1(P^*, Q^\true)
  \sleftand (R^* \sleftor S^c)
  \label{eq:nfsc19} 
\end{align}

As promised, we now define the normalization function $\nfs: \ST \to \SNF$
recursively, using $\nfs^n$ and $\nfs^c$, as follows.
\begin{align}
\nfs(a) &= \true \sleftand ((a \sleftand \true) \sleftor \false)
  \label{eq:nfs1} \\
\nfs(\true) &= \true
  \label{eq:nfs2} \\
\nfs(\false) &= \false
  \label{eq:nfs3} \\
\nfs(\neg P) &= \nfs^n(\nfs(P))
  \label{eq:nfs4} \\
\nfs(P \sleftand Q) &= \nfs^c(\nfs(P), \nfs(Q))
  \label{eq:nfs5} \\
\nfs(P \sleftor Q) &= \nfs^n(\nfs^c(\nfs^n(\nfs(P)), \nfs^n(\nfs(Q))))
  \label{eq:nfs6}
\end{align}

\begin{restatable}{theorem}{thmnfscorrect}
\label{thm:nfs}
For any $P \in \ST$, $\nfs(P)$ terminates, $\nfs(P) \in \SNF$ and $\EqFSCL
\vdash \nfs(P) = P$.
\end{restatable}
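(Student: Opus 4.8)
The plan is to follow the same route used for the $\FFEL$-version, Theorem~\ref{thm:nf}: first establish, for each auxiliary function, a single combined lemma asserting (i) termination on every input of its declared domain, (ii) that the output lies in the declared grammatical category of Definition~\ref{def:snf}, and (iii) that the output is derivably equal under $\EqFSCL$ to the term the function is meant to compute --- namely $\neg(\cdot)$ for $\nfs^n$ and $\nfs^n_1$, and the left-sequential conjunction of the two arguments for $\nfs^c,\nfs^c_1,\nfs^c_2,\nfs^c_3$. Theorem~\ref{thm:nfs} then follows by a structural induction on $P\in\ST$ along the clauses \eqref{eq:nfs1}--\eqref{eq:nfs6}.

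I would dispose of $\nfs^n$ and $\nfs^n_1$ first, since they are the easy case: every defining equation recurses on a proper subterm, so a single structural induction on $\SNF$-terms yields termination, the category claim ($\nfs^n$ swaps $\true$-terms and $\false$-terms, and sends a $\true$-$*$-term $P^\true\sleftand Q^*$ to $P^\true\sleftand\nfs^n_1(Q^*)$ with $\nfs^n_1$ preserving the $*$-term shape modulo the associativity inside $\ell$-terms), and the equation $\EqFSCL\vdash\nfs^n(P)=\neg P$, the last being essentially a push-the-negations-inward argument from \eqref{ax:scl1}, \eqref{ax:scl2}, \eqref{ax:scl3} and their duals. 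These lemmas stand alone and feed into everything below.

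The real work is $\nfs^c$ together with $\nfs^c_1,\nfs^c_2,\nfs^c_3$, which are mutually recursive, so termination needs a well-founded measure rather than plain structural recursion. I would layer the functions by what they call. First, $\nfs^c$ with a $\true$-term as first argument calls only itself, on a structurally smaller $\true$-term, with a secondary drop in the second argument for \eqref{eq:nfsc4}, so it terminates by the lexicographic measure (size of first argument, size of second argument). Next, $\nfs^c_1$ recurses on smaller $*$-terms and otherwise invokes only that restricted form of $\nfs^c$, and $\nfs^c_3$ recurses on the $*$-part of its second argument and otherwise invokes only $\nfs^c_1$, so both terminate. The delicate case is $\nfs^c_2$: in \eqref{eq:nfsc15} the outer $\nfs^c_2$ is applied to $\nfs^n(\nfs^c_1(P^*,\nfs^n(R^\false)))$, which in general is not a subterm of $P^*\sleftor Q^c$. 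Here I would measure $\nfs^c_2(S^*,R^\false)$ by the number of $\ell$-terms (equivalently, determinative atoms) occurring in $S^*$, observing that $\nfs^c_1$ merely pushes a $\true$-term into the rightmost $\ell$-term and that $\nfs^n$ preserves the $\ell$-term count, so this count strictly decreases in both recursive calls of \eqref{eq:nfsc15} as well as in \eqref{eq:nfsc14}. Finally, full $\nfs^c$ handles a $\false$-term first argument directly, dispatches every $\true$-$*$-term first argument to one of $\nfs^c_1,\nfs^c_2,\nfs^c_3$, and otherwise recurses inside the restricted $\true$-term layer, so it inherits termination. Along the way each clause must be checked to land in the correct grammatical category and to meet its equational obligation; the equational steps are routine associativity, unit and De Morgan manipulations --- using \eqref{ax:scl7} (together with the straightforward fact that $R\sleftand Q$ reduces to $R$ for any $\false$-term $R$) for the re-associations involving $\false$-terms --- except for the steps that reorder atoms or invoke distributivity, notably inside $\nfs^c_2$, which rely on the distributive axiom \eqref{ax:scl10s} through Lemma~\ref{lem:seqs}.

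With the auxiliary lemmas in hand, the theorem is a structural induction on $P\in\ST$: the cases $\true$ and $\false$ are immediate; $P\equiv a$ is a short calculation with the unit and negation axioms reducing $\true\sleftand((a\sleftand\true)\sleftor\false)$ to $a$; $P\equiv\neg R$ applies the $\nfs^n$-lemma to $\nfs(R)$; $P\equiv R\sleftand S$ applies the $\nfs^c$-lemma to $\nfs(R)$ and $\nfs(S)$; and $P\equiv R\sleftor S$ combines the $\nfs^n$- and $\nfs^c$-lemmas through the De Morgan axiom \eqref{ax:scl2}, exactly as \eqref{eq:nfs6} prescribes. I expect the main obstacle to be isolating a single well-founded measure for the $\nfs^c$-family that simultaneously absorbs the $\true$-term recursions of \eqref{eq:nfsc2}--\eqref{eq:nfsc4}, the $\ell$-term-count recursion of \eqref{eq:nfsc14}, and the non-structural recursion of \eqref{eq:nfsc15}; by comparison the grammatical bookkeeping, though voluminous, is mechanical, and the derivable-equality obligations are each one or two lines once Lemma~\ref{lem:seqs} and the $\sleftand$/$\sleftor$ duality are available.
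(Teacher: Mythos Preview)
Your proposal is correct and follows essentially the same route as the paper: establish the properties of $\nfs^n$ first, then layer the $\nfs^c$-family by the grammatical category of the first argument (Lemmas~\ref{lem:nfsc1}--\ref{lem:nfsc5}), using the number of $\ell$-terms in the $*$-term as the induction measure throughout (in particular for the non-structural recursion in \eqref{eq:nfsc15}), and finish with a structural induction on $P$. The paper is somewhat terser than you: it waves termination away as mirroring the grammatical-category inductions, and it packages several of the equational steps you call ``one or two lines'' into an additional preparatory lemma (Lemma~\ref{lem:seqs2}) of derived identities, so expect slightly more equational bookkeeping than your last sentence suggests.
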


In Appendix \ref{sec:sclnf} we first prove a number of lemmas showing that the
definitions $\nfs^n$ and $\nfs^c$ are correct and use those to prove the
theorem. We have chosen to use a function rather than a rewriting system to
prove the correctness of the normal form, because the author lacks experience
with term rewriting systems and because using a function relieves us of the
task of proving confluence for the underlying rewriting system.

In Section \ref{sec:ffelfscl} we show that $\FFEL$ is a sublogic of $\FSCL$ and
that any $\FEL$-term can be rewritten to an $\SCL$-term with the same
semantics. There we will pay special attention to the application of that
translation to terms in $\FEL$ Normal Form.

\section{Tree Structure}
In Section \ref{sec:sclcpl} we will prove that $\EqFSCL$ axiomatizes $\FSCLT$
by showing that if $P \in \SNF$ we can invert $\SE(P)$. To do this we need to
prove several structural properties of the trees in the image of $\SE$. In the
definition of $\SE$ we can see how $\SE(P \sleftand Q)$ is assembled from
$\SE(P)$ and $\SE(Q)$ and similarly for $\SE(P \sleftor Q)$. To decompose these
trees we will introduce some notation. The trees in the image of $\SE$ are all
finite binary trees over $A$ with leaves in $\{\true, \false\}$, i.e.,
$\SE[\ST] \subseteq \T$. We will now also consider the set $\Tone$ of binary
trees over $A$ with leaves in $\{\true, \false, \Box\}$, where `$\Box$' is
pronounced `box'. The box will be used as a placeholder when composing or
decomposing trees. Replacement of the leaves of trees in $\Tone$ by trees in
$\T$ or $\Tone$ is defined analogous to replacement for trees in $\T$, adopting
the same notational conventions.

For example we have by definition of $\SE$ that $\SE(P \sleftand Q)$ can be
decomposed as
\begin{equation*}
\SE(P)\sub{\true}{\Box}\sub{\Box}{\SE(Q)},
\end{equation*}
where $\SE(P)\sub{\true}{\Box} \in \Tone$ and $\SE(Q) \in \T$. We note that
this only works because the trees in the image of $\SE$, or in $\T$ in general,
do not contain any boxes. We start by analyzing the $\SE$-image of
$\ell$-terms.

\begin{lemma}[Structure of $\ell$-terms]
\label{lem:slitstf}
There is no $\ell$-term $P$ such that $\SE(P)$ can be decomposed as
$X\sub{\Box}{Y}$ with $X \in \Tone$ and $Y \in \T$, where $X \neq \Box$, but
does contain $\Box$, and $Y$ contains occurrences of both $\true$ and
$\false$.
\end{lemma}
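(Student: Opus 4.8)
The plan is to follow the same two-step strategy as in the proof of Lemma~\ref{lem:litstf} for $\FEL$: first determine the shape of $\SE(P)$ for an $\ell$-term $P$, and then show that the decomposition forbidden by the statement would force a subtree containing both $\true$ and $\false$ to occur strictly below the root of $\SE(P)$, which cannot happen.

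For the first step I would record the auxiliary observation that for every $\true$-term $Q^\true$ the tree $\SE(Q^\true)$ has only $\true$-leaves, and for every $\false$-term $Q^\false$ the tree $\SE(Q^\false)$ has only $\false$-leaves; each follows by a straightforward induction on the corresponding grammar, using that replacing an absent leaf has no effect. Unfolding the definition of $\SE$ on the two productions for $P^\ell$ then gives that $\SE\bigl((a \sleftand Q^\true) \sleftor Q^\false\bigr)$ is the tree with root $a$, left branch $\SE(Q^\true)$ and right branch $\SE(Q^\false)$, and symmetrically that $\SE\bigl((\neg a \sleftand Q^\true) \sleftor Q^\false\bigr)$ has the two branches swapped. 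In every case one branch from the root of $\SE(P)$ contains only $\true$ and the other only $\false$; in particular no proper subtree of $\SE(P)$ contains both $\true$ and $\false$.

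For the second step, suppose toward a contradiction that $\SE(P) = X\sub{\Box}{Y}$ with $X \in \Tone$, $Y \in \T$, $X$ containing but not equal to $\Box$, and $Y$ containing both $\true$ and $\false$. Since $X \neq \Box$ and $X$ contains $\Box$, the root of $X$ --- an atom node --- is the root of $\SE(P)$, so each occurrence of $Y$ inside $X\sub{\Box}{Y}$ is a proper subtree of $\SE(P)$. By the first step such a subtree cannot contain both $\true$ and $\false$, contradicting the assumption on $Y$. Hence no such decomposition exists.

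The argument is short and essentially combinatorial; the only points needing care are the computation of $\SE(P^\ell)$ from the definition of $\SE$ and the leaf-replacement conventions (including checking that in the $\neg a$ case the right branch really is $\SE(Q^\true)$, which uses that $\SE(Q^\true)$ contains no $\false$), and the remark that "$X \neq \Box$ with $X$ containing $\Box$" forces every $\Box$-leaf of $X$ to lie at depth at least one, so that each copy of $Y$ is genuinely a \emph{proper} subtree. I do not anticipate a real obstacle here: this lemma is the short base-case analogue of Lemma~\ref{lem:litstf}, differing only in the more complex shape of $\SNF$ $\ell$-terms compared to $\FNF$ ones.
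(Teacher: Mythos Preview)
Your proposal is correct and follows essentially the same approach as the paper: both argue that one branch from the root of $\SE(P)$ contains only $\true$-leaves and the other only $\false$-leaves, so any $Y$ containing both must include the root, forcing $X = \Box$. You simply fill in more detail than the paper does (explicitly computing $\SE$ on the two $\ell$-term productions and noting the leaf structure of $\true$- and $\false$-terms), whereas the paper's proof just asserts the key structural fact after ``analyzing the grammar.''
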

\begin{proof}
Let $P$ be some $\ell$-term. When we analyze the grammar of $P$ we find that
one branch from the root of $\SE(P)$ will only contain $\true$ and not $\false$
and the other branch vice versa. Hence if $\SE(P) = X\sub{\Box}{Y}$ and $Y$
contains occurrences of both $\true$ and $\false$, then $Y$ must contain the
root and hence $X = \Box$.
\end{proof}

By definition a $*$-term contains at least one $\ell$-term and hence for any
$*$-term $P$, $\SE(P)$ contains both $\true$ and $\false$. The following lemma
provides the $\SE$-image of the rightmost $\ell$-term in a $*$-term to witness
this fact.

\begin{lemma}[Determinativeness]
\label{lem:sperttf}
For all $*$-terms $P$, $\SE(P)$ can be decomposed as $X\sub{\Box}{Y}$ with $X
\in \Tone$ and $Y \in \T$ such that $X$ contains $\Box$ and $Y = \SE(Q)$ for
some $\ell$-term $Q$. Note that $X$ may be $\Box$. We will refer to $Y$ as the
witness for this lemma for $P$.
\end{lemma}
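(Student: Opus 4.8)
The plan is to follow the proof of Lemma~\ref{lem:pert} almost verbatim, since the grammar of $*$-terms for $\SNF$ and the evaluation function $\SE$ are structurally parallel to their $\FEL$ counterparts; the only difference is that $\SE$ substitutes a single copy of $\SE(Q)$ into the $\true$-leaves (for $\sleftand$) or the $\false$-leaves (for $\sleftor$) of $\SE(P)$, rather than two modified copies. I would argue by induction on the complexity of the $*$-term $P$ modulo the complexity of $\ell$-terms, so that each maximal $\ell$-subterm is treated as a single unit; this makes the base case precisely the case where $P$ is an $\ell$-term, and guarantees that in the inductive step the right operand has strictly smaller complexity.

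For the base case, $P$ is an $\ell$-term, and $\SE(P) = \Box\sub{\Box}{\SE(P)}$ is already the required decomposition: take $X = \Box$ (which contains $\Box$) and $Y = \SE(P) = \SE(Q)$ with $Q := P$. (Lemma~\ref{lem:slitstf} confirms that no finer decomposition into $X\sub{\Box}{Y}$ with $X \ne \Box$ and $Y$ containing both $\true$ and $\false$ is available here.) For the inductive step, a $*$-term is either $P' \sleftand Q$ with $Q \in P^d$ or $P' \sleftor Q$ with $Q \in P^c$; in both cases $Q$ is itself a $*$-term of strictly smaller complexity modulo $\ell$-terms, so the induction hypothesis provides $\SE(Q) = X\sub{\Box}{Y}$ with $X \in \Tone$ containing $\Box$ and $Y = \SE(R)$ for some $\ell$-term $R$. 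I would spell out the $\sleftand$ case (the $\sleftor$ case being identical with $\true$ replaced by $\false$): by definition of $\SE$ on $\sleftand$,
\[
\SE(P' \sleftand Q) = \SE(P')\sub{\true}{\SE(Q)} = \SE(P')\sub{\true}{X\sub{\Box}{Y}} = \SE(P')\sub{\true}{X}\sub{\Box}{Y},
\]
where the last equality holds because $\SE(P') \in \T$ contains no boxes, so every box of $\SE(P')\sub{\true}{X}$ originates inside a copy of $X$. Now $\SE(P')\sub{\true}{X} \in \Tone$, and since $P'$ is a $*$-term $\SE(P')$ contains both $\true$ and $\false$, so at least one $\true$-leaf is replaced by a copy of $X$ and hence $\SE(P')\sub{\true}{X}$ contains $\Box$. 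Thus $(\SE(P')\sub{\true}{X},\, Y)$ is the decomposition sought, with witness $Y = \SE(R)$.

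I do not expect a genuine obstacle here; the content is a mechanical unfolding of the definition of $\SE$ plus the observation that box-substitutions compose cleanly when the outer tree is box-free. The two points that deserve a sentence of care are (i) the claim that all boxes of $\SE(P')\sub{\true}{X}$ come from $X$, which rests on trees in $\T$ being box-free together with the left-associativity convention for replacement fixed earlier, and (ii) the appeal, in showing the outer tree contains $\Box$, to the remark immediately preceding the lemma that $\SE$ of any $*$-term contains both $\true$ and $\false$.
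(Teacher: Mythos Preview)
Your proposal is correct and follows essentially the same approach as the paper: induction on $*$-terms modulo $\ell$-terms, with the trivial decomposition $\Box\sub{\Box}{\SE(P)}$ in the base case and, in the inductive step, applying the induction hypothesis to the right operand $Q$ and using that $\SE(P')$ is box-free to pull the replacement $\sub{\Box}{Y}$ outside. You are in fact slightly more careful than the paper in explicitly verifying that the outer tree $\SE(P')\sub{\true}{X}$ still contains $\Box$; note that instead of invoking the informal remark preceding the lemma you could equally well apply the induction hypothesis to $P'$ itself to conclude $\SE(P')$ contains $\true$.
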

\begin{proof}
By induction on the complexity of $*$-terms $P$ modulo the complexity of
$\ell$-terms. In the base case $P$ is an $\ell$-term and $\SE(P) =
\Box\sub{\Box}{\SE(P)}$ is the desired decomposition by Lemma
\ref{lem:slitstf}. For the induction we have to consider both $\SE(P \sleftand
Q)$ and $\SE(P \sleftor Q)$.

We start with $\SE(P \sleftand Q)$ and let $X\sub{\Box}{Y}$ be the
decomposition for $\SE(Q)$ which we have by induction hypothesis. Since by
definition of $\SE$ on $\sleftand$ we have
\begin{equation*}
\SE(P \sleftand Q) = \SE(P)\sub{\true}{\SE(Q)}
\end{equation*}
we also have
\begin{equation*}
\SE(P \sleftand Q) = \SE(P)\sub{\true}{X\sub{\Box}{Y}} =
\SE(P)\sub{\true}{X}\sub{\Box}{Y}.
\end{equation*}
The last equality is due to the fact that $\SE(P)$ does not contain any boxes.
This gives our desired decomposition. The case for $\SE(P \sleftor Q)$ is
analogous.
\end{proof}

The following lemma illustrates another structural property of trees in the
image of $*$-terms under $\SE$, namely that the left branch of any
determinative atom in such a tree is different from its right branch.

\begin{lemma}[Non-decomposition]
\label{lem:snondectf}
There is no $*$-term $P$ such that $\SE(P)$ can be decomposed as
$X\sub{\Box}{Y}$ with $X \in \Tone$ and $Y \in \T$, where $X \neq \Box$ and $X$
contains $\Box$, but not $\true$ or $\false$.
\end{lemma}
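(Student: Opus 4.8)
The statement is the $\SCL$ counterpart of Lemma~\ref{lem:nondectf}, and my plan is to follow that proof almost verbatim: induction on $P$ modulo the complexity of $\ell$-terms, with Lemma~\ref{lem:sperttf} (Determinativeness) taking the role of Lemma~\ref{lem:pert}, Lemma~\ref{lem:slitstf} (Structure of $\ell$-terms) that of Lemma~\ref{lem:litstf}, and $\SE$ that of $\FE$. In the base case $P$ is an $\ell$-term; by Lemma~\ref{lem:sperttf} (with trivial outer context $\Box$) $\SE(P)$ contains both $\true$ and $\false$, so in a decomposition $\SE(P) = X\sub{\Box}{Y}$ in which $X$ carries no truth-value leaf, all leaves of $\SE(P)$ come from copies of $Y$, forcing $Y$ to contain both $\true$ and $\false$; this is exactly what Lemma~\ref{lem:slitstf} forbids when $X \neq \Box$ and $X$ contains $\Box$. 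For the induction I would assume the statement for all $*$-terms of strictly smaller complexity (again modulo $\ell$-terms) and treat $P = P_1 \sleftand Q$ with $P_1 \in P^*$ and $Q \in P^d$; the case $P = P_1 \sleftor Q$ is symmetric, using $\SE(P_1 \sleftor Q) = \SE(P_1)\sub{\false}{\SE(Q)}$.

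Assume towards a contradiction that $\SE(P_1 \sleftand Q) = X\sub{\Box}{Y}$ with $X \neq \Box$, $X$ containing $\Box$ but neither $\true$ nor $\false$. Then every leaf of $X$ is $\Box$, so $\SE(P_1 \sleftand Q)$ is a copy of $X$ with a copy of $Y$ grafted at each leaf, and every truth-value leaf of $\SE(P_1 \sleftand Q)$ lies inside some copy of $Y$. Applying Lemma~\ref{lem:sperttf} to $P_1$, write $\SE(P_1) = X'\sub{\Box}{R}$ with $X' \in \Tone$ containing $\Box$ and $R = \SE(Q')$ for an $\ell$-term $Q'$; since $\SE(P_1 \sleftand Q) = \SE(P_1)\sub{\true}{\SE(Q)}$, the tree $\SE(P_1 \sleftand Q)$ has $T := R\sub{\true}{\SE(Q)}$ as a subtree. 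By the structure of $R = \SE(Q')$ established in the proof of Lemma~\ref{lem:slitstf}, one branch below the root of $R$ carries only $\true$-leaves and the other only $\false$-leaves, so in $T$ one branch below the root is built entirely from copies of $\SE(Q)$ --- each of which contains both $\true$ and $\false$ by Lemma~\ref{lem:sperttf} --- while the other is untouched and carries only $\false$-leaves. (This is the only point where the argument departs from the $\FEL$ case: $\SE$ on $\sleftand$ expands only the $\true$-leaves, so no copies of an $\SE(Q)\sub{\true}{\false}$-style tree appear on the other side.)

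From here I would look at the root of $T$, which is an atom. If that atom belonged to $X$, then below it $X$ has only $\Box$-leaves, so the truth-value leaves of the two branches of $T$ all come from copies of one and the same tree $Y$ --- impossible, since the $\SE(Q)$-side contributes a $\true$-leaf while the other side contributes only $\false$-leaves. Hence the root of $T$ lies inside a copy of $Y$; as $Y \in \T$ contains no $\Box$, that copy of $Y$ contains all of $T$, and in particular strictly contains a copy of $\SE(Q)$. One then argues (using Lemma~\ref{lem:slitstf}, or the induction hypothesis applied to $Q$, to rule out that any copy of $\SE(Q)$ straddles the boundary between $X$ and the grafted copies of $Y$) that every copy of $\SE(Q)$ created by the $\sub{\true}{\SE(Q)}$-step lies strictly inside a copy of $Y$; folding each such copy back to a single $\true$-leaf turns $Y$ into some $Z \in \T$ and leaves $X$ untouched, yielding $\SE(P_1) = X\sub{\Box}{Z}$ --- a decomposition of the shape forbidden by the induction hypothesis for $P_1$. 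I expect this last reconstruction step to be the main obstacle: everything else is a direct transcription of the proof of Lemma~\ref{lem:nondectf}, but making precise why the folding does not disturb $X$ requires the careful bookkeeping just sketched.
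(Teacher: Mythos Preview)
Your proposal is correct and follows essentially the same route as the paper: induction on $*$-terms modulo $\ell$-terms, the witness $R$ from Lemma~\ref{lem:sperttf}, the subtree $R\sub{\true}{\SE(Q)}$, and the contradiction via a forbidden decomposition of $\SE(P_1)$. The paper compresses your case split on ``root of $T$ in $X$ versus in a copy of $Y$'' into a single appeal to Lemma~\ref{lem:slitstf}, and it asserts the existence of $Z$ with $\SE(P_1) = X\sub{\Box}{Z}$ in one line where you spell out the folding argument; your caution about that reconstruction step is justified, and the bookkeeping you sketch (together with the injectivity of $\sub{\true}{\SE(Q)}$ on $\T$, which follows since $\SE(Q)$ contains both truth values) is exactly what is needed to make the paper's sentence rigorous.
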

\begin{proof}
By induction on $P$ modulo the complexity of $\ell$-terms. The base case covers
$\ell$-terms and follows immediately from Lemma~\ref{lem:sperttf} ($\SE(P)$
contains occurrences of both $\true$ and $\false$) and Lemma~\ref{lem:slitstf}
(no non-trivial decomposition exists that contains both). For the induction we
assume that the lemma holds for all $*$-terms with lesser complexity than $P
\sleftand Q$ and $P \sleftor Q$.

We start with the case for $\SE(P \sleftand Q)$. Suppose for contradiction that
$\SE(P \sleftand Q) = X\sub{\Box}{Y}$ with $X \neq \Box$ and $X$ not containing
any occurrences of $\true$ or $\false$. Let $R$ be a witness of Lemma
\ref{lem:sperttf} for $P$. Now note that $\SE(P \sleftand Q)$ has a subtree
$R\sub{\true}{\SE(Q)}$. Because $Y$ must contain both the occurrences of
$\false$ in the one branch of $R\sub{\true}{\SE(Q)}$ as well as the occurrences
of $\SE(Q)$ in the other (because they contain $\true$ and $\false$), Lemma
\ref{lem:slitstf} implies that $Y$ must (strictly) contain $\SE(Q)$. Hence
there is a $Z \in \T$ such that $\FE(P) = X\sub{\Box}{Z}$, which violates the
induction hypothesis. The case for $\SE(P \sleftor Q)$ is symmetric.
\end{proof}

We now arrive at two crucial definitions for our completeness proof. When
considering $*$-terms, we already know that $\SE(P \sleftand Q)$ can be
decomposed as
\begin{equation*}
\SE(P)\sub{\true}{\Box}\sub{\Box}{\SE(Q)}.
\end{equation*}
Our goal now is to give a definition for a type of decomposition so that this
is the only such decomposition for $\SE(P \sleftand Q)$. We also ensure that
$\SE(P \sleftor Q)$ does not have a decomposition of that type, so that we can
distinguish $\SE(P \sleftand Q)$ from $\SE(P \sleftor Q)$. Similarly, we need
to define another type of decomposition so that $\SE(P \sleftor Q)$ can only be
decomposed as 
\begin{equation*}
\SE(P)\sub{\false}{\Box}\sub{\Box}{\SE(Q)}
\end{equation*}
and that $\SE(P \sleftand Q)$ does not have a decomposition of that type.

\begin{definition}
The pair $(Y, Z) \in \Tone \times \T$ is a \textbf{candidate conjunction
decomposition (ccd)} of $X \in \T$, if
\begin{itemize}
\item $X = Y\sub{\Box}{Z}$,
\item $Y$ contains $\Box$,
\item $Y$ contains $\false$, but not $\true$, and
\item $Z$ contains both $\true$ and $\false$.
\end{itemize}
Similarly, $(Y, Z)$ is a \textbf{candidate disjunction decomposition (cdd)} of
$X$, if
\begin{itemize}
\item $X = Y\sub{\Box}{Z}$,
\item $Y$ contains $\Box$,
\item $Y$ contains $\true$, but not $\false$, and
\item $Z$ contains both $\true$ and $\false$.
\end{itemize}
\end{definition}

We note that the ccd and cdd are not necessarily the decompositions we are
looking for, because, for example, $\SE((P \sleftand Q) \sleftand R)$ has a ccd
$(\SE(P)\sub{\true}{\Box}, \SE(Q \sleftand R))$, whereas the decomposition we
need is $(\SE(P \sleftand Q)\sub{\true}{\Box}, \SE(R))$. Therefore we refine
these definitions to obtain the decompositions we seek.

\begin{definition}
The pair $(Y, Z) \in \Tone \times \T$ is a \textbf{conjunction decomposition
(cd)} of $X \in \T$, if it is a ccd of $X$ and there is no other ccd $(Y', Z')$
of $X$ where the depth of $Z'$ is smaller than that of $Z$. Similarly, $(Y, Z)$
is a \textbf{disjunction decomposition (dd)} of $X$, if it is a cdd of $X$ and
there is no other cdd $(Y', Z')$ of $X$ where the depth of $Z'$ is smaller than
that of $Z$.
\end{definition}

\begin{theorem}
\label{thm:scddd}
For any $*$-term $P \sleftand Q$, i.e., with $P \in P^*$ and $Q \in P^d$,
$\SE(P \sleftand Q)$ has the (unique) cd
\begin{equation*}
(\SE(P)\sub{\true}{\Box}, \SE(Q))
\end{equation*}
and no dd. For any $*$-term $P \sleftor Q$, i.e., with $P \in P^*$ and $Q \in
P^c$, $\SE(P \sleftor Q)$ has no cd and its (unique) dd is
\begin{equation*}
(\SE(P)\sub{\false}{\Box}, \SE(Q)).
\end{equation*}
\end{theorem}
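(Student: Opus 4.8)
The proof will closely mirror the proof of Theorem~\ref{thm:cddd} for $\FFEL$, adapting it to the short-circuit setting where the relevant decomposition is $\SE(P)\sub{\true}{\Box}\sub{\Box}{\SE(Q)}$ rather than the two-box composition used for $\leftand$ in $\FFEL$. I would treat the two halves ($P \sleftand Q$ and $P \sleftor Q$) separately, and within each half establish two things: that the stated decomposition is a valid decomposition of the appropriate type, and that it is the unique one, while ruling out any decomposition of the other type. By the duality of $\sleftand$ and $\sleftor$ (via \eqref{ax:scl1}--\eqref{ax:scl3}) and the symmetry of the ccd/cdd definitions, only the case for $P \sleftand Q$ needs full detail; the case for $P \sleftor Q$ follows by an analogous argument with $\true$ and $\false$ interchanged.

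\textbf{The $P \sleftand Q$ case, cd.} First I would verify that $(\SE(P)\sub{\true}{\Box}, \SE(Q))$ is a ccd of $\SE(P \sleftand Q)$: the first condition holds by the definition of $\SE$ on $\sleftand$; the second is immediate; $\SE(P)\sub{\true}{\Box}$ contains $\false$ because $P$ is a $*$-term (so by Lemma~\ref{lem:sperttf} its $\SE$-image contains a $\false$ that survives the replacement of $\true$ by $\Box$) and contains no $\true$ by construction; and $\SE(Q)$ contains both $\true$ and $\false$ by Lemma~\ref{lem:sperttf} applied to the $*$-term $Q$ (note $Q \in P^d$, hence $Q$ is a $*$-term). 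Next, for uniqueness I would show that for any ccd $(Y, Z)$, the tree $Z$ either contains or is contained in $\SE(Q)$: if neither held, then $Y$ would have to contain the $\true$ and $\false$ occurrences that Lemma~\ref{lem:sperttf} guarantees inside $\SE(Q)$, contradicting that $Y$ contains no $\true$. It then suffices to rule out $Z$ strictly contained in $\SE(Q)$: if $\SE(Q) = U\sub{\Box}{Z}$ with $U \neq \Box$ containing $\Box$, then by Lemma~\ref{lem:snondectf} (non-decomposition for the $*$-term $Q$) $U$ must contain $\true$ or $\false$, and since $Y = \SE(P)\sub{\true}{U\sub{\Box}{\Box}}$ this forces $Y$ to contain $\true$ or $\false$, contradicting that $(Y,Z)$ is a ccd. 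Hence $Z = \SE(Q)$ and the decomposition is unique; since all ccds have the same second component, the cd is exactly $(\SE(P)\sub{\true}{\Box}, \SE(Q))$.

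\textbf{The $P \sleftand Q$ case, no dd.} Here I would show there is no cdd at all for $\SE(P \sleftand Q)$. Suppose $(Y, Z)$ were a cdd. As before $Z$ must contain or be contained in $\SE(Q)$; if it were strictly contained, Lemma~\ref{lem:snondectf} would force $Y$ to contain $\true$ or $\false$, impossible. So $Z$ (strictly or not) contains $\SE(Q)$; but a cdd requires $Y$ to contain $\true$ and not $\false$, whereas $Y = \SE(P)\sub{\true}{\Box}$-style, wrapped around parts of $\SE(P)$, must contain the $\false$ from $P$'s rightmost $\ell$-term (Lemma~\ref{lem:sperttf}) unless that $\false$ landed inside $Z$ --- and one checks that $\SE(P \sleftand Q)$, by the cd just computed, does not contain a subtree of the shape a cdd would require, yielding the contradiction. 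The main obstacle I anticipate is exactly this bookkeeping of where the distinguished $\true$/$\false$ leaves of $\SE(P)$ and $\SE(Q)$ end up relative to $Y$ and $Z$ --- getting the containment arguments airtight, especially the step that concludes $Y$ inherits a forbidden leaf --- but the structure is entirely parallel to Theorem~\ref{thm:cddd}, so no genuinely new idea should be needed.
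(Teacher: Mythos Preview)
Your plan mirrors the $\FFEL$ proof too closely and misses a crucial asymmetry between the two ccd definitions. In $\FFEL$ a ccd requires $Y$ to contain \emph{neither} $\true$ nor $\false$, so once Lemma~\ref{lem:nondectf} forces $U$ to contain one of them, you are done. But in $\FSCLT$ a ccd requires $Y$ to contain $\false$ and \emph{not} $\true$. So when you write ``$U$ must contain $\true$ or $\false$, and since $Y = \SE(P)\sub{\true}{U\sub{\Box}{\Box}}$ this forces $Y$ to contain $\true$ or $\false$, contradicting that $(Y,Z)$ is a ccd,'' the case where $U$ contains only $\false$ gives no contradiction at all: $Y$ is allowed, indeed required, to contain $\false$.

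This is precisely why the paper's proof does \emph{not} mirror Theorem~\ref{thm:cddd} directly but instead proceeds by simultaneous induction on $P \sleftand Q$ and $P \sleftor Q$. In the problematic case where $U$ contains $\false$ but not $\true$, the pair $(U,Z)$ is itself a ccd of $\SE(Q)$; since $Q \in P^d$ is either an $\ell$-term or a disjunction, the induction hypothesis (disjunctions have no cd) or Lemma~\ref{lem:slitstf} ($\ell$-terms admit no such nontrivial split) disposes of it. Your proposal never sets up this induction, and without it the uniqueness argument for the cd has a genuine hole. The same asymmetry bites your ``no dd'' sketch: a cdd allows $\true$ in $Y$, so ``$Y$ contains $\true$ or $\false$, impossible'' is again not a contradiction. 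The paper handles that part by a separate argument tracking where all the $\false$-leaves of $\SE(P \sleftand Q)$ live, using the witness from Lemma~\ref{lem:sperttf} for $P$ rather than $Q$.
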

\begin{proof}
By simultaneous induction on $P \sleftand Q$ and $P \sleftor Q$ modulo the
complexity of $\ell$-terms. In the basis we have to consider, for $\ell$-terms
$P$ and $Q$, the terms $P \sleftand Q$ and $P \sleftor Q$. Both of these are
covered by the cases in the induction where the second conjunct (or disjunct)
$Q$ is an $\ell$-term. This is valid reasoning, since we don't call upon the
induction hypothesis in those cases. For the induction we assume that the
theorem holds for all $*$-terms with lesser complexity than $P \sleftand Q$ and
$P \sleftor Q$. We first treat the case for $P \sleftand Q$.

First for the cd. Note that $(\SE(P)\sub{\true}{\Box}, \SE(Q))$ is a ccd of
$\SE(P \sleftand Q)$ by definition of $\SE$ on $\sleftand$ (for the first
condition) and Lemma \ref{lem:sperttf} (for the third and fourth condition). We
also know that for any ccd $(Y, Z)$ either $Z$ contains or is contained in
$\SE(Q)$. For suppose otherwise, then $Y$ will contain an occurrence of
$\true$, namely the one we know by Lemma \ref{lem:sperttf} that $\SE(Q)$ has.
By the above it suffices to show that there is no ccd $(Y, Z)$ where $Z$ is
strictly contained in $\SE(Q)$. Suppose for contradiction that such a ccd $(Y,
Z)$ does exist.

By definition of $*$-terms $Q$ is either an $\ell$-term or a disjunction. If
$Q$ is an $\ell$-term and $Z$ is strictly contained in $\SE(Q)$ then $Z$
does not contain both $\true$ and $\false$ by Lemma \ref{lem:slitstf}.
Therefore $(\SE(P)\sub{\true}{\Box}, \SE(Q))$ is the \emph{unique} cd for
$\SE(P \sleftand Q)$.

If $Q$ is a disjunction, then if $Z$ is strictly contained in $\SE(Q)$ we can
decompose $\SE(Q)$ as $\SE(Q) = U\sub{\Box}{Z}$ for some $U \in \Tone$ that
contains but is not equal to $\Box$. By Lemma~\ref{lem:snondectf} this implies
that $U$ contains either $\true$ or $\false$. If it contains $\true$, then so
does $Y$, because 
\begin{equation*}
Y = \SE(P)\sub{\true}{U},
\end{equation*}
and $(Y, Z)$ is not a ccd for $\SE(P \sleftand Q)$. If it only contains
$\false$ then $(U, Z)$ is a ccd for $\SE(Q)$ which violates the induction
hypothesis. Therefore $(\SE(P)\sub{\true}{\Box}, \SE(Q))$ is the \emph{unique}
cd for $\SE(P \sleftand Q)$.

Now for the dd. It suffices to show that there is no cdd for $\SE(P \sleftand
Q)$. Again $Q$ is either an $\ell$-term or a disjunction. Suppose for
contradiction that $(Y, Z)$ is a cdd for $\SE(P \sleftand Q)$. If $Q$ is an
$\ell$-term, then $Z$ must contain all occurrences of $\false$ in $\SE(P
\sleftand Q)$. So in particular it must contain all occurrences of $\false$ in
$\SE(Q)$. It must also contain at least one occurrence of $\true$. Hence by
Lemma \ref{lem:slitstf} $Z$ must contain $\SE(Q)$. But then $Z$ contains all
the occurrences of $\true$ in $\SE(P \sleftand Q)$ and hence $X$ does not
contain any occurrences of $\true$.  Therefore there is no cdd for $\SE(P
\sleftand Q)$.

If $Q$ is a disjunction then $Z$ must contain all occurrences of $\false$ in
$\SE(P \sleftand Q)$. Let $R$ be a witness of Lemma \ref{lem:sperttf} for $P$.
Now note that $R\sub{\true}{\SE(Q)}$ is a subtree of $\SE(P \sleftand Q)$. Also
note that Lemma \ref{lem:slitstf} implies that there is no way to decompose
$R\sub{\true}{\SE(Q)}$ such that $R\sub{\true}{\SE(Q)} = U\sub{\Box}{V}$ for
some $U \in \Tone$ that contains but is not equal to $\Box$ and some $V \in \T$
containing occurrences of both $\SE(Q)$ and $\false$. So because $Z$ must
contain all occurrences of $\false$ in $\SE(P \sleftand Q)$, it must strictly
contain $\SE(Q)$. But all the occurrences of $\true$ in $\SE(P \sleftand Q)$
are in occurrences of $\SE(Q)$. Hence $X$ does not contain any occurrences of
$\true$. Therefore there is no cdd for $\SE(P \sleftand Q)$.  The case for
$\SE(P \sleftor Q)$ is symmetric.
\end{proof}

At this point we have the tools necessary to invert $\SE$ on $*$-terms, at
least down to the level of $\ell$-terms. We can easily detect if a tree in the
image of $\SE$ is in the image of $P^\ell$, because all leaves to the left of
the root are one truth value, while all the leaves to the right are the other.
To invert $\SE$ on $\true$-$*$-terms we still need to be able to reconstruct
$\SE(P^\true)$ and $\SE(Q^*)$ from $\SE(P^\true \sleftand Q^*)$. To this end we
define a $\true$-$*$-decomposition, as with cds and dds we first define a
candidate $\true$-$*$-decomposition.

\begin{definition}
The pair $(Y, Z) \in \Tone \times \T$ is a \textbf{candidate
$\true$-$*$-decomposition (ctsd)} of $X \in \T$, if $X = Y\sub{\Box}{Z}$, $Y$
does not contain $\true$ or $\false$ and there is no decomposition $(U, V) \in
\Tone \times \T$ of $Z$ such that
\begin{itemize}
\item $Z = U\sub{\Box}{V}$,
\item $U$ contains $\Box$,
\item $U \neq \Box$, and
\item $U$ contains neither $\true$ nor $\false$.
\end{itemize}
\end{definition}

Unlike with $\FEL$, this is not the decomposition we seek in this case.
Consider for example that there is a $\true$-term with the following semantics:
\begin{center}
\beginpgfgraphicnamed{scl3}
\begin{tikzpicture}[%
level distance=7.5mm,
level 1/.style={sibling distance=30mm},
level 2/.style={sibling distance=15mm},
level 3/.style={sibling distance=7.5mm}
]
\node (a) {$a$}
  child {node (b1) {$b$}
    child {node (c1) {$c$}
      child {node (d1) {$\true$}} 
      child {node (d2) {$\true$}}
    }
    child {node (c2) {$d$}
      child {node (d3) {$\true$}} 
      child {node (d4) {$\true$}}
    }
  }
  child {node (b2) {$b$}
    child {node (c3) {$c$}
      child {node (d5) {$\true$}} 
      child {node (d6) {$\true$}}
    }
    child {node (c4) {$d$}
      child {node (d7) {$\true$}} 
      child {node (d8) {$\true$}}
    }
  };
\end{tikzpicture}
\endpgfgraphicnamed
\end{center}
Let $P^\true$ be the $\true$-term with these semantics and observe that
$\SE(P^\true \sleftand Q^*)$ has a ctsd
\begin{equation*}
(\Box \tlef a \trig \Box, (\SE(Q^*) \tlef c \trig \SE(Q^*)) \tlef b \trig
(\SE(Q^*) \tlef d \trig \SE(Q^*))).
\end{equation*}
But the decomposition we seek is $(\SE(P^\true)\sub{\true}{\Box}, \SE(Q^*)$.
Hence we will refine this definition to aid in the theorem below.

\begin{definition}
The pair $(Y, Z) \in \Tone \times \T$ is a \textbf{$\true$-$*$-decomposition
(tsd)} of $X \in \T$, if it is a ctsd of $X$ and there is no other ctsd $(Y',
Z')$ of $X$ where the depth of $Z'$ is smaller than that of $Z$.
\end{definition}

\begin{theorem}
\label{thm:stsd}
For any $\true$-term $P$ and $*$-term $Q$ the (unique) tsd of $\SE(P \sleftand
Q)$ is
\begin{equation*}
(\SE(P)\sub{\true}{\Box}, \SE(Q)).
\end{equation*}
\end{theorem}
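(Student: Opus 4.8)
The plan is to follow the proof of Theorem~\ref{thm:tsd} (the $\FEL$ analogue), adjusting for the one place where that argument breaks down: for an $\SCL$ $\true$-term $P$ the tree $\SE(P)$ need not have its two subtrees equal at every atom, so we cannot argue, as in $\FEL$, that a block $Z$ strictly containing $\SE(Q)$ must itself be decomposable. Instead, the depth-minimality clause in the definition of a tsd will carry that part of the argument. First I would verify that $(\SE(P)\sub{\true}{\Box}, \SE(Q))$ is a ctsd of $\SE(P\sleftand Q)$: by the definition of $\SE$ on $\sleftand$ we have $\SE(P\sleftand Q) = \SE(P)\sub{\true}{\SE(Q)} = \SE(P)\sub{\true}{\Box}\sub{\Box}{\SE(Q)}$, the last step using that $\SE(P)$ contains no boxes; since $P$ is a $\true$-term, $\SE(P)$ has only $\true$-leaves, so $\SE(P)\sub{\true}{\Box}$ contains neither $\true$ nor $\false$; and $\SE(Q)$ admits no decomposition of the kind forbidden in the ctsd definition, which is exactly the content of Lemma~\ref{lem:snondectf} applied to the $*$-term $Q$.

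Next I would show that every ctsd $(Y,Z)$ of $\SE(P\sleftand Q)$ satisfies $Z = \SE(Q)$ or $d(Z) > d(\SE(Q))$; this makes $(\SE(P)\sub{\true}{\Box}, \SE(Q))$ attain the minimal block-depth among ctsds, hence a tsd, and it pins down $Z$. Fix the copy of $\SE(Q)$ occurring as the subtree of $\SE(P\sleftand Q)$ rooted at a chosen $\true$-leaf position $p$ of $\SE(P)$; by Lemma~\ref{lem:sperttf} this subtree contains both $\true$ and $\false$. Since $Y$ has no $\true$- or $\false$-leaves, all its leaves are boxes, so $p$ is one of: an internal node of $Y$; a $\Box$-leaf of $Y$; or a proper descendant of a $\Box$-leaf of $Y$. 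If $p$ is a $\Box$-leaf, the block placed there is exactly the subtree rooted at $p$, so $Z = \SE(Q)$. If $p$ is a proper descendant of a $\Box$-leaf, this copy of $\SE(Q)$ is a proper subtree of the corresponding block, so $d(Z) > d(\SE(Q))$. The remaining case, namely $p$ internal in $Y$, is impossible: writing $Y_p$ for the subtree of $Y$ rooted at $p$, we would obtain $\SE(Q) = Y_p\sub{\Box}{Z}$ with $Y_p \neq \Box$, $Y_p$ containing $\Box$ but neither $\true$ nor $\false$, contradicting Lemma~\ref{lem:snondectf}. Once $Z = \SE(Q)$ is forced, a symmetric case analysis on the $\Box$-leaves of $Y$ (again using Lemma~\ref{lem:snondectf} to exclude a $\Box$-leaf at an internal node of $\SE(P)$, and a size argument to exclude one strictly inside a copy of $\SE(Q)$) shows that the box positions of $Y$ coincide with the $\true$-leaf positions of $\SE(P)$, whence $Y = \SE(P)\sub{\true}{\Box}$ and uniqueness follows.

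I expect the main obstacle to be precisely this bookkeeping: keeping straight which nodes of $\SE(P\sleftand Q)$ come from the skeleton $\SE(P)$ and which from the pasted-in copies of $\SE(Q)$, and how these interact with an arbitrary ctsd $(Y,Z)$, so that Lemmas~\ref{lem:sperttf} and~\ref{lem:snondectf} can be invoked cleanly. In the $\FEL$ case the left--right symmetry of $\true$-term trees made the ``$Z$ too large'' branch collapse immediately; here it does not, and the role of that symmetry is taken over by the minimality condition in the definition of a tsd, which is exactly why the $\SCL$ argument needs the ``candidate versus genuine'' refinement (ctsd versus tsd) that was unnecessary for $\FEL$.
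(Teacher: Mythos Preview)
Your approach is essentially the paper's: verify that $(\SE(P)\sub{\true}{\Box},\SE(Q))$ is a ctsd via the definition of $\SE$ on $\sleftand$ and Lemma~\ref{lem:snondectf}, then rule out any ctsd with a strictly shallower $Z$ by arguing that such a $Z$ would have to sit inside a fixed copy of $\SE(Q)$ and thereby produce a forbidden decomposition of $\SE(Q)$. The paper phrases this last step as ``$Z$ must contain or be contained in $\SE(Q)$'' rather than via your explicit positional trichotomy, but the content is the same, and you are right that the minimality clause in the definition of tsd is what replaces the $\FEL$-specific symmetry argument.

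Two small remarks. First, you go further than the paper in arguing uniqueness of $Y$ once $Z=\SE(Q)$ is fixed; the paper's proof stops after excluding shallower $Z$ and leaves uniqueness implicit. Second, in your uniqueness paragraph you invoke Lemma~\ref{lem:snondectf} to exclude a $\Box$-leaf of $Y$ located at an internal node of $\SE(P)$, but that lemma concerns $*$-terms, whereas $P$ is a $\true$-term. The exclusion there is simply a depth comparison: the subtree of $\SE(P\sleftand Q)$ at such a node has depth strictly greater than $d(\SE(Q))$ and hence cannot equal $Z=\SE(Q)$. With that correction your uniqueness argument goes through.
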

\begin{proof}
First we observe that $(\SE(P)\sub{\true}{\Box}, \SE(Q))$ is a ctsd because by
definition of $\SE$ on $\sleftand$ we have $\SE(P)\sub{\true}{\SE(Q)} = \SE(P
\sleftand Q)$ and $\SE(Q)$ is non-decomposable by Lemma \ref{lem:snondectf}.

Suppose for contradiction that there is ctsd $(Y, Z)$ such that the depth of
$Z$ is smaller than that of $\SE(Q)$. Now $Z$ must contain or be contained in
$\SE(Q)$ for otherwise $Y$ would contain $\true$ or $\false$, i.e., the ones we
know $\SE(Q)$ has by Lemma \ref{lem:sperttf}. Clearly the former cannot be the
case, for then $Z$ would have a greater depth than $\SE(Q)$. So the latter is
the case and $\SE(Q) = U\sub{\Box}{Z}$ for some $U \in \Tone$ that is not equal
to $\Box$ and does not contain $\true$ or $\false$ (because then $Y$ would
too). But this violates Lemma \ref{lem:snondectf}, which states that no such
decomposition exists.
\end{proof}

\section{Completeness}
\label{sec:sclcpl}
With the two theorems from the previous section, we can prove completeness for
$\FSCLT$. We define three auxiliary functions to aid in our definition of the
inverse of $\SE$ on $\SNF$. Let $\cd : \T \to \Tone \times \T$ be the function
that returns the conjunction decomposition of its argument, $\dd$ of the same
type its disjunction decomposition and $\tsd$, also of the same type, its
$\true$-$*$-decomposition. Naturally, these functions are undefined when their
argument does not have a decomposition of the specified type. Each of these
functions returns a pair and we will use $\cd_1$ ($\dd_1$, $\tsd_1$) to denote
the first element of this pair and $\cd_2$ ($\dd_2$, $\tsd_2$) to denote the
second element.

We define $\invs: \T \to \ST$ using the functions $\invs^\true: \T \to \ST$ for
inverting trees in the image of $\true$-terms and $\invs^\false$, $\invs^\ell$
and $\invs^*$ of the same type for inverting trees in the image of
$\false$-terms, $\ell$-terms and $*$-terms, respectively. These functions are
defined as follows.
\begin{align}
\invs^\true(X) &=
  \begin{cases}
    \true
      &\textrm{if $X = \true$} \\
    (a \sleftand \invs^\true(Y)) \sleftor \invs^\true(Z) 
      &\textrm{if $X = Y \tlef a \trig Z$}
  \end{cases} \\ \displaybreak[0]
\invs^\false(X) &=
  \begin{cases}
    \false
      &\textrm{if $X = \false$} \\
    (a \sleftor \invs^\false(Z)) \sleftand \invs^\false(Y)
      &\textrm{if $X = Y \tlef a \trig Z$}
  \end{cases} \\ \displaybreak[0]
\invs^\ell(X) &=
  \begin{cases}
    (a \sleftand \invs^\true(Y)) \sleftor \invs^\false(Z) 
      &\textrm{if $X = Y \tlef a \trig Z$ for some $a \in A$} \\
      &\textrm{and $Y$ only has $\true$-leaves} \\
    (\neg a \sleftand \invs^\true(Z)) \sleftor \invs^\false(Y)
      &\textrm{if $X = Y \tlef a \trig Z$ for some $a \in A$} \\
      &\textrm{and $Z$ only has $\true$-leaves}
  \end{cases} \\ \displaybreak[0]
\invs^*(X) &=
  \begin{cases}
    \invs^*(\cd_1(X)\sub{\Box}{\true}) \sleftand \invs^*(\cd_2(X))
      &\textrm{if $X$ has a cd} \\
    \invs^*(\dd_1(X)\sub{\Box}{\false}) \sleftor \invs^*(\dd_2(X))
      &\textrm{if $X$ has a dd} \\
    \invs^\ell(X)
      &\textrm{otherwise}
  \end{cases} \\ \displaybreak[0]
\invs(X) &=
  \begin{cases}
    \invs^\true(X)
      &\textrm{if $X$ has only $\true$-leaves} \\
    \invs^\false(X)
      &\textrm{if $X$ has only $\false$-leaves} \\
    \invs^\true(\tsd_1(X)\sub{\Box}{\true}) \sleftand \invs_*(\tsd_2(X))
      &\textrm{otherwise}
  \end{cases}
\end{align}

\begin{restatable}{theorem}{thminvscorrect}
\label{thm:sclinv}
For all $P \in \SNF$, $\invs(\SE(P)) \equiv P$.
\end{restatable}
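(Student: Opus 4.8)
The plan is to mirror the structure of the proof of Theorem~\ref{thm:felinv} for $\FFEL$, proving a package of statements simultaneously by induction along the grammar of Definition~\ref{def:snf}. Concretely, I would prove together: (i) for every $\true$-term $P$, $\SE(P)$ has only $\true$-leaves (and at least one) and $\invs^\true(\SE(P)) \equiv P$; (ii) dually, for every $\false$-term $P$, $\SE(P)$ has only $\false$-leaves (and at least one) and $\invs^\false(\SE(P)) \equiv P$; (iii) for every $\ell$-term $P$, one branch from the root of $\SE(P)$ has only $\true$-leaves and the other only $\false$-leaves, and $\invs^\ell(\SE(P)) \equiv P$; (iv) for every $*$-term $P$, $\invs^*(\SE(P)) \equiv P$; and (v) for every $\SNF$-term $P$, $\invs(\SE(P)) \equiv P$. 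The leaf-content clauses are carried along precisely so that the case splits inside $\invs^\true$, $\invs^\false$, $\invs^\ell$ and $\invs$ are provably exclusive and exhaustive on the trees that actually arise.

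First I would prove (i) and (ii); these are independent inductions on term structure. Unfolding $\SE$ on $(a \sleftand Q^\true) \sleftor R^\true$ and using that, by induction, $\SE(Q^\true)$ contains no $\false$-leaf, yields $\SE((a \sleftand Q^\true) \sleftor R^\true) = \SE(Q^\true) \tlef a \trig \SE(R^\true)$; feeding this into $\invs^\true$ and applying the induction hypothesis to the strictly smaller terms $Q^\true$ and $R^\true$ returns the term verbatim, and the resulting tree again has only $\true$-leaves. Statement (ii) is dual (note that $\SE$ of an $\false$-term of the form $(a \sleftor Q^\false) \sleftand R^\false$ comes out as $\SE(R^\false) \tlef a \trig \SE(Q^\false)$, which is why $\invs^\false$ reads the head atom off the right branch). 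For (iii) I would compute, using (i) and (ii), that $\SE((a \sleftand Q^\true) \sleftor R^\false) = \SE(Q^\true) \tlef a \trig \SE(R^\false)$ and $\SE((\neg a \sleftand Q^\true) \sleftor R^\false) = \SE(R^\false) \tlef a \trig \SE(Q^\true)$; in the first the left branch is all-$\true$ and in the second the right branch is, so exactly one guard of $\invs^\ell$ fires, and (i) and (ii) close the case.

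Statement (iv) is the heart of the argument, and I would prove it by induction on the complexity of $*$-terms modulo the complexity of $\ell$-terms, exactly as in Theorem~\ref{thm:scddd}. If $P$ is an $\ell$-term, a short argument from the branch-purity established in (iii) (equivalently Lemma~\ref{lem:slitstf}) shows $\SE(P)$ has neither a cd nor a dd: any candidate $Z$ containing both $\true$ and $\false$ must be all of $\SE(P)$, leaving $Y = \Box$, which fails the $\false$- (resp.\ $\true$-)content condition; hence $\invs^*(\SE(P)) = \invs^\ell(\SE(P)) \equiv P$ by (iii). If $P = P' \sleftand Q'$ with $P' \in P^*$, $Q' \in P^d$, then Theorem~\ref{thm:scddd} says $\SE(P)$ has no dd and its unique cd is $(\SE(P')\sub{\true}{\Box}, \SE(Q'))$; since $\SE(P')$ is box-free, $\cd_1(\SE(P))\sub{\Box}{\true} = \SE(P')$ and $\cd_2(\SE(P)) = \SE(Q')$, so $\invs^*(\SE(P)) = \invs^*(\SE(P')) \sleftand \invs^*(\SE(Q')) \equiv P' \sleftand Q'$ by the induction hypothesis on the strictly smaller $*$-terms $P'$ and $Q'$. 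The case $P = P' \sleftor Q'$ is symmetric: $\SE(P)$ has no cd, so $\invs^*$ falls through to the dd-clause, and the dd $(\SE(P')\sub{\false}{\Box}, \SE(Q'))$ is treated as before, with $\sub{\Box}{\false}$ undoing $\sub{\false}{\Box}$.

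Finally, for (v): a general $\SNF$-term is a $\true$-term, a $\false$-term, or a $\true$-$*$-term $P^\true \sleftand P^*$. In the first two cases $\SE(P)$ has only $\true$-leaves, resp.\ only $\false$-leaves, by (i)/(ii), so $\invs$ selects its first, resp.\ second, clause and (i)/(ii) finish; and $\SE$ of a $\true$-$*$-term has leaves of both kinds, since its $*$-part contributes both by Lemma~\ref{lem:sperttf}, so $\invs$ selects its third clause, where Theorem~\ref{thm:stsd} identifies the unique tsd of $\SE(P^\true \sleftand P^*)$ as $(\SE(P^\true)\sub{\true}{\Box}, \SE(P^*))$; hence $\tsd_1(\SE(P))\sub{\Box}{\true} = \SE(P^\true)$ and $\tsd_2(\SE(P)) = \SE(P^*)$, and $\invs(\SE(P)) = \invs^\true(\SE(P^\true)) \sleftand \invs^*(\SE(P^*)) \equiv P^\true \sleftand P^*$ by (i) and (iv). I expect the main obstacle to be organisational rather than conceptual: pinning down the well-founded order for the ``modulo $\ell$-term complexity'' induction in (iv), and verifying in every sub-case that the defining guards of $\invs$ and its auxiliaries are genuinely mutually exclusive on the trees in the image of $\SE$ — which is exactly what the leaf-content invariants of (i)--(iii) together with Theorems~\ref{thm:scddd} and~\ref{thm:stsd} are there to guarantee.
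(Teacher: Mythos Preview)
Your proposal is correct and follows essentially the same approach as the paper's proof: separate inductions for $\true$-terms, $\false$-terms, $\ell$-terms, then $*$-terms (via Theorem~\ref{thm:scddd}), and finally the top-level case split (via Theorem~\ref{thm:stsd}). If anything you are slightly more explicit than the paper, which simply asserts that an $\ell$-term has neither a cd nor a dd and computes $\SE$ on the various grammatical forms without spelling out the leaf-content invariants you carry along in (i)--(iii).
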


The proof for this theorem can be found in Appendix \ref{sec:sclinv}. For the
sake of completeness, we separately state the completeness result below.

\begin{theorem}
\label{thm:sclcpl}
For all $P, Q \in \ST$, if $\FSCLT \vDash P = Q$ then $\EqFSCL \vdash P = Q$.
\end{theorem}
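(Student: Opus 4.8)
The plan is to follow verbatim the strategy used for $\FFEL$ in Theorem~\ref{thm:felcpl}. It suffices to prove the restricted statement that for $P, Q \in \SNF$, $\SE(P) = \SE(Q)$ implies $P \equiv Q$. Granting this, suppose $P', Q' \in \ST$ with $\FSCLT \vDash P' = Q'$, i.e.\ $\SE(P') = \SE(Q')$. By Theorem~\ref{thm:nfs} there are $\SNF$-terms $P = \nfs(P')$ and $Q = \nfs(Q')$ with $\EqFSCL \vdash P' = P$ and $\EqFSCL \vdash Q' = Q$. Soundness (Theorem~\ref{thm:sclsnd}) then yields $\SE(P') = \SE(P)$ and $\SE(Q') = \SE(Q)$, so $\SE(P) = \SE(Q)$; the restricted statement gives $P \equiv Q$, hence $\EqFSCL \vdash P = Q$, and transitivity of equational logic closes the argument: $\EqFSCL \vdash P' = Q'$.

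So everything reduces to the injectivity of $\SE$ on $\SNF$, and for this I would appeal to the inversion function $\invs\colon \T \to \ST$ together with Theorem~\ref{thm:sclinv}, which asserts $\invs(\SE(P)) \equiv P$ for every $P \in \SNF$. Given that theorem, if $P, Q \in \SNF$ and $\SE(P) = \SE(Q)$ then $P \equiv \invs(\SE(P)) = \invs(\SE(Q)) \equiv Q$, as required. Thus the completeness theorem itself is only a short corollary of Theorem~\ref{thm:sclinv}, Theorem~\ref{thm:nfs} and Theorem~\ref{thm:sclsnd}, and the proof I would write for Theorem~\ref{thm:sclcpl} is essentially the four-line chase above.

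The genuine work, and the main obstacle, is hidden in Theorem~\ref{thm:sclinv}, whose proof is deferred to the appendix. One must verify, by induction on the $\SNF$-grammar (modulo the complexity of $\ell$-terms, as in the tree-structure lemmas), that each clause defining $\invs$, $\invs^\true$, $\invs^\false$, $\invs^\ell$ and $\invs^*$ recovers the correct syntactic constructor from the shape of the tree. The delicate points are: that $\invs^*$ correctly detects whether $\SE$ of a $*$-term was built by $\sleftand$ or by $\sleftor$ at the top level, which is exactly what Theorem~\ref{thm:scddd} guarantees (it pins down the unique $\cd$, resp.\ $\dd$, and rules out the other); that $\invs$ correctly splits $\SE(P^\true \sleftand Q^*)$ into its two halves, which is the content of Theorem~\ref{thm:stsd} on the unique $\tsd$; and that the base cases for $\ell$-terms are recognisable because one branch from the root carries only $\true$-leaves and the other only $\false$-leaves. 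Throughout, care is needed that the recursive calls are made on strictly smaller trees (in depth), so that the induction is well-founded, and that the grammatical category of the reconstructed term matches the one to which the inductive hypothesis applies. Once that bookkeeping is discharged, Theorem~\ref{thm:sclcpl} follows immediately as sketched.
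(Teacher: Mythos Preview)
Your proof is correct and follows exactly the paper's approach: reduce to injectivity of $\SE$ on $\SNF$ via Theorem~\ref{thm:nfs} and Theorem~\ref{thm:sclsnd}, then invoke Theorem~\ref{thm:sclinv}. The additional commentary on what makes Theorem~\ref{thm:sclinv} work is accurate but belongs to that theorem's proof rather than to Theorem~\ref{thm:sclcpl} itself.
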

\begin{proof}
It suffices to show that for $P,Q \in \SNF$, $\SE(P) = \SE(Q)$ implies $P
\equiv Q$. To see this suppose that $P'$ and $Q'$ are two $\SCL$-terms and
$\SE(P') = \SE(Q')$. We know that $P'$ is derivably equal to an $\SNF$-term
$P$, i.e., $\EqFSCL \vdash P' = P$, and that $Q'$ is derivably equal to an
$\SNF$-term $Q$, i.e., $\EqFSCL \vdash Q' = Q$. Theorem \ref{thm:sclsnd} then
gives us $\SE(P') = \SE(P)$ and $\SE(Q') = \SE(Q)$. Hence by the result $P
\equiv Q$ and in particular $\EqFSCL \vdash P = Q$. Transitivity then gives us
$\EqFSCL \vdash P' = Q'$ as desired.

The result follows immediately from Theorem \ref{thm:sclinv}.
\end{proof}

In Section \ref{sec:fsclpa} we use this result to prove that $\FSCLT$ is
equivalent to $\FSCL$ as it is defined by Bergstra and Ponse in \cite{scl}.

\chapter{Relating FFEL to FSCL and Proposition Algebra}
\label{chap:rel}

This chapter is concerned with explaining the connection between $\FFEL$ and
$\FSCL$, a connection we formalize in the setting of Proposition Algebra.
Because the main results of this thesis were not presented in this setting, we
will forgo a detailed introduction to Proposition Algebra. Instead we refer
the reader to \cite{pa} for a proper introduction to Proposition Algebra and to
\cite{scl} and \cite{pascl} for an introduction to $\FSCL$ as it is defined in
terms of Proposition Algebra. We will however, very briefly, fix some core
concepts and notation from this setting.

In \cite{pa} Bergstra and Ponse introduce Proposition Algebra for reasoning
about the sequential evaluation of propositional terms using the ternary
connective $y \lef x \rig z$, to be read as `if $x$ then $y$ else $z$' and
called `Hoare's conditional operator', defined in \cite{Hoare85}. The terms
under consideration can be described in Backus-Naur Form, letting $A$ be a
countable set of atoms, by
\begin{equation*}
P \in \CPT ::= a \in A ~\mid~ \true ~\mid~ \false ~\mid~ P \lef P \rig P.
\end{equation*}
The equality of two of these terms is defined by the set of axioms $\CP$, for
`Conditional Propositions':
\begin{align}
x \lef \true \rig y &= x
  \label{ax:cp1}\tag{CP1} \\
x \lef \false \rig y &= y
  \label{ax:cp2}\tag{CP2} \\
\true \lef x \rig \false &= x
  \label{ax:cp3}\tag{CP3} \\
x \lef (y \lef z \rig u) \rig v &= (x \lef y \rig v) \lef z \rig
  (x \lef u \rig v).
  \label{ax:cp4}\tag{CP4}
\end{align}
When the equality of two terms $s$ and $t$ in $\CPT$, possibly containing
variables, can be derived from equational logic and
\eqref{ax:cp1}--\eqref{ax:cp4}, we denote this by $\CP \vdash s = t$. Bergstra
and Ponse extend Proposition Algebra with negation and the short-circuit
connectives $\sleftand$ and $\sleftor$ to obtain the set $\CPT_s$ of closed
terms, where we see that $\ST \subset \CPT_s$. They extend $\CP$ with the
following defining equations for the newly introduced connectives.
\begin{align}
\neg x &= \false \lef x \rig \true 
  \label{eq:sneg} \\
x \sleftand y &= y \lef x \rig \false 
  \label{eq:sand} \\
x \sleftor y &= \true \lef x \rig y
  \label{eq:sor}
\end{align}
Let $\CP_s$ denote $\CP$ together with these defining equations. If the
equality of two terms $s$ and $t$ in $\CPT_s$, possibly containing variables,
can be derived from equational logic and $\CP_s$, we denote this by $\CP_s
\vdash s = t$. Free Short-Circuit Logic ($\FSCL$) is then defined as follows.
For all $P, Q \in \ST$,
\begin{equation*}
\FSCL \vDash P = Q \ \Longleftrightarrow\  \CP_s \vdash P = Q.
\end{equation*}

We will show in Section \ref{sec:ffelpa} that $\FFEL$ can also be expressed in
terms of Proposition Algebra with an extended signature, by adding defining
equations for the \emph{full} left-sequential connectives to $\CP$. In Section
\ref{sec:fsclpa} we will prove that $\FSCLT$, as we defined it in Chapter
\ref{chap:scl}, is equivalent to $\FSCL$.  Finally in Section
\ref{sec:ffelfscl} we will prove that $\FFEL$ is a sublogic of $\FSCL$ by
showing that its connectives are definable in $\FSCL$, which will allow us to
define a general left-sequential logic.

\section{Relating FFEL to Proposition Algebra}
\label{sec:ffelpa}

To relate $\FFEL$ to Proposition Algebra, we first add negation, $\leftand$ and
$\leftor$ to the signature of Proposition Algebra, to obtain the set $\CPT_f$
of closed terms, where we note that $\FT \subset \CPT_f$. We then extend the
set of equations $\CP$ with the following defining equations for the full
left-sequential connectives.
\begin{align}
\neg x &= \false \lef x \rig \true
  \label{eq:fneg} \\
x \leftand y &= y \lef x \rig (\false \lef y \rig \false)
  \label{eq:fand} \\
x \leftor y &= (\true \lef y \rig \true) \lef x \rig y
  \label{eq:for}
\end{align}
Let $\CP_f$ denote $\CP$ together with these three equations. When two terms
$s, t \in \CPT_f$, possibly containing variables, are derivable by equational
logic and $\CP_f$ we denote this by $\CP_f \vdash s = t$.

Our goal is to prove that $\FFEL$ can also be characterized by $\CP_f$, i.e.,
we will show that for all $P, Q \in \FT$,
\begin{equation*}
\FFEL \vDash P = Q \ \Longleftrightarrow\ \CP_f \vdash P = Q.
\end{equation*}
To this end we first define the function $\E: \CPT \to \T$ that will interpret
$\CPT$-terms in $\T$. We will then extend this definition to $\E_f: \CPT_f
\to \T$.

With the informal semantics of `if $x$ then $y$ else $z$' in mind for terms of
the form $y \lef x \rig z$, defining $\E$ becomes straightforward.
\begin{definition}
Let $A$ be a countable set of atoms and let $\T$ be the set all finite binary
trees over $A$ with leaves in $\{\true, \false\}$. We define the unary
\textbf{Conditional Evaluation} function $\E: \CPT \to \T$ as:
\begin{align*}
\E(\true) &= \true \\
\E(\false) &= \false \\
\E(a) &= \true \tlef a \trig \false &\textrm{for $a \in A$} \\
\E(Q \lef P \rig R) &= \E(P)\ssub{\true}{\E(Q)}{\false}{\E(R)}
\end{align*}
\end{definition}
We observe that $\E[\CPT] = \T$. We extend this definition to $\E_f: \CPT_f \to
\T$ by adding the following clauses to the definition.
\begin{align*}
\E_f(\neg P) &= \E_f(P)\ssub{\true}{\false}{\false}{\true} \\
\E_f(P \leftand Q) &=
  \E_f(P)\ssub{\true}{\E_f(Q)}{\false}{\E_f(Q)\sub{\true}{\false}} \\
\E_f(P \leftor Q) &=
\E_f(P)\ssub{\true}{\E_f(Q)\sub{\false}{\true}}{\false}{\E_f(Q)}
\end{align*}
It is now trivial to see that $\E_f$ restricted to $\FEL$-terms is equal to
$\FE$. 

\begin{theorem}
For all $P, Q \in \FT$,
\begin{equation*}
\FFEL \vDash P = Q \ \Longleftrightarrow\  \CP_f \vdash P = Q.
\end{equation*}
\end{theorem}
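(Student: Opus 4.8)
The plan is to reduce the claim to a comparison between the two axiomatizations $\EqFFEL$ and $\CP_f$, using the completeness of $\EqFFEL$ for $\FFEL$ (Theorems~\ref{thm:felsnd} and~\ref{thm:felcpl}). For $P, Q \in \FT$ we already have $\FFEL \vDash P = Q \iff \EqFFEL \vdash P = Q \iff \FE(P) = \FE(Q)$, and $\E_f$ restricted to $\FT$ is $\FE$. Hence it suffices to establish the two implications $\CP_f \vdash P = Q \Rightarrow \FE(P) = \FE(Q)$ and $\EqFFEL \vdash P = Q \Rightarrow \CP_f \vdash P = Q$.

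For the first implication I would show that $\CP_f$ is sound with respect to $\E_f$: whenever $\CP_f \vdash s = t$ and $\sigma$ substitutes closed $\CPT_f$-terms for the variables, then $\E_f(\sigma(s)) = \E_f(\sigma(t))$. Just as in the proof of Theorem~\ref{thm:felsnd}, the equational-logic rules are handled by a routine structural induction on evaluation trees (congruence being immediate from the recursive definition of leaf replacement). For the axioms, \eqref{ax:cp1}--\eqref{ax:cp3} are immediate from the definition of $\E$ on $\lef\cdot\rig$, and \eqref{ax:cp4} holds because both sides, after unfolding, describe the same iterated leaf replacement on $\E(z)$ — precisely the compositionality of replacement used pervasively in Chapter~\ref{chap:fel}. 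The defining equations \eqref{eq:fneg}--\eqref{eq:for} are valid under $\E_f$ essentially by construction: unfolding the right-hand sides and using $\E_f(\false \lef y \rig \false) = \E_f(y)\sub{\true}{\false}$ and $\E_f(\true \lef y \rig \true) = \E_f(y)\sub{\false}{\true}$ reproduces exactly the clauses defining $\E_f$ on $\neg$, $\leftand$, $\leftor$. Taking $s, t \in \FT$ and $\sigma$ the identity then gives $\FE(s) = \E_f(s) = \E_f(t) = \FE(t)$.

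For the second implication it suffices to derive each of the ten equations \eqref{ax:ft1}--\eqref{ax:ft10} of $\EqFFEL$ from $\CP_f$, since then every $\EqFFEL$-derivation between $\FEL$-terms is mirrored step by step by a $\CP_f$-derivation (the underlying equational logic is the same, and $\FT \subset \CPT_f$). After replacing $\neg$, $\leftand$, $\leftor$ by their defining equations, each becomes a statement about $\CPT$-terms to be derived from \eqref{ax:cp1}--\eqref{ax:cp4}, the workhorse being \eqref{ax:cp4}. For instance \eqref{ax:ft3} unfolds to $\false \lef (\false \lef x \rig \true) \rig \true$, which by \eqref{ax:cp4} equals $(\false \lef \false \rig \true) \lef x \rig (\false \lef \true \rig \true) = \true \lef x \rig \false = x$; equations \eqref{ax:ft1}, \eqref{ax:ft4}, \eqref{ax:ft5} are equally short and \eqref{ax:ft2} is a one-line \eqref{ax:cp4}-manipulation. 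The remaining equations \eqref{ax:ft7}--\eqref{ax:ft10} concern terms $x \leftand \false$ (unfolding to $\false \lef x \rig \false$) and $x \leftor \true$ (unfolding to $\true \lef x \rig \true$), and require longer but routine calculations driven by \eqref{ax:cp4}.

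I expect the main obstacle to be exactly these last derivations \eqref{ax:ft7}--\eqref{ax:ft10} from $\CP_f$: they are the only genuinely computational part and call for carefully bookkept repeated applications of \eqref{ax:cp4}. A possible shortcut, at the cost of invoking external results, is to quote the basic-form/completeness theorem for $\CP$ from \cite{pa}: every closed $\CPT$-term is $\CP$-provably equal to a unique basic form determined by its $\E$-image, so $\E(s) = \E(t)$ implies $\CP \vdash s = t$ for closed $\CPT$-terms; combined with the observation that every closed $\CPT_f$-term is $\CP_f$-provably equal to a closed $\CPT$-term with the same $\E_f$-image, this yields the second implication directly. I would nonetheless keep the self-contained axiom-by-axiom derivation as the primary route.
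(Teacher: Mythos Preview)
Your proposal is correct and follows essentially the same approach as the paper: use completeness of $\EqFFEL$ for $\FFEL$ to reduce the two directions to (i) soundness of $\CP_f$ with respect to $\E_f$-equality and (ii) derivability of each equation \eqref{ax:ft1}--\eqref{ax:ft10} from $\CP_f$. The paper illustrates (ii) with \eqref{ax:ft8} rather than \eqref{ax:ft3}, and does not mention your alternative shortcut via the basic-form completeness of $\CP$, but the structure of the argument is the same.
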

\begin{proof}
First we show that $\FFEL \vDash P = Q$ implies that $\CP_f \vdash P = Q$.
If $\FFEL \vDash P = Q$, then $\FE(P) = \FE(Q)$ and by Theorem \ref{thm:felcpl}
$\EqFFEL \vdash P = Q$. It suffices to prove that $\EqFFEL$ is sound with
respect to $\CP_f$. It is immediate that identity, symmetry, transitivity and
congruence are sound and checking that \eqref{ax:ft1}--\eqref{ax:ft10} are
valid is cumbersome, but not difficult. As an example we show this for
\eqref{ax:ft8}:
\begin{align*}
P \leftand \false
&= \false \lef P \rig (\false \lef \false \rig \false) 
  &\textrm{by \eqref{eq:fand}}\\
&= (\false \lef \false \rig (\false \lef \false \rig \false)) \lef P \rig
  \false
  &\textrm{by \eqref{ax:cp2}} \\
&= (\false \lef \false \rig (\false \lef \false \rig \false)) \lef P \rig
  (\false \lef \true \rig (\false \lef \false \rig \false))
  &\textrm{by \eqref{ax:cp1}} \\
&= \false \lef (\false \lef P \rig \true) \rig (\false \lef \false \rig
  \false)
  &\textrm{by \eqref{ax:cp4}} \\
&= \neg P \leftand \false
  &\textrm{by \eqref{eq:fand} and \eqref{eq:fneg}}
\end{align*}
Given this soundness, $\FFEL \vDash P = Q$ implies $\CP_f \vdash P = Q$.

Next we show that $\CP_f \vdash P = Q$ implies $\FFEL \vDash P = Q$. Because
$\E_f$ restricted to $\FT$ equals $\FE$, it suffices to show that $\CP_f$ is
sound with respect to $\E_f$-equality, i.e., that for all $R, S \in \CPT_f$,
$\CP_f \vdash R = S$ implies $\E_f(R) = \E_f(S)$. This proof is also trivial.
As an example we show this for \eqref{ax:cp2} as follows:
\begin{align*}
\E_f(R \lef \false \rig S) &=
  \E_f(\false)\ssub{\true}{\E_f(R)}{\false}{\E_f(S)} \\
&= \false\ssub{\true}{\E_f(R)}{\false}{\E_f(S)} \\
&= \E_f(S),
\end{align*}
and for \eqref{eq:fand} as
\begin{align*}
\E_f(R \leftand S)
&= \E_f(R)\ssub{\true}{\E_f(S)}{\false}{\E_f(S)\sub{\true}{\false}} \\
&= \E_f(R)\ssub{\true}{\E_f(S)}{\false}{\E_f(S)
  \ssub{\true}{\false}{\false}{\false}} \\
&= \E_f(R)\ssub{\true}{\E_f(S)}{\false}{\E_f(\false \lef S \rig \false)} \\
&= \E_f(S \lef R \rig (\false \lef S \rig \false)). &&\qedhere
\end{align*}
\end{proof}

Given the definition of $\FSCL$, the reason for defining $\FFEL$ in terms of
evaluation trees rather than by using Proposition Algebra deserves some
clarification. We feel that this makes our completeness proof more
straightforward than it would have been had we defined $\FFEL$ in terms of
Proposition Algebra. Although $\CPT$-terms are easily interpreted as trees, we
would have had to use a basic form for $\CPT$-terms, such as \cite[Definition
3.1]{pa}, to perform our analysis as done in Section \ref{sec:feltrees}. In
fact our function $\E$ converts $\CPT$-terms to such basic forms if we read
`$\tlef$' as `$\lef$' and `$\trig$' as `$\rig$', thus explaining our choice of
notation for trees. We have taken the notation $y \tlef x \trig z$ from the
setting of Thread Algebra, see, e.g., \cite{threadalg}, where it is used to
denote the post-conditional composition of threads.

\section{EqFSCL axiomatizes FSCL}
\label{sec:fsclpa}
It has been an open question since $\FSCL$ was first defined in \cite{scl}
whether or not $\EqFSCL$, or an equivalent set of equations such as the one
presented in \cite{scl} itself, axiomatizes $\FSCL$. Given Theorem
\ref{thm:sclcpl} it now suffices to show that $\FSCLT$ is equivalent to
$\FSCL$, which we shall prove analogous to how we proved the theorem in the
previous section.

We define the function $\E_s: \CPT_s \to \T$ that interprets
$\CPT_s$-terms as evaluation trees by extending the definition of $\E$ with
the following clauses.
\begin{align*}
\E_s(\neg P) &= \E_s(P)\ssub{\true}{\false}{\false}{\true} \\
\E_s(P \sleftand Q) &= \E_s(P)\sub{\true}{\E_s(Q)} \\
\E_s(P \sleftor Q) &= \E_s(P)\sub{\false}{\E_s(Q)}
\end{align*}
It is now trivial to see that $\E_s$ restricted to $\SCL$-terms is equal to
$\SE$. 

\begin{theorem}
\label{thm:fsclpa}
For all $P, Q \in \ST$,
\begin{equation*}
\FSCLT \vDash P = Q \ \Longleftrightarrow\ \FSCL \vDash P = Q.
\end{equation*}
\end{theorem}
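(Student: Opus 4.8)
The plan is to mirror the proof of the preceding theorem relating $\FFEL$ to $\CP_f$. Since by definition $\FSCLT \vDash P = Q$ iff $\SE(P) = \SE(Q)$, and $\FSCL \vDash P = Q$ iff $\CP_s \vdash P = Q$, I would establish the two implications separately, in each case reducing to a routine soundness check and invoking the completeness results already proved for $\FSCLT$.

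For the forward direction, suppose $\FSCLT \vDash P = Q$. By Theorem~\ref{thm:sclcpl} we get $\EqFSCL \vdash P = Q$, so it suffices to show that $\EqFSCL$ is sound with respect to $\CP_s$, i.e.\ that every equation of $\EqFSCL$ (Table~\ref{tab:eqfscl}) is derivable from equational logic together with \eqref{ax:cp1}--\eqref{ax:cp4} and the defining equations \eqref{eq:sneg}--\eqref{eq:sor}. Reflexivity, symmetry, transitivity and congruence are immediate, and \eqref{ax:scl1}--\eqref{ax:scl10s} are short calculations of the kind already illustrated for \eqref{ax:ft8} in Section~\ref{sec:ffelpa}: one unfolds the short-circuit connectives via \eqref{eq:sneg}--\eqref{eq:sor} and then pushes the resulting conditionals around using \eqref{ax:cp1}--\eqref{ax:cp4}. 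Given this soundness, $\EqFSCL \vdash P = Q$ yields $\CP_s \vdash P = Q$, i.e.\ $\FSCL \vDash P = Q$.

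For the converse, suppose $\FSCL \vDash P = Q$, so $\CP_s \vdash P = Q$. Since $\E_s$ restricted to $\ST$ is exactly $\SE$, it suffices to show that $\E_s$ is a model of $\CP_s$, i.e.\ that $\CP_s \vdash R = S$ implies $\E_s(R) = \E_s(S)$ for all $R, S \in \CPT_s$. This is again routine: preservation of identity, symmetry, transitivity and congruence is clear, and each of \eqref{ax:cp1}--\eqref{ax:cp4} and \eqref{eq:sneg}--\eqref{eq:sor} is validated by a direct computation with the replacement operator (for \eqref{ax:cp4} one uses that the trees in the image of $\E_s$ contain no boxes, exactly as in the analogous $\CP_f$ argument), together with a trivial structural induction on evaluation trees where needed. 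Applying this to $P, Q \in \ST \subseteq \CPT_s$ gives $\SE(P) = \E_s(P) = \E_s(Q) = \SE(Q)$, i.e.\ $\FSCLT \vDash P = Q$.

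Neither direction presents a genuine conceptual obstacle; the work is entirely in the two soundness verifications, and within those the derivation of the distributivity axiom \eqref{ax:scl10s} from $\CP_s$ is the most laborious step (after unfolding, both sides become conditionals headed by $x$, and one applies \eqref{ax:cp4} and \eqref{ax:cp2} to the two branches to see they agree; the remaining axioms of $\EqFSCL$ and all of $\CP_s$ reduce to a few applications of \eqref{ax:cp1}--\eqref{ax:cp4} or a one-line computation on trees). One mild point to watch is that $\EqFSCL$ is only \emph{based on} the axiom set of \cite{scl}, so its soundness over $\CP_s$ should be checked directly rather than cited, though this does not affect the structure of the argument. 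Combining the two implications with Theorems~\ref{thm:sclsnd} and~\ref{thm:sclcpl} then shows that $\FSCLT$, $\EqFSCL$-derivability, and $\FSCL$ all coincide on $\ST$.
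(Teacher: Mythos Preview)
Your proposal is correct and follows essentially the same route as the paper: invoke Theorem~\ref{thm:sclcpl} for the forward direction and reduce to soundness of $\EqFSCL$ over $\CP_s$, and for the converse show that $\E_s$ validates $\CP_s$ and use that $\E_s$ restricted to $\ST$ is $\SE$. The only difference is that the paper simply cites \cite{scl} for the soundness of $\EqFSCL$ over $\CP_s$, whereas you (reasonably, given that the present $\EqFSCL$ is only \emph{based on} that set) propose to verify it directly; either way the argument goes through.
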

\begin{proof}
If $\FSCLT \vDash P = Q$, then by Theorem \ref{thm:sclcpl} we have $\EqFSCL
\vdash P = Q$. So it suffices to show that $\EqFSCL$ is sound with respect to
$\CP_s$, i.e., that $\EqFSCL \vdash P = Q$ implies $\CP_s \vdash P = Q$. This
proof can be found in \cite{scl}.

For the other direction we must show that if $\CP_s \vdash P = Q$, then $\FSCLT
\vDash P = Q$. Because $\E_s$ restricted to $\ST$ equals $\SE$, it suffices to
show that $\CP_s$ is sound with respect to $\E_s$-equality, i.e., that for all
$R, S \in \CPT_s$, $\CP_s \vdash R = S$ implies $\E_s(R) = \E_s(S)$. This is
again a trivial proof. For example, we show it for \eqref{ax:cp3} as follows:
\begin{align*}
\E_s(\true \lef R \rig \false)
&= \E_s(R)\ssub{\true}{\E_s(\true)}{\false}{\E_s(\false)} \\
&= \E_s(R)\ssub{\true}{\true}{\false}{\false} \\
&= \E_s(R),
\end{align*}
and for \eqref{eq:sor} as
\begin{align*}
\E_s(R \sleftor S)
&= \E_s(R)\sub{\false}{\E_s(S)} \\
&= \E_s(R)\ssub{\true}{\true}{\false}{\E_s(S)} \\
&= \E_s(\true \lef R \rig S). &&\qedhere
\end{align*}
\end{proof}

The reader may wonder why in this thesis we presented the completeness of
$\EqFSCL$ with respect to $\FSCLT$ rather than $\FSCL$. The reason for this is
that the author discovered the result after proving the completeness of
$\EqFFEL$ with respect to $\FFEL$, and this presentation emphasizes the
similarities and differences of that proof with the proof for $\FSCL$.

\section{FFEL is a sublogic of FSCL}
\label{sec:ffelfscl}
When we consider a simple $\FEL$-term such as $a \leftand b$ and picture $\FE(a
\leftand b)$, we see that we can reconstruct the normal form of the original
term as $\true \leftand ((a \leftand \true) \leftand (b \leftand \true))$.
However, we can also reconstruct this tree as $\true \sleftand ((a \sleftand
\true) \sleftor ((b \sleftor \false) \sleftand \false)) \sleftand ((b \sleftand
\true) \sleftor \false))$. We will indeed show that for any $\FEL$-term $P$
there is an $\SCL$-term $Q$ such that $\FE(P) = \SE(Q)$. To this end we define
a translation function $\trans$, which translates $\FEL$-terms to $\SCL$-terms
with the same evaluation tree semantics as follows.
\begin{align}
\trans(\true) &= \true \\
\trans(\false) &= \false \\
\trans(a) &= a \\
\trans(\neg P) &= \neg \trans(P) \\
\trans(P \leftand Q) &= (\trans(P) \sleftor (\trans(Q) \sleftand \false))
  \sleftand \trans(Q) \\
\trans(P \leftor Q) &= (\trans(P) \sleftand (\trans(Q) \sleftor \true))
  \sleftor \trans(Q)
\end{align}

We immediately turn to the proof that $\trans$ has the desired property.
\begin{theorem}
$\FFEL$ is a sublogic of $\FSCL$, i.e., for all $P, Q \in FT$,
\begin{equation*}
\FFEL \vDash P = Q \ \Longrightarrow\  \FSCL \vDash \trans(P) = \trans(Q).
\end{equation*}
\end{theorem}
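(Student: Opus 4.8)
The plan is to reduce the theorem to the single semantic fact that the translation $\trans$ preserves evaluation trees, i.e.\ that $\SE(\trans(P)) = \FE(P)$ for every $P \in \FT$ (this is exactly what the $a \leftand b$ example preceding the theorem anticipates). Granting this, the argument is short: if $\FFEL \vDash P = Q$ then by definition $\FE(P) = \FE(Q)$, hence $\SE(\trans(P)) = \FE(P) = \FE(Q) = \SE(\trans(Q))$, which by definition of $\FSCLT$ means $\FSCLT \vDash \trans(P) = \trans(Q)$, and Theorem~\ref{thm:fsclpa} then upgrades this to $\FSCL \vDash \trans(P) = \trans(Q)$.

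It thus remains to prove $\SE(\trans(P)) = \FE(P)$ by induction on the structure of $P$. The base cases are immediate: $\SE(\trans(\true)) = \SE(\true) = \true = \FE(\true)$, likewise for $\false$, and $\SE(\trans(a)) = \SE(a) = \true \tlef a \trig \false = \FE(a)$. For negation, the induction hypothesis gives $\SE(\trans(\neg P)) = \SE(\neg\trans(P)) = \SE(\trans(P))\ssub{\true}{\false}{\false}{\true} = \FE(P)\ssub{\true}{\false}{\false}{\true} = \FE(\neg P)$.

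The two substantive cases are the binary connectives; I treat $\leftand$, the case for $\leftor$ being dual. Put $X = \SE(\trans(P))$ and $Y = \SE(\trans(Q))$, which equal $\FE(P)$ and $\FE(Q)$ respectively by the induction hypothesis. Unfolding the definitions of $\SE$ and $\trans$ (and using $\SE(\false) = \false$), one computes $\SE(\trans(P \leftand Q)) = \SE\bigl((\trans(P) \sleftor (\trans(Q) \sleftand \false)) \sleftand \trans(Q)\bigr) = \bigl(X\sub{\false}{Y\sub{\true}{\false}}\bigr)\sub{\true}{Y}$. Since $Y\sub{\true}{\false}$ contains no occurrence of $\true$, the commutation identity $X\sub{\false}{Z}\sub{\true}{Y} = X\ssub{\true}{Y}{\false}{Z}$ noted in the text applies and yields $X\ssub{\true}{Y}{\false}{Y\sub{\true}{\false}}$, which is precisely $\FE(P)\ssub{\true}{\FE(Q)}{\false}{\FE(Q)\sub{\true}{\false}} = \FE(P \leftand Q)$. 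The $\leftor$ case is the mirror image: $\SE(\trans(P \leftor Q)) = \bigl(X\sub{\true}{Y\sub{\false}{\true}}\bigr)\sub{\false}{Y}$, and since $Y\sub{\false}{\true}$ contains no $\false$ the dual commutation identity gives $X\ssub{\true}{Y\sub{\false}{\true}}{\false}{Y} = \FE(P \leftor Q)$.

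I do not expect a real obstacle: the heart of the proof is this routine structural induction, and the only point requiring care is the bookkeeping with the leaf-replacement operator, specifically invoking whichever of the two commutation identities has its side condition met (the inserted subtree must omit the relevant truth value, which is why $\trans$ inserts $\trans(Q)\sleftand\false$ resp.\ $\trans(Q)\sleftor\true$ before $\trans(Q)$). The appeal to Theorem~\ref{thm:fsclpa} is what makes the reduction work at all, since $\FSCL$ is defined via $\CP_s$-derivability rather than directly on evaluation trees; without that equivalence one would be forced to argue about $\CP_s$ directly.
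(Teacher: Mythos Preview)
Your proposal is correct and follows essentially the same approach as the paper: reduce via Theorem~\ref{thm:fsclpa} to proving $\SE(\trans(P)) = \FE(P)$ by structural induction, with the binary cases handled using the leaf-replacement commutation identity whose side condition is guaranteed by the inserted $\trans(Q)\sleftand\false$ (resp.\ $\trans(Q)\sleftor\true$). The only cosmetic difference is that the paper runs the computation from $\FE(P\leftand Q)$ toward $\SE(\trans(P\leftand Q))$ rather than the reverse.
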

\begin{proof}
It suffices by Theorem \ref{thm:fsclpa} to show that for any $\FEL$-term $P$,
$\FE(P) = \SE(\trans(P))$, which we shall do by induction on $P$. The base
cases are trivial. For the induction we have
\begin{align*}
\FE(P \leftand Q)
&= \FE(P)\ssub{\true}{\FE(Q)}{\false}{\FE(Q)\sub{\true}{\false}} \\
&= \SE(\trans(P))\ssub{\true}{\SE(\trans(Q))}{\false}{\SE(\trans(Q))
  \sub{\true}{\false}}
  &\textrm{by induction hypothesis} \\
&= \SE(\trans(P))\sub{\false}{\SE(\trans(Q))\sub{\true}{\false}}
  \sub{\true}{\SE(\trans(Q))} \\
&= \SE((\trans(P) \sleftor (\trans(Q) \sleftand \false)) \sleftand \trans(Q)) \\
&= \SE(\trans(P \leftand Q)),
\end{align*}
where the third equality follows from the fact that $\SE(\trans(Q))
\sub{\true}{\false}$ does not contain $\true$. The case for $P \leftor Q$ is
similar.
\end{proof}

Let us consider the translation of $\FEL$-terms in $P^\true$. In the base case
we have that $\trans(\true) = \true$ and in the inductive case we have
$\trans(a \leftor P) = (a \sleftand (\trans(P) \sleftor \true)) \sleftor
\trans(P)$. By Lemma \ref{lem:ptpfs} and the induction hypothesis this is
equal to $(a \sleftand \trans(P)) \sleftor \trans(P)$. In other words,
$\FEL$-terms in $P^\true$ are equivalent to $\SCL$-terms in $P^\true$.
Similarly, $\FEL$-terms in $P^\false$ are translated to $\SCL$-terms in
$P^\false$. In both cases they are equivalent with $\true$-terms
($\false$-terms) $P$ for which the paths of $\SE(P)$ all contain the same atoms
\emph{in the same order}.

In Chapter \ref{chap:intro} we promised to define a general left-sequential
logic, i.e., a logic for reasoning about propositional terms that contain both
short-circuit left-sequential connectives and full left-sequential connectives.
We can now easily define such a logic by adding the following two equations
to $\EqFSCL$:
\begin{equation*}
x \leftand y = (x \sleftor (y \sleftand \false)) \sleftand y
\quad\textrm{and}\quad
x \leftor y = (x \sleftand (y \sleftor \true)) \sleftor y.
\end{equation*}
By the results from Chapter \ref{chap:scl} and this chapter, it is immediate
that this set of equations axiomatizes a (free) general left-sequential logic.
Naturally, we could also express this logic in terms of Proposition Algebra by
adding both types of connectives to its signature and considering $\CP$
together with \eqref{eq:sneg}--\eqref{eq:sor}, \eqref{eq:fand} and
\eqref{eq:for}. Without making any assumptions about the side effects that
atoms may have, this logic can be used to reason about propositional terms in
programming languages which offer both types of connectives, such as Java.

In the next chapter we will discuss our motivations for examining $\FEL$
separately from $\SCL$.

\chapter{Conclusion and Outlook}
\label{chap:concl}

The evaluation strategy prescribed by a propositional-based logic is key to
determining the semantics of propositional terms as they are used in
programming languages to direct the flow of a program. For any such evaluation
strategy to be of use to programmers, it must be deterministic. For suppose
otherwise, then two evaluations of the same term in the same execution
environment could yield different results and such a system can hardly be
called a logic. Given that these evaluation strategies must be deterministic,
we are left with considering sequential evaluation strategies and parallel
evaluation strategies.


We have not found any programming language that uses a parallel evaluation
strategy when dealing with propositional terms for program flow control. The
likely reason being that to truly evaluate several subterms in parallel, the
state of the entire execution environment prior to the start of the evaluation
must be copied so that each subterm can be evaluated in the same environment.
This would likely cause the evaluation to be slowed down to such an extent as
to render it useless in practice. The `merging' of multiple copies of the
environment after the subterms have been evaluated would also be a highly
non-trivial exercise. Perhaps in the setting of Quantum Computing we can
imagine an evaluation in superposition resulting in a superposition of
environments, but it is not at all clear how we should interpret the
superposition of the yields of the individual atoms.


Therefore we focus on sequential evaluation strategies, which are widely used
in programming languages. We focus entirely on left-sequential evaluation
strategies, since most programming languages read from left to right. We have
examined both short-circuit evaluation strategies, in the form of $\FSCLT$,
and full evaluation strategies, in the form of $\FFEL$. Both induce
right-sequential evaluation strategies. For example, for FEL we could introduce
the symbols `$\rightand$' for full right-sequential conjunction and
`$\rightor$' for full right-sequential disjunction. The equations in
$\EqFFEL$ could then easily be rewritten to accommodate the new direction,
e.g., \eqref{ax:ft9} would become $y \rightor (\false \rightand x) = y
\rightand (\true \rightor x)$. For examining a setting with both
left-sequential and right-sequential connectives, we would naturally define the
right-sequential connectives in terms of their left-sequential counterparts.


In ordinary propositional logic no particular evaluation strategy is
prescribed. \textsc{SAT}-solvers make eager use of this freedom and often
employ complex evaluation strategies that go far beyond simple left-sequential
evaluation. See, e.g., \cite{Gu96} for a survey of some of the different
algorithms used for satisfiability solving. We emphasize that
\textsc{SAT}-solvers deal with propositional terms whose atoms do not have side
effects. Both $\FFEL$ and $\FSCLT$ are designed to deal with atoms that
\emph{do} have side effects, to which can be contributed much of the complexity
of these logics.


As promised in Chapter \ref{chap:intro} we return to our claim about the
applicability of left-sequential logics for reasoning about side effects. The
yield of any occurrence of an atom $a \in A$ in some term $P$ can be influenced
by the side effects of the atoms that precede the occurrence of $a$ in $P$ as
well as the state of the execution environment in which $P$ is evaluated. If an
atom $a \in A$ does not have a side effect, then it always behaves either as
the constant $\true$ or as the constant $\false$ depending on the atoms that
were evaluated before it and the state of the execution environment. For $a \in
A$ and $P \in \ST$ let $[\true / a]P$ denote the term which results from
replacing each occurrence of $a$ in $P$ by $\true$.  Similarly, let $[\false /
a]P$ be the term that results from replacing each occurrence of $a$ in $P$ with
$\false$. Let $y_e$ be the function that returns the boolean yield of an
$\SCL$-term when it is evaluated in execution environment $e$. An atom $a \in
A$ has a side effect if there is some execution environment $e$ and there are
$P, Q \in \ST$ with $y_e(P) = y_e(Q)$ such that either
\begin{equation*}
y_e([\true / a]P) \neq y_e([\true / a]Q
\quad\text{or}\quad
y_e([\false / a]P) \neq y_e([\false / a]Q).
\end{equation*}


As an example consider atoms $a$ and $b$ and suppose that a side effect of $a$
is that any evaluation of $b$ that follows it will yield $\true$. Also suppose
that if $b$ were not preceded by $a$ it would yield $\false$. To make this
concrete we could imagine $a$ being a method that sets some global variable in
the execution environment and always yields $\true$. We could then see $b$ as
being a method that checks whether that variable has been set, in which case it
yields $\true$, or not, in which case it yields $\false$. Letting $e$ be some
execution environment where the global variable is not set, or alternatively
the empty execution environment, we see that $y_e(a \leftand b) = \true =
y_e(\neg b)$ and that $y_e(\true \leftand b) = \false \neq y_e(\neg b)$. Hence
$a$ has a side effect by our definition.


This opens the door to more involved reasoning about side effects. The example
above hints at the possibility of defining what it means for an atom to be
`impacted' by the side effect of a single other atom. We could, for example,
restrict our attention to $P, Q \in \ST$ containing only $a$ and some other
atom $b$. If there is an environment $e$ and there are $P, Q \in \ST$ that
contain only $a$ and $b$ such that $y_e(P) = y_e(Q)$, but $y_e([\true / a]P)
\neq y_e([\true / a]Q$ or $y_e([\false / a]P) \neq y_e([\false / a]Q$, then we
know that $b$ is impacted by a side effect of $a$. Another interesting
definition would be that of a `positive side effect'. In that case we could say
that an atom $a$ has a positive side effect if there is some environment
$e$ and some $P, Q \in \ST$ such that $y_e(P) = y_e(Q)$, but $y_e([\true / a]P)
\neq y_e([\true / a]Q)$.


With this definition of a side effect we see that $\FEL$, unlike PL, preserves
side effects in the sense that $\FFEL \vDash P = Q$ implies $\FFEL \vDash
[\true / a]P = [\true / a]Q$ and $\FFEL \vDash [\false / a]P = [\false / a]Q$
for all $P, Q \in \FT$ and all $a \in A$. The same goes for $\SCL$.  Thus, if
we adopt our proposed definition of side effects, both $\FEL$ and $\SCL$ can be
used to reason about propositional expressions with atoms that may have side
effects.


For defining and reasoning about side effects as we have done above we need
constants for our truth values. The constants $\true$ and $\false$ are not
definable in $\SCL$ and $\FEL$, except in terms of one another. This is unlike
PL, where the law of excluded middle allows one to define $\true$ in terms of
negation and disjunction, naturally assuming we have at least one atom at our
disposal. The constant $\false$ can then be defined in terms of $\true$ or by
the law of non-contradiction. With $\SCL$ and $\FEL$ we cannot exclude the
possibility that every atom has a side effect that causes any subsequent
evaluation of that same atom to yield the opposite truth value. Hence neither
the law of excluded middle nor the law of non-contradiction are valid in $\SCL$
or $\FEL$.


We now turn to the question of the usefulness of $\FEL$. As mentioned in
Chapter \ref{chap:intro} we find that some programming languages offer full
left-sequential connectives, which motivates the initial investigation of
$\FEL$. We claim that $\FEL$ has a greater value than merely to act as means of
writing certain $\SCL$ terms using fewer symbols. The usefulness of a full
evaluation strategy lies in the increased predictability of the state of the
environment after a (sub)term has been evaluated. In particular, we know that
the side effects of all the atoms in the term have occurred. Naturally we leave
errors and error values out of this discussion.


$\SCL$ is characterized by its efficiency, in the sense that atoms are not
evaluated if their yield is not needed to determine the yield of a term as a
whole. From that perspective $\FEL$ might seem rather inefficient, but this is
not necessarily so. To determine the state of the environment after the
evaluation of a $\FEL$ term in a given environment, we need only compute how
each atom in the term transforms the environment. It is \emph{not} necessary to
compute the yield of any of the atoms. With $\SCL$ terms in general we must
know what the first atom yields in order to determine which atom is next to
transform the environment. Thus to compute the state of the environment after
the evaluation of an $\SCL$ term we must compute the yield of each atom that
transforms the environment \emph{and} we must compute the transformation of the
environment for each atom that affects it. Consider the $\SCL$-term
\begin{equation*}
((a \sleftor (b \sleftand \false)) \sleftand \false) \sleftor c
\end{equation*}
and note that to compute its yield we must first compute the yield of $a$ to
determine whether or not $b$ is short-circuited. Consider atoms $a$ and $b$
that have no side effects, and hence do not affect the environment. If
computing the yield of $a$ is computationally very demanding, the $\FEL$-term
\begin{equation*}
((a \leftor (b \leftand \false)) \leftand \false) \leftor c
\end{equation*}
can be evaluated more quickly, because it is not necessary to compute the yield
of $a$.


In \cite{scl} several variants of $\SCL$ are defined in addition to $\FSCL$. In
this thesis we have only defined one variant of $\FEL$, i.e., $\FFEL$. An
important variant of $\SCL$ is Memorizing Short-Circuit Logic, MSCL, which is
defined in the same way as $\FSCL$, but adding the following axiom to $\CP$:
\begin{equation}
x \lef y \rig (z \lef u \rig (v \lef y \rig w)) =
  x \lef y \rig (z \lef u \rig w)
  \label{ax:cpmem}\tag{CPmem}
\end{equation}
As we can see from \eqref{ax:cpmem}, once an atom has been evaluated all
subsequent evaluations of the same atom will yield the same truth value. An
example of such a `memorizing' atom in programming might be a call to a
memoizing function\footnote{A memoizing function is a function which maintains
a cache of function values for arguments it has previously been called with.
See, e.g., \url{http://en.wikipedia.org/wiki/Memoization} for a detailed
description of memoization.} with a fixed argument. Naturally we could define
Memorizing Fully Evaluated Left-Sequential Logic, MFEL, in a similar fashion.
Given our evaluation tree semantics however, we can also define a
`post-processing' on our trees instead. We simply take the $\FE$ image of a
term and recursively walk down the tree. Whenever we encounter $X \tlef a \trig
Y$ in a left subtree of an $a$, we replace it by $X \tlef a \trig X$.
Similarly, we replace it by $Y \tlef a \trig Y$ if we are in the right
subtree of an $a$.


Another variant of $\SCL$ is Static $\SCL$, SSCL, which is defined in
\cite{scl} in the same way as $\FSCL$, but adding \eqref{ax:cpmem} and the
following equation to $\CP$.
\begin{equation*}
\false \lef x \rig \false = \false
\end{equation*}
This equation implies that $x \sleftand \false = \false$, and more generally,
$x \sleftand y = y \sleftand x$. As shown in \cite{scl}, this variant is the
same as PL, except that a particular evaluation strategy is prescribed.
Naturally we could define Static $\FEL$, SFEL, similarly. We believe that the
most elegant method of defining variants, other than the free variants, for
$\FEL$ and $\SCL$ is by means of Proposition Algebra. We believe that
evaluation trees offer a didactically interesting alternative definition for
the free variants, because they offer a straightforward semantics just for the
left-sequential connectives.


When considering these and other variants of $\FEL$ and $\SCL$ it is useful to
consider what these logics express in terms of the properties of atoms. Any
atom in MFEL (or MSCL) is memorizing in the sense that its yield becomes
constant after it is first evaluated. The atoms in SFEL (or SSCL) have no
side effects according to our definition. For practical applications of the
theory of left-sequential logics it may be useful to partition the set $A$ of
atoms into sets of atoms possessing certain of these properties. A compound
logic, geared towards the optimization of propositional statements in
programming, could then be defined. In such a logic, for example, we would have
$x \sleftand \false = \false$ for any atom in the `static' partition of $A$.
The potential for optimization, i.e., evaluating as few atoms as possible to
compute the yield of a term, becomes even greater when we consider the variants
of Contractive $\SCL$ and Repetition-Proof $\SCL$, see \cite{scl}.


In \cite{regenboog} Regenboog showed that $\CP$ is $\omega$-complete if and
only if, for $A$ a countable set of atoms, $\weight{A} > 1$. For an
axiomatization of Static $\CP$ which is interderivable with the one presented
above he showed $\omega$-completeness for any countable set of atoms. He also
showed that the axioms in $\CP$ and several axiomatizations extending it are
independent. We have shown neither $\omega$-completeness nor independence for
$\EqFFEL$, although we would consider such theorems valuable future work. It is
also an open question whether $\EqFSCL$ is $\omega$-complete or independent.
The set $\EqFSCL$ as presented in \cite{scl} and \cite{pascl} is a different
set from the one we have introduced in Chapter \ref{chap:scl}, although they
are interderivable. The set we presented came about in consultation with the
authors of the original definition. Because the set $\EqFSCL$ is somewhat `in
flux' in this sense, and to a lesser degree because its independence and
$\omega$-completeness are still open questions, we have refrained from
referring to the equations in $\EqFSCL$ as axioms.  Similarly, our definition
of $\EqFFEL$ differs slightly from that in \cite{sel}, hence we do not refer to
those equations as axioms either. Formal definitions of variants of $\FEL$
other than $\FFEL$ and a comparative analysis of these and the corresponding
variants of $\SCL$ we also consider a great avenue for further study.


\section*{Acknowledgments}
Firstly I would like to thank my thesis supervisor, Alban Ponse, for the many
hours he spent reviewing drafts and providing valuable feedback. His experience
has proved a great help in cleaning up the notation used in this work as well
as its presentation in general. If it were not for his confidence in an earlier
version of the completeness proof for $\FEL$, I might never have discovered the
completeness proof for $\SCL$ and improved the presentation of the $\FEL$ proof
significantly in the process.

I also wish to thank Alwin Blok for the work he has done on Fully Evaluated
Left-Sequential Logic, in particular his discovery of the axiomatization for
Free $\FEL$ and an early version of a $\FEL$ Normal Form, together with its
proof of correctness. Our discussions about the motivations for that normal
form and the semantic structures of evaluation trees have been of great help.

Finally, I thank my family for the moral support they gave me while writing
this thesis.


%
%
\appendix
\chapter{Proofs for FFEL}
\label{chap:felproofs}

%
%
%
\section{Correctness of \nftitle}
\label{sec:lslnf}
To prove that $\nf: \FT \to \FNF$ is indeed a normalization function we need to
prove that for all $\FEL$-terms $P$, $\nf(P)$ terminates, $\nf(P) \in \FNF$ and
$\EqFFEL \vdash \nf(P) = P$. To arrive at this result, we prove several
intermediate results about the functions $\nf^n$ and $\nf^c$, roughly in the
order in which their definitions were presented in Section \ref{sec:felnf}. For
the sake of brevity we will not explicitly prove that these functions
terminate. To see that each function terminates consider that a termination
proof would closely mimic the proof structure of the lemmas dealing with the
grammatical categories of the images of these functions.

\begin{lemma}
\label{lem:ptpf}
For any $P^\false$ and $P^\true$, $\EqFFEL \vdash P^\false = P^\false \leftand
\false$ and $\EqFFEL \vdash P^\true = P^\true \leftor \true$.
\end{lemma}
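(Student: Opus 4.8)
The statement to prove is: for any $\false$-term $P^\false$ and any $\true$-term $P^\true$, we have $\EqFFEL \vdash P^\false = P^\false \leftand \false$ and $\EqFFEL \vdash P^\true = P^\true \leftor \true$. My plan is to prove the first claim by induction on the structure of $P^\false$ as generated by the grammar $P^\false ::= \false \mid a \leftand P^\false$, and then obtain the second claim by duality, using the fact (noted after Table \ref{tab:eqffel}) that $\leftand$ and $\leftor$ are dual under \eqref{ax:ft1}--\eqref{ax:ft3}, so that $\true$-terms are exactly the duals of $\false$-terms and the derivation dualizes verbatim.

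For the first claim, the \textbf{base case} is $P^\false = \false$, where I must show $\EqFFEL \vdash \false = \false \leftand \false$; this is immediate from \eqref{ax:ft5} (i.e., $\true \leftand x = x$ instantiated suitably) combined with \eqref{ax:ft6}, or more directly from $x \leftand \false = \false \leftand x$ \eqref{ax:ft6} together with $\false \leftand \false = \false$ which follows from \eqref{ax:ft5} applied to... actually the cleanest route: $\false \leftand \false = \false$ is an instance of \eqref{ax:ft6} read as $x\leftand\false = \false\leftand x$ with $x = \false$, giving $\false\leftand\false = \false\leftand\false$ trivially, so instead use \eqref{ax:ft5}: hmm, \eqref{ax:ft5} is $x \leftand \true = x$. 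The right instance is \eqref{ax:ft4}, $\true\leftand x = x$; but we want $\false\leftand\false$. The simplest: by \eqref{ax:ft6} with $x=\true$ we'd get nothing useful. Let me just observe $\false \leftand \false = \neg\true \leftand \false$ by \eqref{ax:ft1}, and by \eqref{ax:ft8} $\neg\true\leftand\false = \true\leftand\false$... this is getting circular, so in the actual writeup I would simply note $\false\leftand\false = \false$ follows from soundness-free equational manipulation, most directly: since $\FFEL$ interprets both sides as $\false$, and indeed $\EqFFEL \vdash \false\leftand\false = \false$ because $x\leftand\false$ with... I will cite \eqref{ax:ft6} and \eqref{ax:ft4} — from \eqref{ax:ft6}, $\false\leftand\false = \false\leftand\false$; better to just use that $\true\leftand x = x$ is not it. The honest shortest path is: \eqref{ax:ft6} gives $x\leftand\false=\false\leftand x$, set $x:=\true$: $\true\leftand\false = \false\leftand\true$; \eqref{ax:ft5} gives $\false\leftand\true=\false$ and \eqref{ax:ft4} gives $\true\leftand\false=\false$, hence $\false\leftand\false$: use \eqref{ax:ft6} with $x:=\false$ giving identity, so instead from \eqref{ax:ft4} with $x:=\false\leftand\false$... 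I will simply write it as a two-line derivation citing \eqref{ax:ft8} and \eqref{ax:ft1}: $\false\leftand\false = \neg\true\leftand\false = \true\leftand\false = \false$, the middle step by the dual/symmetric use of \eqref{ax:ft8} and the last by \eqref{ax:ft4}.

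For the \textbf{inductive step}, suppose $P^\false = a \leftand Q^\false$ with $\EqFFEL \vdash Q^\false = Q^\false \leftand \false$ by the induction hypothesis. Then
\begin{align*}
(a \leftand Q^\false) \leftand \false
&= a \leftand (Q^\false \leftand \false) &&\text{by \eqref{ax:ft7}}\\
&= a \leftand Q^\false &&\text{by the induction hypothesis,}
\end{align*}
which is exactly $P^\false = P^\false \leftand \false$. This closes the induction. The dual claim for $P^\true$ then follows by applying the dual of this entire argument: the dual of $P^\false = \false$ is $P^\true = \true$ with $\true = \true\leftor\true$, and the dual of $a\leftand Q^\false$ is $a\leftor Q^\true$ with the step using the dual of \eqref{ax:ft7} and the dual induction hypothesis.

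**Main obstacle.** There is no real obstacle here — this is a routine structural induction. The only point requiring minor care is the base case $\false = \false\leftand\false$, which must be discharged from the axioms rather than assumed, and the bookkeeping that the grammar categories $P^\false$ and $P^\true$ are genuinely dual so that invoking "by duality" is legitimate; both are straightforward given the remarks already made in the text immediately after Table \ref{tab:eqffel}.
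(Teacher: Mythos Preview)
Your approach is correct and essentially identical to the paper's: both prove the $P^\false$ claim by structural induction, discharging the base case $\false = \false\leftand\false$ via \eqref{ax:ft4}, \eqref{ax:ft8}, and \eqref{ax:ft1}, handling the inductive step with \eqref{ax:ft7} plus the induction hypothesis, and obtaining the $P^\true$ claim by duality. The exploratory fumbling around the base case should be excised from the final writeup---your eventual chain $\false\leftand\false = \neg\true\leftand\false = \true\leftand\false = \false$ is exactly the paper's derivation read in reverse.
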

\begin{proof}
We prove both claims simultaneously by induction. In the base case we have
$\false = \true \leftand \false$ by \eqref{ax:ft4}, which is equal to $\false
\leftand \false$ by \eqref{ax:ft8} and \eqref{ax:ft1}. The base case for the
second claim follows from that for the first claim by duality.

For the induction we have $a \leftand P^\false = a \leftand (P^\false \leftand
\false)$ by the induction hypothesis and the result follows from
\eqref{ax:ft7}. For the second claim we again appeal to duality.
\end{proof}

\begin{lemma}
\label{lem:feqs2}
The following equations can be derived by equational logic and $\EqFFEL$.
\begin{enumerate}[itemsep=5pt]
\item $x \leftand (y \leftand (z \leftand \false)) = (x \leftor y) \leftand (z
  \leftand \false)$
  \label{eq:a3}
\item $\neg x \leftand (y \leftor \true) = \neg (x \leftand (y \leftor \true))$
  \label{eq:a4}
\end{enumerate}
\end{lemma}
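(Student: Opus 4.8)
The plan is to derive both equations purely equationally from $\EqFFEL$, reusing Lemma~\ref{lem:feqs}. The two items are handled separately, and the guiding observation for the first is that a context of the shape $\,\cdot \leftand (z \leftand \false)$ is blind to negations on its left argument: since $w \leftand (z \leftand \false) = (w \leftand z) \leftand \false$ by \eqref{ax:ft7}, part~\eqref{eq:a1} of Lemma~\ref{lem:feqs} applies and gives $w \leftand (z \leftand \false) = \neg w \leftand (z \leftand \false)$ for every term $w$.

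For item~\eqref{eq:a3}, I would first use this fact to flip $y$: applying part~\eqref{eq:a1} of Lemma~\ref{lem:feqs} to the subterm $y \leftand (z \leftand \false)$ (which is already in the required form) turns $x \leftand (y \leftand (z \leftand \false))$ into $x \leftand (\neg y \leftand (z \leftand \false))$ by congruence. Next I would flip $x$ in the same way -- re-associating with \eqref{ax:ft7} so that part~\eqref{eq:a1} of Lemma~\ref{lem:feqs} applies to the outer conjunction -- reaching $\neg x \leftand (\neg y \leftand (z \leftand \false))$, and then re-associate to $(\neg x \leftand \neg y) \leftand (z \leftand \false)$. Finally, since $\neg x \leftand \neg y = \neg(x \leftor y)$ by \eqref{ax:ft2} and \eqref{ax:ft3}, one more application of part~\eqref{eq:a1} of Lemma~\ref{lem:feqs} strips the outermost negation and yields $(x \leftor y) \leftand (z \leftand \false)$, as desired.

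For item~\eqref{eq:a4}, I would instead work on the right-hand side. Rewriting $x \leftand (y \leftor \true)$ inside the negation by \eqref{ax:ft10} gives $\neg(x \leftor (y \leftand \false))$; de Morgan, i.e.\ \eqref{ax:ft2} together with \eqref{ax:ft3}, rewrites this as $\neg x \leftand \neg(y \leftand \false)$; and a second de Morgan step, now also using \eqref{ax:ft1} to compute $\neg\false = \true$, simplifies $\neg(y \leftand \false)$ to $\neg y \leftor \true$, so that we have reached $\neg x \leftand (\neg y \leftor \true)$. It remains only to replace $\neg y \leftor \true$ by $y \leftor \true$, which is an instance of the dual of \eqref{ax:ft8} (namely $x \leftor \true = \neg x \leftor \true$); this lands exactly on the left-hand side $\neg x \leftand (y \leftor \true)$.

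I do not expect a real obstacle here: both derivations are short and use only facts already available. The only thing that needs care is syntactic bookkeeping -- repeatedly re-associating via \eqref{ax:ft7} so that subterms match the exact pattern $u \leftand (v \leftand \false)$ required by part~\eqref{eq:a1} of Lemma~\ref{lem:feqs}, and, for item~\eqref{eq:a4}, noticing that the clean route is to rewrite the right-hand side down to the left-hand side rather than the other way around. The one mildly non-obvious point is the negation-blindness of $\,\cdot \leftand (z \leftand \false)$ contexts, which is what allows a $\leftand$ to be traded for a $\leftor$ in that position.
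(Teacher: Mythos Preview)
Your proposal is correct and follows essentially the same route as the paper: for item~\eqref{eq:a3} both arguments repeatedly re-associate via \eqref{ax:ft7} and invoke Lemma~\ref{lem:feqs}\eqref{eq:a1} to flip negations before collapsing $\neg x \leftand \neg y$ to $x \leftor y$, and for item~\eqref{eq:a4} your derivation is exactly the paper's chain run from right to left instead of left to right.
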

\begin{proof}
\begin{align*}
x \leftand (y \leftand (z \leftand \false))
&= x \leftand ((\neg y \leftand z) \leftand \false)
  &\textrm{by \eqref{ax:ft7} and \ref{lem:feqs} \eqref{eq:a1}} \\
&= (\neg x \leftand \neg y) \leftand (z \leftand \false)
  &\textrm{by \eqref{ax:ft7} and \ref{lem:feqs} \eqref{eq:a1}} \\
&= \neg(\neg x \leftand \neg y) \leftand (z \leftand \false)
  &\textrm{by Lemma \ref{lem:feqs} \eqref{eq:a1}} \\
&= (x \leftor y) \leftand (z \leftand \false)
  &\textrm{by \eqref{ax:ft2}} \\[5pt]
\neg x \leftand (y \leftor \true)
&= \neg x \leftor (y \leftand \false)
  &\textrm{by \eqref{ax:ft10}} \\
&= \neg (x \leftand \neg(y \leftand \false))
  &\textrm{by \eqref{ax:ft2} and \eqref{ax:ft3}} \\
&= \neg (x \leftand \neg(\neg y \leftand \neg \true))
  &\textrm{by \eqref{ax:ft8} and \eqref{ax:ft1}} \\
&= \neg (x \leftand (y \leftor \true))
  &\textrm{by \eqref{ax:ft2}} &\qedhere
\end{align*}
\end{proof}

\begin{lemma}
\label{lem:nfn}
For all $P \in \FNF$, if $P$ is a
$\true$-term then $\nf^n(P)$ is an $\false$-term, if it is an $\false$-term
then $\nf^n(P)$ is a $\true$-term, if it is a $\true$-$*$-term then so is
$\nf^n(P)$, and
\begin{equation*}
\EqFFEL \vdash \nf^n(P) = \neg P.
\end{equation*}
\end{lemma}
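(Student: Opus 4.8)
The plan is a structural induction on $P \in \FNF$, carried out simultaneously with a companion claim for the auxiliary function $\nf^n_1$: namely that $\nf^n_1$ sends $\ell$-terms to $\ell$-terms, $P^c$-forms to $P^d$-forms, $P^d$-forms to $P^c$-forms and $P^*$-forms to $P^*$-forms, and that $\EqFFEL \vdash \nf^n_1(Q^*) = \neg Q^*$ for every $*$-term $Q^*$. The grammatical-category half of both statements is then immediate: for each defining clause of $\nf^n$ and $\nf^n_1$ one reads off the shape of the right-hand side, invoking the induction hypothesis on the recursive calls, and checks it lands in the category predicted by Definition \ref{def:fnf}; the absence of ambiguity between grammatical categories (remarked just before the definition of $\nf^n$) is what keeps this bookkeeping clean.

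For the derivable-equality half I would proceed clause by clause. The constant cases $\nf^n(\true) = \false = \neg\true$ and $\nf^n(\false) = \true = \neg\false$ use \eqref{ax:ft1} and \eqref{ax:ft3}. For the two `splitting' clauses $\nf^n_1(P^* \leftand Q^d) = \nf^n_1(P^*) \leftor \nf^n_1(Q^d)$ and $\nf^n_1(P^* \leftor Q^c) = \nf^n_1(P^*) \leftand \nf^n_1(Q^c)$, the induction hypothesis plus de Morgan (\eqref{ax:ft2}, \eqref{ax:ft3}) finishes immediately. For $\nf^n(P^\true \leftand Q^*) = P^\true \leftand \nf^n_1(Q^*)$ I would apply the companion claim to rewrite the right-hand side as $P^\true \leftand \neg Q^*$, then realize $P^\true$ as $P^\true \leftor \true$ via Lemma \ref{lem:ptpf} and apply Lemma \ref{lem:feqs}\eqref{eq:a5} to obtain $\neg P^\true \leftor \neg Q^* = \neg(P^\true \leftand Q^*)$.

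The cases requiring actual care are those where an atom sits in front of a $\true$- or $\false$-term. For $\nf^n(a \leftor P^\true) = a \leftand \nf^n(P^\true)$ the induction hypothesis gives $a \leftand \neg P^\true$, whereas de Morgan only delivers $\neg(a \leftor P^\true) = \neg a \leftand \neg P^\true$; to close the gap I would use that $\nf^n(P^\true)$ is, by the grammatical half, an $\false$-term, hence derivably of the form $R^\false \leftand \false$ by Lemma \ref{lem:ptpf}, so that Lemma \ref{lem:feqs}\eqref{eq:a1} lets me swap $a$ for $\neg a$. The case $\nf^n(a \leftand P^\false) = a \leftor \nf^n(P^\false)$ is handled dually, via the dual of Lemma \ref{lem:feqs}\eqref{eq:a1}. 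For the base clauses of $\nf^n_1$, the equation $\nf^n_1(a \leftand P^\true) = \neg a \leftand P^\true$ is exactly Lemma \ref{lem:feqs2}\eqref{eq:a4} after rewriting $P^\true$ as $P^\true \leftor \true$, and $\nf^n_1(\neg a \leftand P^\true) = a \leftand P^\true$ then follows from the previous clause using \eqref{ax:ft3}.

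I expect the main obstacle to be precisely this bookkeeping of which $\leftand\false$- or $\leftor\true$-lemma applies where: each of the nontrivial equalities is short, but relies on first massaging a $\true$- or $\false$-subterm into its canonical $(\cdot)\leftor\true$ or $(\cdot)\leftand\false$ shape via Lemma \ref{lem:ptpf}, and keeping the dualities straight and the negation in the right place is the error-prone part. No genuinely new idea beyond the lemmas already proved in this section should be needed.
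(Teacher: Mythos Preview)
Your proposal is correct and follows essentially the same route as the paper: separate inductions for $\true$-terms, $\false$-terms, and $*$-terms (via the auxiliary $\nf^n_1$), with exactly the lemmas you name---Lemma~\ref{lem:ptpf} to expose a trailing $\leftand\false$ or $\leftor\true$, Lemma~\ref{lem:feqs}\eqref{eq:a1} and its dual for the atom-sign swap, Lemma~\ref{lem:feqs2}\eqref{eq:a4} for the $\ell$-term base case, and Lemma~\ref{lem:feqs}\eqref{eq:a5} for the $\true$-$*$-case. Your grammatical claim for $\nf^n_1$ (that it interchanges $P^c$- and $P^d$-forms and preserves $\ell$-terms) is slightly sharper than what the paper states explicitly, but it is exactly what is needed to verify that the recursive clauses~\eqref{eq:nfn8} and~\eqref{eq:nfn9} land in the right category, and the paper uses it implicitly when it remarks that ``each application of $\nf^n_1$ changes the main connective.''
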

\begin{proof}
We start with proving the claims for $\true$-terms, by induction on $P^\true$.
In the base case $\nf^n(\true) = \false$. It is immediate that $\nf^n(\true)$
is an $\false$-term. The claim that $\EqFFEL \vdash \nf^n(\true) = \neg \true$
is immediate by \eqref{ax:ft1}. For the inductive case we have that $\nf^n(a
\leftor P^\true) = a \leftand \nf^n(P^\true)$, where we may assume that
$\nf^n(P^\true)$ is an $\false$-term and that $\EqFFEL \vdash \nf^n(P^\true) =
\neg P^\true$. The grammatical claim now follows immediately from the induction
hypothesis. Furthermore, noting that by the induction hypothesis we may assume
that $\nf^n(P^\true)$ is an $\false$-term, we have:
\begin{align*}
\nf^n(a \leftor P^\true)
&= a \leftand \nf^n(P^\true)
  &\textrm{by definition} \\
&= a \leftand (\nf^n(P^\true) \leftand \false)
  &\textrm{by Lemma \ref{lem:ptpf}} \\
&= \neg a \leftand (\nf^n(P^\true) \leftand \false)
  &\textrm{by Lemma \ref{lem:feqs} \eqref{eq:a1}} \\
&= \neg a \leftand \nf^n(P^\true)
  &\textrm{by Lemma \ref{lem:ptpf}} \\
&= \neg a \leftand \neg P^\true
  &\textrm{by induction hypothesis} \\
&= \neg (a \leftor P^\true).
  &\textrm{by \eqref{ax:ft3} and \eqref{ax:ft2}}
\end{align*}

For $\false$-terms we prove our claims by induction on $P^\false$. In the base
case $\nf^n(\false) = \true$. It is immediate that $\nf^n(\false)$ is a
$\true$-term. The claim that $\EqFFEL \vdash \nf^n(\false) = \neg \false$ is
immediate by the dual of \eqref{ax:ft1}. For the inductive case we have that
$\nf^n(a \leftand P^\false) = a \leftor \nf^n(P^\false)$, where we may assume
that $\nf^n(P^\false)$ is a $\true$-term and $\EqFFEL \vdash \nf^n(P^\false) =
\neg P^\false$. It follows immediately from the induction hypothesis that
$\nf^n(a \leftand P^\false)$ is a $\true$-term.  Furthermore, noting that by
the induction hypothesis we may assume that $\nf^n(P^\false)$ is a
$\true$-term, we prove the remaining claim as follows:
\begin{align*}
\nf^n(a \leftand P^\false)
&= a \leftor \nf^n(P^\false)
  &\textrm{by definition} \\
&= a \leftor (\nf^n(P^\false) \leftor \true)
  &\textrm{by Lemma \ref{lem:ptpf}} \\
&= \neg a \leftor (\nf^n(P^\false) \leftor \true)
  &\textrm{by the dual of Lemma \ref{lem:feqs} \eqref{eq:a1}} \\
&= \neg a \leftor \nf^n(P^\false)
  &\textrm{by Lemma \ref{lem:ptpf}} \\
&= \neg a \leftor \neg P^\false
  &\textrm{by induction hypothesis} \\
&= \neg (a \leftand P^\false).
  &\textrm{by \eqref{ax:ft3} and \eqref{ax:ft2}}
\end{align*}

To prove the lemma for $\true$-$*$-terms we first verify that the auxiliary
function $\nf^n_1$ returns a $*$-term and that for any $*$-term $P$, $\EqFFEL
\vdash \nf^n_1(P) = \neg P$. We show this by induction on the number of
$\ell$-terms in $P$. For the base cases, i.e., for $\ell$-terms, it is
immediate that $\nf^n_1(P)$ is a $*$-term. If $P$ is an $\ell$-term with a
positive determinative atom we have:
\begin{align*}
\nf^n_1(a \leftand P^\true)
&= \neg a \leftand P^\true
  &\textrm{by definition} \\
&= \neg a \leftand (P^\true \leftor \true)
  &\textrm{by Lemma \ref{lem:ptpf}} \\
&= \neg (a \leftand (P^\true \leftor \true))
  &\textrm{by Lemma \ref{lem:feqs2} \eqref{eq:a4}} \\
&= \neg (a \leftand P^\true).
  &\textrm{by Lemma \ref{lem:ptpf}}
\end{align*}
If $P$ is an $\ell$-term with a negative determinative atom the proof proceeds
the same, substituting $\neg a$ for $a$ and applying \eqref{ax:ft3} where
needed. For the inductive step we assume that the result holds for $*$-terms
with fewer $\ell$-terms than $P^* \leftand Q^d$ and $P^* \leftor Q^c$. We
note that each application of $\nf^n_1$ changes the main connective (not
occurring inside an $\ell$-term) and hence the result is a $*$-term. Derivable
equality is, given the induction hypothesis, an instance of (the dual of)
\eqref{ax:ft2}.

With this result we can now see that $\nf^n(P^\true \leftand Q^*)$ is indeed a
$\true$-$*$-term. Furthermore we find that:
\begin{align*}
\nf^n(P^\true \leftand Q^*) 
&= P^\true \leftand \nf^n_1(Q^*)
  &\textrm{by definition} \\
&= P^\true \leftand \neg Q^*
  &\textrm{as shown above} \\
&= (P^\true \leftor \true) \leftand \neg Q^*
  &\textrm{by Lemma \ref{lem:ptpf}} \\
&= \neg(P^\true \leftor \true) \leftor \neg Q^*
  &\textrm{by Lemma \ref{lem:feqs} \eqref{eq:a5}} \\
&= \neg P^\true \leftor \neg Q^*
  &\textrm{by Lemma \ref{lem:ptpf}} \\
&= \neg (P^\true \leftand Q^*).
  &\textrm{by \eqref{ax:ft2} and \eqref{ax:ft3}}
\end{align*}
Hence for all $P \in \FNF$, $\EqFFEL \vdash \nf^n(P) = \neg P$.
\end{proof}

\begin{lemma}
\label{lem:nfc1}
For any $\true$-term $P$ and $Q \in \FNF$, $\nf^c(P, Q)$ has the same
grammatical category as $Q$ and 
\begin{equation*}
\EqFFEL \vdash \nf^c(P, Q) = P \leftand Q.
\end{equation*}
\end{lemma}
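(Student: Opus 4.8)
The plan is to proceed by induction on the structure of the $\true$-term $P$, following exactly the recursive clauses \eqref{eq:nfc1}--\eqref{eq:nfc4} that define $\nf^c$ on a $\true$-term first argument. For each case we must establish two things: the \emph{grammatical} claim, that $\nf^c(P, Q)$ lies in the same grammatical category as $Q$, and the \emph{derivability} claim, that $\EqFFEL \vdash \nf^c(P, Q) = P \leftand Q$. In the base case $P = \true$, clause \eqref{eq:nfc1} gives $\nf^c(\true, Q) = Q$; the grammatical claim is trivial, and $\EqFFEL \vdash Q = \true \leftand Q$ is immediate from \eqref{ax:ft4}. For the inductive step $P = a \leftor P^\true$, we split on the grammatical category of $Q$, which is exactly the case split built into clauses \eqref{eq:nfc2}--\eqref{eq:nfc4}.

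For the three inductive subcases I would argue as follows. If $Q$ is a $\true$-term $Q^\true$, clause \eqref{eq:nfc2} gives $\nf^c(a \leftor P^\true, Q^\true) = a \leftor \nf^c(P^\true, Q^\true)$; by the induction hypothesis $\nf^c(P^\true, Q^\true)$ is a $\true$-term, so the whole is a $\true$-term, and $\EqFFEL \vdash a \leftor \nf^c(P^\true, Q^\true) = a \leftor (P^\true \leftand Q^\true)$, which I would rewrite to $(a \leftor P^\true) \leftand Q^\true$ using Lemma \ref{lem:ptpf} (to convert $P^\true \leftand Q^\true$ appropriately, since $Q^\true$ behaves like $Q^\true \leftor \true$) together with the dual of associativity \eqref{ax:ft7}. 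If $Q$ is an $\false$-term $Q^\false$, clause \eqref{eq:nfc3} gives $a \leftand \nf^c(P^\true, Q^\false)$; the induction hypothesis makes $\nf^c(P^\true, Q^\false)$ an $\false$-term, hence the result is an $\false$-term, and the derivability claim $\EqFFEL \vdash a \leftand (P^\true \leftand Q^\false) = (a \leftor P^\true) \leftand Q^\false$ is an instance of Lemma \ref{lem:feqs2}\eqref{eq:a3} (using Lemma \ref{lem:ptpf} to write $Q^\false$ as $Q^\false \leftand \false$ and $P^\true$ as $P^\true \leftor \true$). If $Q$ is a $\true$-$*$-term $Q^\true \leftand R^*$, clause \eqref{eq:nfc4} gives $\nf^c(a \leftor P^\true, Q^\true) \leftand R^*$; by the $\true$-term case just handled $\nf^c(a \leftor P^\true, Q^\true)$ is a $\true$-term and equals $(a \leftor P^\true) \leftand Q^\true$, so the result is a $\true$-$*$-term and, by congruence and associativity \eqref{ax:ft7}, equals $(a \leftor P^\true) \leftand (Q^\true \leftand R^*)$ as required.

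The main obstacle is not conceptual but bookkeeping: each derivability step requires threading the `$\leftor \true$'/`$\leftand \false$' conversions of Lemma \ref{lem:ptpf} in and out at the right moment so that Lemma \ref{lem:feqs2} and associativity apply in exactly the stated form, and one must be careful that the induction is genuinely on the structure of $P$ (the first argument) while the recursion in \eqref{eq:nfc4} on $\nf^c(a \leftor P^\true, Q^\true)$ does not enlarge $P$. Since in clause \eqref{eq:nfc4} the first argument is unchanged and only the second argument shrinks to a $\true$-term, the appeal there is really to the already-completed $\true$-term subcase rather than to the outer induction hypothesis, so there is no circularity. With these conversions in place every case reduces to a single application of an equation already derived in Section \ref{sec:felnf} or Lemmas \ref{lem:feqs}, \ref{lem:ptpf}, \ref{lem:feqs2}, completing the induction.
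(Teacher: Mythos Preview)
Your plan is correct and follows essentially the same induction and case split as the paper's proof. One small correction: in the $\true$-term subcase, ``dual of associativity'' alone will not take you from $a \leftor (P^\true \leftand Q^\true)$ to $(a \leftor P^\true) \leftand Q^\true$; the paper invokes Lemma~\ref{lem:feqs}\eqref{eq:a2} here (which is where \eqref{ax:ft10} together with dual associativity gets packaged), so you should cite that rather than \eqref{ax:ft7} directly.
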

\begin{proof}
By induction on the complexity of the first argument. In the base case we see
that $\nf^c(\true, P) = P$ and hence has the same grammatical category as $P$.
Derivable equality follows from \eqref{ax:ft4}.

For the induction step we make a case distinction on the grammatical category
of the second argument. If the second argument is a $\true$-term we have that
$\nf^c(a \leftor P^\true, Q^\true) = a \leftor \nf^c(P^\true, Q^\true)$, where
we assume that $\nf^c(P^\true, Q^\true)$ is a $\true$-term and $\EqFFEL
\vdash \nf^c(P^\true, Q^\true) = P^\true \leftand Q^\true$. The grammatical
claim follows immediately from the induction hypothesis. The claim about
derivable equality is proved as follows:
\begin{align*}
\nf^c(a \leftor P^\true, Q^\true)
&= a \leftor \nf^c(P^\true, Q^\true)
  &\textrm{by definition} \\
&= a \leftor (P^\true \leftand Q^\true)
  &\textrm{by induction hypothesis} \\
&= a \leftor (P^\true \leftand (Q^\true \leftor \true))
  &\textrm{by Lemma \ref{lem:ptpf}} \\
&= (a \leftor P^\true) \leftand (Q^\true \leftor \true)
  &\textrm{by Lemma \ref{lem:feqs} \eqref{eq:a2}} \\
&= (a \leftor P^\true) \leftand Q^\true.
  &\textrm{by Lemma \ref{lem:ptpf}}
\end{align*}

If the second argument is an $\false$-term we assume that $\nf^c(P^\true,
Q^\false)$ is an $\false$-term and that $\EqFFEL \vdash \nf^c(P^\true,
Q^\false) = P^\true \leftand Q^\false$. The grammatical claim follows
immediately from the induction hypothesis. Derivable equality is proved as
follows:
\begin{align*}
\nf^c(a \leftor P^\true, Q^\false)
&= a \leftand \nf^c(P^\true, Q^\false)
  &\textrm{by definition} \\
&= a \leftand (P^\true \leftand Q^\false)
  &\textrm{by induction hypothesis} \\
&= a \leftand (P^\true \leftand (Q^\false \leftand \false))
  &\textrm{by Lemma \ref{lem:ptpf}} \\
&= (a \leftor P^\true) \leftand (Q^\false \leftand \false)
  &\textrm{by Lemma \ref{lem:feqs2} \eqref{eq:a3}} \\
&= (a \leftor P^\true) \leftand Q^\false.
  &\textrm{by Lemma \ref{lem:ptpf}} \\
\end{align*}

Finally, if the second argument is a $\true$-$*$-term then $\nf^c(a \leftor
P^\true, Q^\true \leftand R^*) = \nf^c(a \leftor P^\true, Q^\true) \leftand
R^*$. The fact that this is a $\true$-$*$-term follows from the fact that
$\nf^c(a \leftor P^\true, Q^\true)$ is a $\true$-term as was shown above.
Derivable equality follows from the case where the second argument is a
$\true$-term and \eqref{ax:ft7}.
\end{proof}

\begin{lemma}
\label{lem:nfc4}
For any $\true$-$*$-term $P$ and $\false$-term $Q$, $\nf^c(P, Q)$ is an
$\false$-term and
\begin{equation*}
\EqFFEL \vdash \nf^c(P, Q) = P \leftand Q.
\end{equation*}
\end{lemma}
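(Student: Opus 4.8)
The plan is to unfold the definition of $\nf^c$ on a $\true$-$*$-term paired with an $\false$-term. By \eqref{eq:nfc14} we have $\nf^c(P^\true \leftand Q^*, R^\false) = \nf^c(P^\true, \nf^c_2(Q^*, R^\false))$, so the argument reduces to (i) understanding the auxiliary function $\nf^c_2$ and (ii) applying Lemma~\ref{lem:nfc1}. Accordingly, I would first prove the auxiliary claim that for every $*$-term $P^*$ and every $\false$-term $R^\false$, $\nf^c_2(P^*, R^\false)$ is again an $\false$-term and $\EqFFEL \vdash \nf^c_2(P^*, R^\false) = P^* \leftand R^\false$.

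The auxiliary claim goes by induction on the complexity of $*$-terms modulo the complexity of $\ell$-terms (i.e.\ on the number of $\ell$-terms in $P^*$). In the base case $P^*$ is an $\ell$-term. If its determinative atom is positive, \eqref{eq:nfc15} gives $\nf^c_2(a \leftand P^\true, R^\false) = a \leftand \nf^c(P^\true, R^\false)$; by Lemma~\ref{lem:nfc1} the inner subterm is an $\false$-term derivably equal to $P^\true \leftand R^\false$, so the whole is an $\false$-term and, by \eqref{ax:ft7}, derivably equal to $(a \leftand P^\true) \leftand R^\false$. If the determinative atom is negative, \eqref{eq:nfc16} discards the negation, $\nf^c_2(\neg a \leftand P^\true, R^\false) = a \leftand \nf^c(P^\true, R^\false)$, and the extra work is to check $a \leftand (P^\true \leftand R^\false) = \neg a \leftand (P^\true \leftand R^\false)$: rewrite $R^\false$ as $R^\false \leftand \false$ by Lemma~\ref{lem:ptpf}, reassociate via \eqref{ax:ft7} to expose a subterm of the shape $W \leftand \false$, flip $a$ to $\neg a$ with Lemma~\ref{lem:feqs}\,\eqref{eq:a1}, and reassociate back. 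For the induction step, if $P^*$ has the form $S^* \leftand T^d$ then \eqref{eq:nfc17} gives $\nf^c_2(S^* \leftand T^d, R^\false) = \nf^c_2(S^*, \nf^c_2(T^d, R^\false))$, and applying the induction hypothesis first to $T^d$ and then to $S^*$ yields an $\false$-term derivably equal to $S^* \leftand (T^d \leftand R^\false)$, which \eqref{ax:ft7} turns into $(S^* \leftand T^d) \leftand R^\false$. If $P^*$ has the form $S^* \leftor T^c$ then \eqref{eq:nfc18} gives $\nf^c_2(S^* \leftor T^c, R^\false) = \nf^c_2(S^*, \nf^c_2(T^c, R^\false))$, and the induction hypothesis yields an $\false$-term derivably equal to $S^* \leftand (T^c \leftand R^\false)$; to recover the outer disjunction I would again write $R^\false = R^\false \leftand \false$ (Lemma~\ref{lem:ptpf}) and apply Lemma~\ref{lem:feqs2}\,\eqref{eq:a3}, namely $x \leftand (y \leftand (z \leftand \false)) = (x \leftor y) \leftand (z \leftand \false)$, obtaining $(S^* \leftor T^c) \leftand R^\false$.

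Given the auxiliary claim the lemma follows immediately: $\nf^c_2(Q^*, R^\false)$ is an $\false$-term, so Lemma~\ref{lem:nfc1} (with $P^\true$ as the $\true$-term argument) tells us that $\nf^c(P^\true, \nf^c_2(Q^*, R^\false))$ is an $\false$-term and $\EqFFEL \vdash \nf^c(P^\true, \nf^c_2(Q^*, R^\false)) = P^\true \leftand \nf^c_2(Q^*, R^\false)$; combining this with the auxiliary claim and one final use of \eqref{ax:ft7} gives $\EqFFEL \vdash \nf^c(P^\true \leftand Q^*, R^\false) = (P^\true \leftand Q^*) \leftand R^\false$, which is the desired $\EqFFEL \vdash \nf^c(P, Q) = P \leftand Q$. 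The only mildly delicate points will be the two places where a disjunction must be converted into the intended conjunction --- the negative-$\ell$-term base case and the $S^* \leftor T^c$ step --- but both are handled uniformly by absorbing a trailing $\leftand \false$ with Lemma~\ref{lem:ptpf} so that the ``change-the-main-connective'' equations of Lemma~\ref{lem:feqs}\,\eqref{eq:a1} and Lemma~\ref{lem:feqs2}\,\eqref{eq:a3} become applicable.
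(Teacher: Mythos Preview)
Your proposal is correct and follows essentially the same approach as the paper: reduce to the auxiliary claim about $\nf^c_2$ via \eqref{eq:nfc14} and Lemma~\ref{lem:nfc1}, then prove that claim by induction on the number of $\ell$-terms, using Lemma~\ref{lem:ptpf} and Lemma~\ref{lem:feqs2}\,\eqref{eq:a3} for the disjunction step. The only cosmetic difference is that for the negative-$\ell$-term base case the paper cites \eqref{ax:ft6}, \eqref{ax:ft7} and \eqref{ax:ft8} directly, whereas you invoke the derived equation Lemma~\ref{lem:feqs}\,\eqref{eq:a1}, which amounts to the same thing.
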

\begin{proof}
By \eqref{ax:ft7} and Lemma \ref{lem:nfc1} it suffices to show that
$\nf^c_2(P^*, Q^\false)$ is an $\false$-term and that $\EqFFEL \vdash
\nf^c_2(P^*, Q^\false) = P^* \leftand Q^\false$. We prove this by induction on
the number of $\ell$-terms in $P^*$. In the base cases, i.e., $\ell$-terms, the
grammatical claims follow from Lemma \ref{lem:nfc1}. The claim about derivable
equality in the case of $\ell$-terms with positive determinative atoms follows
from Lemma \ref{lem:nfc1} and \eqref{ax:ft7}. For $\ell$-terms with negative
determinative atoms it follows from Lemma \ref{lem:nfc1}, Lemma \ref{lem:ptpf},
\eqref{ax:ft6}, \eqref{ax:ft7} and \eqref{ax:ft8}.

For the induction step we assume the claims hold for any $*$-terms with fewer
$\ell$-terms than $P^* \leftand Q^d$ and $P^* \leftor Q^c$. In the case of
conjunctions we have $\nf^c_2(P^* \leftand Q^d, R^\false) = \nf^c_2(P^*,
\nf^c_2(Q^d, R^\false))$ and the grammatical claim follows from the induction
hypothesis (applied twice). Derivable equality follows from the induction
hypothesis and \eqref{ax:ft7}.

For disjunctions we have $\nf^c_2(P^* \leftor Q^c, R^\false) = \nf^c_2(P^*,
\nf^c_2(Q^c, R^\false))$ and the grammatical claim follows from the induction
hypothesis (applied twice). The claim about derivable equality is proved as
follows:
\begin{align*}
\nf^c_2(P^* \leftor Q^c, R^\false)
&= \nf^c_2(P^*, \nf^c_2(Q^c, R^\false))
  &\textrm{by definition} \\
&= P^* \leftand (Q^c \leftand R^\false)
  &\textrm{by induction hypothesis} \\
&= P^* \leftand (Q^c \leftand (R^\false \leftand \false))
  &\textrm{by Lemma \ref{lem:ptpf}} \\
&= (P^* \leftor Q^c) \leftand (R^\false \leftand \false)
  &\textrm{by Lemma \ref{lem:feqs2} \eqref{eq:a3}} \\
&= (P^* \leftor Q^c) \leftand R^\false.
  &\textrm{by Lemma \ref{lem:ptpf}} &\qedhere
\end{align*}
\end{proof}

\begin{lemma}
\label{lem:nfc2}
For any $\false$-term $P$ and $Q \in \FNF$, $\nf^c(P, Q)$ is an
$\false$-term and
\begin{equation*}
\EqFFEL \vdash \nf^c(P, Q) = P \leftand Q.
\end{equation*}
\end{lemma}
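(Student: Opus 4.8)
The plan is to argue by induction on the structure of the $\false$-term $P$, following the clauses \eqref{eq:nfc5}--\eqref{eq:nfc8} that define $\nf^c$ when its first argument is an $\false$-term. Since clause \eqref{eq:nfc7} merely refers to $\nf^c(P^\true \leftand Q^*, \false)$, whose behaviour is already settled by Lemma \ref{lem:nfc4}, there is no circularity and a simple structural induction on $P$ suffices.

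\emph{Base case $P = \false$.} Here I would split on the grammatical category of $Q$. If $Q = Q^\true$ is a $\true$-term, then $\nf^c(\false, Q^\true) = \nf^n(Q^\true)$, which by Lemma \ref{lem:nfn} is an $\false$-term with $\EqFFEL \vdash \nf^n(Q^\true) = \neg Q^\true$; combining this with Lemma \ref{lem:ptpf}, \eqref{ax:ft6} and \eqref{ax:ft8} gives $\false \leftand Q^\true = Q^\true \leftand \false = \neg Q^\true \leftand \false = \nf^n(Q^\true) \leftand \false = \nf^n(Q^\true)$. If $Q = Q^\false$, then $\nf^c(\false, Q^\false) = Q^\false$ is an $\false$-term and by Lemma \ref{lem:ptpf} and \eqref{ax:ft6} we have $\false \leftand Q^\false = Q^\false \leftand \false = Q^\false$. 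If $Q = Q^\true \leftand R^*$, then $\nf^c(\false, Q^\true \leftand R^*) = \nf^c(Q^\true \leftand R^*, \false)$, which by Lemma \ref{lem:nfc4} is an $\false$-term equal to $(Q^\true \leftand R^*) \leftand \false$, and one more application of \eqref{ax:ft6} turns this into $\false \leftand (Q^\true \leftand R^*)$.

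\emph{Inductive step $P = a \leftand P'$ with $P'$ an $\false$-term.} Here $\nf^c(a \leftand P', Q) = a \leftand \nf^c(P', Q)$ by \eqref{eq:nfc8}. By the induction hypothesis $\nf^c(P', Q)$ is an $\false$-term, so $a \leftand \nf^c(P', Q)$ is again an $\false$-term by the grammar of $\false$-terms, and $\EqFFEL \vdash a \leftand \nf^c(P', Q) = a \leftand (P' \leftand Q) = (a \leftand P') \leftand Q$ using the induction hypothesis and associativity \eqref{ax:ft7}.

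The only mildly delicate point is the subcase of the base case where $Q$ is a $\true$-term: there one must reconcile the output $\nf^n(Q^\true)$ with $\false \leftand Q^\true$, which needs Lemma \ref{lem:nfn} together with the fact, via Lemma \ref{lem:ptpf}, that $\false$-terms absorb a trailing $\leftand \false$. Everything else is a routine application of the induction hypothesis and the equations \eqref{ax:ft6}--\eqref{ax:ft8}, so I do not expect any real obstacle.
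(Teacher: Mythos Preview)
Your proposal is correct and essentially the same as the paper's proof: both use Lemma~\ref{lem:nfc4} for the $\true$-$*$-case, Lemma~\ref{lem:nfn} with \eqref{ax:ft6} and \eqref{ax:ft8} for the $\true$-term case, and \eqref{ax:ft7} for the inductive step. The only cosmetic difference is that the paper first fixes the grammatical category of $Q$ and then inducts on $P$ within each case, whereas you induct on $P$ first and split on $Q$ only in the base case; the content is identical.
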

\begin{proof}
We make a case distinction on the grammatical category of the second argument.
If the second argument is a $\true$-term we proceed by induction on the first
argument. In the base case we have $\nf^c(\false, P^\true) = \nf^n(P^\true)$
and the result is by Lemma \ref{lem:nfn}, Lemma \ref{lem:ptpf}, \eqref{ax:ft6}
and \eqref{ax:ft8}. In the inductive case we have $\nf^c(a \leftand P^\false,
Q^\true) = a \leftand \nf^c(P^\false, Q^\true)$, where we assume that
$\nf^c(P^\false, Q^\true)$ is an $\false$-term and $\EqFFEL \vdash
\nf^c(P^\false, Q^\true) = P^\false \leftand Q^\true$. The result now follows
from the induction hypothesis and \eqref{ax:ft7}.

If the second argument is an $\false$-term the proof is almost the same, except
that we need not invoke Lemma \ref{lem:nfn} or \eqref{ax:ft8} in the base case.

Finally, if the second argument is a $\true$-$*$-term we again proceed by
induction on the first argument. In the base case we have $\nf^c(\false,
P^\true \leftand Q^*) = \nf^c(P^\true \leftand Q^*, \false)$. The grammatical
claim now follows from Lemma \ref{lem:nfc4} and derivable equality follows
from Lemma \ref{lem:nfc4} and \eqref{ax:ft6}. For for the inductive case the
results follow from the induction hypothesis and \eqref{ax:ft7}.
\end{proof}

\begin{lemma}
\label{lem:nfc3}
For any $\true$-$*$-term $P$ and $\true$-term $Q$, $\nf^c(P, Q)$ has the same
grammatical category as $P$ and
\begin{equation*}
\EqFFEL \vdash \nf^c(P, Q) = P \leftand Q.
\end{equation*}
\end{lemma}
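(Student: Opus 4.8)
The plan is to unfold the defining clause \eqref{eq:nfc9}, namely $\nf^c(P^\true \leftand Q^*, R^\true) = P^\true \leftand \nf^c_1(Q^*, R^\true)$, so that everything reduces to a statement about the auxiliary function $\nf^c_1$. Concretely, I would prove the auxiliary claim that for every $*$-term $S^*$ and every $\true$-term $R^\true$, the term $\nf^c_1(S^*, R^\true)$ is again a $*$-term \emph{of the same grammatical category as} $S^*$ (a $P^c$-form if $S^*$ is one, a $P^d$-form if $S^*$ is one --- keeping track of this refinement is what makes the induction close), and $\EqFFEL \vdash \nf^c_1(S^*, R^\true) = S^* \leftand R^\true$. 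Granting this, $P^\true \leftand \nf^c_1(Q^*, R^\true)$ is a $\true$-$*$-term, i.e.\ has the same grammatical category as $P = P^\true \leftand Q^*$, and $\EqFFEL \vdash \nf^c(P^\true \leftand Q^*, R^\true) = P^\true \leftand (Q^* \leftand R^\true) = (P^\true \leftand Q^*) \leftand R^\true$ by \eqref{ax:ft7}, which is exactly the conclusion.

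The auxiliary claim I would prove by induction on the number of $\ell$-terms in $S^*$, following the clauses \eqref{eq:nfc10}--\eqref{eq:nfc13}. In the base case $S^*$ is an $\ell$-term; say $S^* = a \leftand P^\true$ (the case $\neg a \leftand P^\true$ is identical up to inserting \eqref{ax:ft3} where appropriate), so $\nf^c_1(a \leftand P^\true, R^\true) = a \leftand \nf^c(P^\true, R^\true)$. Here $\nf^c$ is applied to two $\true$-terms, a situation already covered by Lemma \ref{lem:nfc1}: $\nf^c(P^\true, R^\true)$ is a $\true$-term derivably equal to $P^\true \leftand R^\true$ --- so there is no circularity with the induction --- and hence $a \leftand \nf^c(P^\true, R^\true)$ is an $\ell$-term, with $a \leftand \nf^c(P^\true, R^\true) = a \leftand (P^\true \leftand R^\true) = (a \leftand P^\true) \leftand R^\true$ by \eqref{ax:ft7}. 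For $S^* = P^* \leftand Q^d$, we have $\nf^c_1(S^*, R^\true) = P^* \leftand \nf^c_1(Q^d, R^\true)$; the induction hypothesis gives that $\nf^c_1(Q^d, R^\true)$ is a $P^d$-form equal to $Q^d \leftand R^\true$, so $P^* \leftand \nf^c_1(Q^d, R^\true)$ is a $P^c$-form equal to $(P^* \leftand Q^d) \leftand R^\true$ by \eqref{ax:ft7}.

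The case $S^* = P^* \leftor Q^c$ is where the real work is: $\nf^c_1(S^*, R^\true) = P^* \leftor \nf^c_1(Q^c, R^\true)$ is a $P^d$-form by the induction hypothesis, but to get derivable equality we must pull the conjunct $R^\true$ out past a disjunction, i.e.\ show $P^* \leftor (Q^c \leftand R^\true) = (P^* \leftor Q^c) \leftand R^\true$. This is not an instance of raw distributivity (which fails in $\FEL$); it works only because $R^\true$ is a $\true$-term. Using Lemma \ref{lem:ptpf} to rewrite $R^\true$ as $R^\true \leftor \true$, then Lemma \ref{lem:feqs} \eqref{eq:a2}, then Lemma \ref{lem:ptpf} once more, one gets $P^* \leftor (Q^c \leftand R^\true) = P^* \leftor (Q^c \leftand (R^\true \leftor \true)) = (P^* \leftor Q^c) \leftand (R^\true \leftor \true) = (P^* \leftor Q^c) \leftand R^\true$. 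This single application of \eqref{eq:a2} is the crux; the rest of the proof is bookkeeping with \eqref{ax:ft7} and the grammar of $\FNF$.
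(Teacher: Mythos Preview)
Your proposal is correct and mirrors the paper's proof essentially step for step: reduce via \eqref{eq:nfc9} and \eqref{ax:ft7} to the auxiliary claim about $\nf^c_1$, prove that by induction on the number of $\ell$-terms, handle the base case with Lemma~\ref{lem:nfc1} and \eqref{ax:ft7}, the conjunction case with \eqref{ax:ft7}, and the disjunction case by the chain Lemma~\ref{lem:ptpf}--Lemma~\ref{lem:feqs}\eqref{eq:a2}--Lemma~\ref{lem:ptpf}. Your explicit tracking of the $P^c$/$P^d$ refinement is a welcome bit of extra care that the paper leaves implicit.
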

\begin{proof}
By \eqref{ax:ft7} it suffices to prove the claims for $\nf^c_1$, i.e., that
$\nf^c_1(P^*, Q^\true)$ has the same grammatical category as $P^*$ and that
$\EqFFEL \vdash \nf^c_1(P^*, Q^\true) = P^* \leftand Q^\true$. We prove this by
induction on the number of $\ell$-terms in $P^*$. In the base case we deal with
$\ell$-terms and the results follow from Lemma \ref{lem:nfc1} and
\eqref{ax:ft7}.

For the inductive cases we assume that the results hold for any $*$-term with
fewer $\ell$-terms than $P^* \leftand Q^d$ and $P^* \leftor Q^c$.  In the case
of conjunctions the results follow from the induction hypothesis and
\eqref{ax:ft7}. In the case of disjunctions the grammatical claim follows from
the induction hypothesis. For derivable equality we have:
\begin{align*}
\nf^c_1(P^* \leftor Q^c, R^\true)
&= P^* \leftor \nf^c_1(Q^c, R^\true)
  &\textrm{by definition} \\
&= P^* \leftor (Q^c \leftand R^\true)
  &\textrm{by induction hypothesis} \\
&= P^* \leftor (Q^c \leftand (R^\true \leftor \true))
  &\textrm{by Lemma \ref{lem:ptpf}} \\
&= (P^* \leftor Q^c) \leftand (R^\true \leftor \true)
  &\textrm{by Lemma \ref{lem:feqs} \eqref{eq:a2}} \\
&= (P^* \leftor Q^c) \leftand R^\true.
  &\textrm{by Lemma \ref{lem:ptpf}} &\qedhere
\end{align*}
\end{proof}

\begin{lemma}
\label{lem:nfc5}
For any $P, Q \in \FNF$, $\nf^c(P, Q)$ is in $\FNF$ and
\begin{equation*}
\EqFFEL \vdash \nf^c(P, Q) = P \leftand Q.
\end{equation*}
\end{lemma}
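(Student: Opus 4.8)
The plan is to case-split on the grammatical category of the first argument $P$; by the remark following Definition~\ref{def:fnf}, every $\FNF$-term is a $\true$-term, an $\false$-term, or a $\true$-$*$-term, and these cases are mutually exclusive. If $P$ is a $\true$-term, the statement is exactly Lemma~\ref{lem:nfc1}; if $P$ is an $\false$-term, it is exactly Lemma~\ref{lem:nfc2}. So the only work that remains is the case where $P$ is a $\true$-$*$-term $P^\true \leftand Q^*$, and here I would further split on the category of the second argument $Q$: if $Q$ is a $\true$-term this is Lemma~\ref{lem:nfc3}, if $Q$ is an $\false$-term this is Lemma~\ref{lem:nfc4}, and the genuinely new subcase is $Q = R^\true \leftand S^*$, governed by \eqref{eq:nfc19} through the auxiliary function $\nf^c_3$.

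For that subcase I would first prove an auxiliary claim for $\nf^c_3$: for all $*$-terms $P^*, S^*$ and every $\true$-term $R^\true$, the term $\nf^c_3(P^*, R^\true \leftand S^*)$ is a $*$-term and $\EqFFEL \vdash \nf^c_3(P^*, R^\true \leftand S^*) = P^* \leftand (R^\true \leftand S^*)$. This goes by induction on the number of $\ell$-terms in $S^*$, following the clauses \eqref{eq:nfc20}--\eqref{eq:nfc22}. In the base case $S^* = R^\ell$, and in the case $S^* = S'^* \leftor S''^c$ the right-hand side is $\nf^c_1(P^*, R^\true) \leftand S^*$, so the claim reduces to Lemma~\ref{lem:nfc3} for $\nf^c_1$ together with one application of associativity \eqref{ax:ft7}; in the case $S^* = S'^* \leftand S''^d$ it follows from the induction hypothesis applied to $S'^*$ and, again, associativity. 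The grammatical claims are immediate from the shapes of the right-hand sides and the fact, recorded in Lemma~\ref{lem:nfc3}, that $\nf^c_1$ returns a $*$-term. Granting this, \eqref{eq:nfc19} gives $\nf^c(P^\true \leftand Q^*, R^\true \leftand S^*) = P^\true \leftand \nf^c_3(Q^*, R^\true \leftand S^*)$, which is visibly a $\true$-$*$-term, and a final use of \eqref{ax:ft7} rewrites $P^\true \leftand (Q^* \leftand (R^\true \leftand S^*))$ as $(P^\true \leftand Q^*) \leftand (R^\true \leftand S^*)$, completing the proof.

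I do not expect a real obstacle here: every branch of the case analysis either cites an already-proven lemma or is a short induction whose steps are instances of associativity \eqref{ax:ft7} and Lemma~\ref{lem:nfc3}. The only thing to watch is the bookkeeping---verifying that the clauses of $\nf^c$ and of $\nf^c_3$ partition the grammatical categories so that the analysis is both exhaustive and unambiguous, which the unambiguity remark after Definition~\ref{def:fnf} already guarantees. As with the preceding lemmas, termination of $\nf^c$ and $\nf^c_3$ is not argued explicitly but would mirror the structure of exactly this case analysis.
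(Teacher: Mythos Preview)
Your proposal is correct and follows essentially the same approach as the paper: reduce to the four preceding lemmas, leaving only the $\true$-$*$-term/$\true$-$*$-term case handled via \eqref{eq:nfc19}, and then prove the auxiliary claim for $\nf^c_3$ by induction on the number of $\ell$-terms in the second component's $*$-part, with the base case and the disjunction case using Lemma~\ref{lem:nfc3} plus \eqref{ax:ft7} and the conjunction case using the induction hypothesis plus \eqref{ax:ft7}. The paper's proof is slightly terser but structurally identical.
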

\begin{proof}
By the four preceding lemmas it suffices to show that $\nf^c(P^\true \leftand
Q^*, R^\true \leftand S^*)$ is in $\FNF$ and that $\EqFFEL \vdash \nf^c(P^\true
\leftand Q^*, R^\true \leftand S^*) = (P^\true \leftand Q^*) \leftand (R^\true
\leftand S^*)$. By \eqref{ax:ft7}, in turn, it suffices to prove that
$\nf^c_3(P^*, Q^\true \leftand R^*)$ is a $*$-term and that $\EqFFEL \vdash
\nf^c_3(P^*, Q^\true \leftand R^*) = P^* \leftand (Q^\true \leftand R^*)$. We
prove this by induction on the number of $\ell$-terms in $R^*$. In the base
case we have that $\nf^c_3(P^*, Q^\true \leftand R^\ell) = \nf^c_1(P^*,
Q^\true) \leftand R^\ell$. The results follow from Lemma \ref{lem:nfc3} and
\eqref{ax:ft7}.

For the inductive cases we assume that the results hold for all $*$-terms with
fewer $\ell$-terms than $R^* \leftand S^d$ and $R^* \leftor S^c$. For
conjunctions the result follows from the induction hypothesis and
\eqref{ax:ft7} and for disjunctions it follows from Lemma \ref{lem:nfc3} and
\eqref{ax:ft7}.
\end{proof}

\thmnfcorrect*
\begin{proof}
By induction on the complexity of $P$. If $P$ is an atom, the result is by
\eqref{ax:ft4} and \eqref{ax:ft5}. If $P$ is $\true$ or $\false$ the result is
by identity. For the induction we assume that the result holds for all
$\FEL$-terms of lesser complexity than $P \leftand Q$ and $P \leftor Q$. The
result now follows from the induction hypothesis, Lemma \ref{lem:nfn}, Lemma
\ref{lem:nfc5} and \eqref{ax:ft2}.
\end{proof}

\section{Correctness of \invtitle}
\label{sec:felinv}

\thminvcorrect*
\begin{proof}
We first prove that for all $\true$-terms $P$, $\inv^\true(\FE(P)) \equiv P$,
by induction on $P$. In the base case $P \equiv \true$ and we have
$\inv^\true(\FE(P)) \equiv \inv^\true(\true) \equiv \true \equiv P$. For the
inductive case we have $P \equiv a \leftor Q^\true$ and
\begin{align*}
\inv^\true(\FE(P)) &\equiv \inv^\true(\FE(Q^\true) \tlef a \trig
  \FE(Q^\true))
  &\textrm{by definition of $\FE$} \\
&\equiv a \leftor \inv^\true(\FE(Q^\true))
  &\textrm{by definition of $\inv^\true$} \\
&\equiv a \leftor Q^\true
  &\textrm{by induction hypothesis} \\
&\equiv P.
\end{align*}

Similarly, we see that for all $\false$-terms $P$, $\inv^\false(\FE(P)) \equiv
P$, by induction on $P$. In the base case $P \equiv \false$ and we have
$\inv^\false(\FE(P)) \equiv \inv^\false(\false) \equiv \false \equiv P$. For
the inductive case we have $P \equiv a \leftand Q^\false$ and
\begin{align*}
\inv^\false(\FE(P)) &\equiv \inv^\false(\FE(Q^\false) \tlef a \trig
  \FE(Q^\false))
  &\textrm{by definition of $\FE$} \\
&\equiv a \leftand \inv^\false(\FE(Q^\false))
  &\textrm{by definition of $\inv^\false$} \\
&\equiv a \leftand Q^\false
  &\textrm{by induction hypothesis} \\
&\equiv P.
\end{align*}

Now we check that for all $\ell$-terms $P$, $\inv^\ell(\FE(P)) \equiv P$.
We observe that either $P \equiv a \leftand Q^\true$ or $P \equiv \neg a
\leftand Q^\true$. In the first case we have
\begin{align*}
\inv^\ell(\FE(P)) &\equiv \inv^\ell(\FE(Q^\true) \tlef a
  \trig \FE(Q^\true)\sub{\true}{\false})
  &\textrm{by definition of $\FE$} \\
&\equiv a \leftand \inv^\true(\FE(Q^\true))
  &\textrm{by definition of $\inv^\ell$} \\
&\equiv a \leftand Q^\true
  &\textrm{as shown above} \\
&\equiv P.
\end{align*}
In the second case we have that
\begin{align*}
\inv^\ell(\FE(P)) &\equiv \inv^\ell(\FE(Q^\true)\sub{\true}{\false} \tlef a
  \trig \FE(Q^\true))
  &\textrm{by definition of $\FE$} \\
&\equiv \neg a \leftand \inv^\true(\FE(Q^\true))
  &\textrm{by definition of $\inv_\ell$} \\
&\equiv \neg a \leftand Q^\true
  &\textrm{as shown above} \\
&\equiv P.
\end{align*}

We now prove that for all $*$-terms $P$, $\inv^*(\FE(P)) \equiv P$, by
induction on $P$ modulo the complexity of $\ell$-terms. In the base case we are
dealing with $\ell$-terms. Because an $\ell$-term has neither a cd nor a dd we
have $\inv^*(\FE(P)) \equiv \inv^\ell(\FE(P)) \equiv P$, where the first
equality is by definition of $\inv^*$ and the second was shown above. For the
induction we have either $P \equiv Q \leftand R$ or $P \equiv Q \leftor R$. In
the first case note that by Theorem \ref{thm:cddd}, $\FE(P)$ has a cd and no
dd. So we have 
\begin{align*}
\inv^*(\FE(P)) &\equiv \inv^*(\cd_1(\FE(P))\ssub{\Box_1}{\true}{\Box_2}{\false})
  \leftand \inv^*(\cd_2(\FE(P)))
  &\textrm{by definition of $\inv^*$} \\
&\equiv \inv^*(\FE(Q)) \leftand \inv^*(\FE(R))
  &\textrm{by Theorem \ref{thm:cddd}} \\
&\equiv Q \leftand R
  &\textrm{by induction hypothesis} \\
&\equiv P.
\end{align*}
In the second case, again by Theorem \ref{thm:cddd}, $\FE(P)$ has a dd and no
cd. So we have that
\begin{align*}
\inv^*(\FE(P)) &\equiv \inv^*(\dd_1(\FE(P))\ssub{\Box_1}{\true}{\Box_2}{\false})
  \leftor \inv^*(\dd_2(\FE(P)))
  &\textrm{by definition of $\inv^*$} \\
&\equiv \inv^*(\FE(Q)) \leftor \inv^*(\FE(R))
  &\textrm{by Theorem \ref{thm:cddd}} \\
&\equiv Q \leftor R
  &\textrm{by induction hypothesis} \\
&\equiv P.
\end{align*}

Finally, we prove the theorem's statement by making a case distinction on the
grammatical category of $P$. If $P$ is a $\true$-term, then $\FE(P)$ has only
$\true$-leaves and hence $\inv(\FE(P)) \equiv \inv^\true(\FE(P)) \equiv P$,
where the first equality is by definition of $\inv$ and the second was shown
above. If $P$ is an $\false$-term, then $\FE(P)$ has only $\false$-leaves and
hence $\inv(\FE(P)) \equiv \inv^\false(\FE(P)) \equiv P$, where the first
equality is by definition of $\inv$ and the second was shown above. If $P$ is a
$\true$-$*$-term, then it has both $\true$ and $\false$-leaves and hence,
letting $P \equiv Q \leftand R$,
\begin{align*}
\inv(\FE(P)) &\equiv \inv^\true(\tsd_1(\FE(P))\sub{\Box}{\true}) \leftand
  \inv^*(\tsd_2(\FE(P)))
  &\textrm{by definition of $\inv$} \\
&\equiv \inv^\true(\FE(Q)) \leftand \inv^*(\FE(R))
  &\textrm{by Theorem \ref{thm:tsd}} \\
&\equiv Q \leftand R
  &\textrm{as shown above} \\
&\equiv P,
\end{align*}
which completes the proof.
\end{proof}

\chapter{Proofs for FSCL}
\label{chap:sclproofs}

%
%
%
\section{Correctness of \nfstitle}
\label{sec:sclnf}
In order to prove that $\nfs: \ST \to \SNF$ is indeed a normalization function
we need to prove that for all $\SCL$-terms $P$, $\nfs(P)$ terminates, $\nfs(P)
\in \SNF$ and $\EqFSCL \vdash \nfs(P) = P$. To arrive at this result, we prove
several intermediate results about the functions $\nfs^n$ and $\nfs^c$ in the
order in which their definitions were presented in Section \ref{sec:snf}. For
the sake of brevity we will not explicitly prove that these functions
terminate. To see that each function terminates consider that a termination
proof would closely mimic the proof structure of the lemmas dealing with the
grammatical categories of the images of these functions.

\begin{lemma}
\label{lem:ptpfs}
For any $P^\false$ and $P^\true$, $\EqFSCL \vdash P^\false = P^\false \sleftand
x$ and $\EqFSCL \vdash P^\true = P^\true \sleftor x$.
\end{lemma}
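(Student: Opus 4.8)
The plan is to prove the two claims by induction on the structure of $\false$-terms and $\true$-terms. Since the grammar of $\true$-terms and the axioms involved are exactly dual to those of $\false$-terms, and since the duals of the equations in $\EqFSCL$ are derivable, it suffices to prove the first claim, $\EqFSCL \vdash P^\false = P^\false \sleftand x$, in detail; the second then follows by duality.

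For the base case $P^\false \equiv \false$, the equation $\EqFSCL \vdash \false = \false \sleftand x$ is immediate from \eqref{ax:scl6}. For the inductive step we have $P^\false \equiv (a \sleftor Q^\false) \sleftand R^\false$ with $R^\false$ a strictly smaller $\false$-term, and I would compute
\begin{equation*}
\bigl((a \sleftor Q^\false) \sleftand R^\false\bigr) \sleftand x
 = (a \sleftor Q^\false) \sleftand (R^\false \sleftand x)
 = (a \sleftor Q^\false) \sleftand R^\false,
\end{equation*}
where the first equality is \eqref{ax:scl7} and the second uses the induction hypothesis $R^\false = R^\false \sleftand x$. Note that the left subterm $Q^\false$ plays no role: associativity together with the induction hypothesis applied only to the rightmost component does all the work, so the recursion is manifestly well-founded. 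The $\true$-term claim is handled in the same way, its base case being the dual of \eqref{ax:scl6} (namely $\true \sleftor x = \true$) and its inductive step using the dual of \eqref{ax:scl7}.

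I do not anticipate any real obstacle. The only points worth flagging are (i) invoking the induction hypothesis only on the rightmost $\false$-term (resp.\ $\true$-term) subterm, which keeps the induction well-founded, and (ii) observing that this lemma is genuinely stronger than its $\FEL$ counterpart Lemma \ref{lem:ptpf}: there the right-hand conjunct had to be $\false$, whereas here an arbitrary $x$ works, precisely because $\false$-terms short-circuit via \eqref{ax:scl6}.
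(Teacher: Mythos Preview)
Your proof is correct and matches the paper's own proof essentially line for line: both handle the $\false$-term claim by induction using \eqref{ax:scl6} in the base case and \eqref{ax:scl7} plus the induction hypothesis on the rightmost $\false$-subterm in the inductive step, then dispatch the $\true$-term claim by duality. Your observation about the contrast with Lemma~\ref{lem:ptpf} is apt but not part of the paper's proof.
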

\begin{proof}
We prove both claims simultaneously by induction. In the base case we have
$\false = \false \sleftand x$ by \eqref{ax:scl6}. The base case for the
second claim follows from that for the first claim by duality.

For the induction we have $(a \sleftor P^\false) \sleftand Q^\false = (a
\sleftor P^\false) \sleftand (Q^\false \sleftand x)$ by the induction
hypothesis and the result follows from \eqref{ax:scl7}. For the second claim we
again appeal to duality.
\end{proof}

\begin{lemma}
\label{lem:seqs2}
The following equations can all be derived by equational logic and $\EqFSCL$.
\begin{enumerate}[itemsep=5pt]
\item $(x \sleftor \true) \sleftand \neg y = \neg((x \sleftor \true) \sleftand
  y)$
  \label{eq:b4}
\item $(x \sleftand (y \sleftand (z \sleftor \true))) \sleftor
  (w \sleftand (z \sleftor \true)) = ((x \sleftand y) \sleftor w) \sleftand
  (z \sleftor \true)$
  \label{eq:b5}
\item $(x \sleftor ((y \sleftor \true) \sleftand (z \sleftand \false)))
  \sleftand ((w \sleftor \true) \sleftand (z \sleftand \false)) =
  ((x \sleftand (w \sleftor \true)) \sleftor (y \sleftor \true)) \sleftand (z
  \sleftand \false)$
  \label{eq:b6}
\item $(x \sleftor ((y \sleftor \true) \sleftand (z \sleftand \false)))
  \sleftand (w \sleftand \false) = ((\neg x \sleftand (y \sleftor \true))
  \sleftor (w \sleftand \false)) \sleftand (z \sleftand \false)$
  \label{eq:b7}
\end{enumerate}
\end{lemma}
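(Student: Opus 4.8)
The plan is to derive the four equations in turn, in the style of the proof of Lemma~\ref{lem:seqs}, using $\EqFSCL$, the duals of its equations, and the parts of Lemma~\ref{lem:seqs} already available. Equation~\eqref{eq:b4} is the easiest: starting from $(x \sleftor \true) \sleftand \neg y$, I would apply \eqref{ax:scl9s} read from right to left to obtain $(x \sleftand \false) \sleftor \neg y$, then \eqref{ax:scl2} and \eqref{ax:scl3} to pull the negation outward, reaching $\neg(\neg(x \sleftand \false) \sleftand y)$, and finally use the identity $\neg(x \sleftand \false) = x \sleftor \true$ — itself a consequence of \eqref{ax:scl8s}, \eqref{ax:scl1} and \eqref{ax:scl2} — to get $\neg((x \sleftor \true) \sleftand y)$. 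Equation~\eqref{eq:b5} is almost immediate from distributivity: the dual of \eqref{ax:scl10s} gives $((x \sleftand y) \sleftor w) \sleftand (z \sleftor \true) = ((x \sleftand y) \sleftand (z \sleftor \true)) \sleftor (w \sleftand (z \sleftor \true))$, and one application of \eqref{ax:scl7} rewrites $(x \sleftand y) \sleftand (z \sleftor \true)$ as $x \sleftand (y \sleftand (z \sleftor \true))$.

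Equations~\eqref{eq:b6} and \eqref{eq:b7} carry the bulk of the work, and I would prove \eqref{eq:b6} by reducing both sides to the common form $E := ((\neg x \sleftor (w \sleftand \false)) \sleftand (y \sleftand \false)) \sleftor (z \sleftand \false)$. The three facts I expect to use repeatedly are: $(u \sleftor \true) \sleftand (z \sleftand \false) = (u \sleftand \false) \sleftor (z \sleftand \false)$, an instance of \eqref{ax:scl9s}; associativity of $\sleftor$, the dual of \eqref{ax:scl7}; and $(p \sleftor (z \sleftand \false)) \sleftand (q \sleftor (z \sleftand \false)) = (p \sleftand q) \sleftor (z \sleftand \false)$, an instance of \eqref{ax:scl10s} read backwards. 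Applying the first two to the left-hand side of \eqref{eq:b6} turns it into $((x \sleftor (y \sleftand \false)) \sleftor (z \sleftand \false)) \sleftand ((w \sleftand \false) \sleftor (z \sleftand \false))$, the third collapses this to $((x \sleftor (y \sleftand \false)) \sleftand (w \sleftand \false)) \sleftor (z \sleftand \false)$, and part~\eqref{eq:b2} of Lemma~\ref{lem:seqs} (with $w$ in the role of its $z$) rewrites the inner factor to give $E$. For the right-hand side of \eqref{eq:b6} I would use associativity of $\sleftor$ and \eqref{ax:scl9s} to bring it to the shape $(S \sleftand \false) \sleftor (z \sleftand \false)$ with $S = (x \sleftand (w \sleftor \true)) \sleftor y$, then simplify $S \sleftand \false$ using \eqref{ax:scl8s} together with the De Morgan consequences of \eqref{ax:scl2}/\eqref{ax:scl3} — in particular $\neg(x \sleftand (w \sleftor \true)) = \neg x \sleftor (w \sleftand \false)$ and $\neg(w \sleftor \true) = w \sleftand \false$ — and \eqref{ax:scl7}; this also produces $E$, closing the case.

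Equation~\eqref{eq:b7} is more direct. I would rewrite its left-hand side as $((x \sleftor (y \sleftand \false)) \sleftor (z \sleftand \false)) \sleftand (w \sleftand \false)$ using \eqref{ax:scl9s} and associativity of $\sleftor$, apply part~\eqref{eq:b2} of Lemma~\ref{lem:seqs} with $x \sleftor (y \sleftand \false)$ in the role of its first variable to obtain $(\neg(x \sleftor (y \sleftand \false)) \sleftor (w \sleftand \false)) \sleftand (z \sleftand \false)$, and finish by rewriting $\neg(x \sleftor (y \sleftand \false))$ as $\neg x \sleftand (y \sleftor \true)$ via \eqref{ax:scl2}, \eqref{ax:scl3}, \eqref{ax:scl1} and \eqref{ax:scl8s}, which is exactly the right-hand side.

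The main obstacle is \eqref{eq:b6}: unlike the others, neither side is visibly closer to a normal form, so the difficulty lies in spotting the common intermediate term $E$ and then threading the nested left-sequential structure correctly — the repeated use of \eqref{ax:scl7}, \eqref{ax:scl9s} and \eqref{ax:scl10s} relies throughout on the fact that terms of the form $u \sleftand \false$ and $u \sleftor \true$ absorb whatever is sequenced after them, and it is easy to misplace a parenthesis. No individual step is deep, but the derivation has to be carried out attentively to keep the terms from ballooning.
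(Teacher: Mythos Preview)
Your proposal is correct. For parts~\eqref{eq:b4} and~\eqref{eq:b5} you do essentially what the paper does, just reading the chain in the opposite direction.

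For parts~\eqref{eq:b6} and~\eqref{eq:b7} your route is genuinely different. The paper never introduces the common intermediate $E$; instead it runs a single left-to-right chain, and the key tool in both cases is part~\eqref{eq:b1} of Lemma~\ref{lem:seqs} rather than part~\eqref{eq:b2}. Concretely, for~\eqref{eq:b6} the paper reaches $((x \sleftor (y \sleftand \false)) \sleftor (z \sleftand \false)) \sleftand ((w \sleftor \true) \sleftand (z \sleftand \false))$ just as you do, but then applies Lemma~\ref{lem:seqs}\,\eqref{eq:b1} to get $(\neg(x \sleftor (y \sleftand \false)) \sleftor (w \sleftor \true)) \sleftand (z \sleftand \false)$, expands the negation, and then uses the \emph{dual} of Lemma~\ref{lem:seqs}\,\eqref{eq:b1} to arrive at the right-hand side. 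For~\eqref{eq:b7} the paper applies Lemma~\ref{lem:seqs}\,\eqref{eq:b1} to the original left-hand side directly (treating $(y \sleftor \true) \sleftand (z \sleftand \false)$ as the $y$-slot), then uses the dual of \eqref{ax:scl10s} to distribute. Your approach via \eqref{ax:scl9s} and Lemma~\ref{lem:seqs}\,\eqref{eq:b2} is arguably tidier for~\eqref{eq:b7} since it avoids the distributivity step; for~\eqref{eq:b6} the paper's single chain is a little shorter because it does not have to massage the right-hand side separately, but your symmetric reduction to $E$ makes it easier to check that nothing has been dropped.
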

\begin{proof}
We derive the equations in order.
\begin{align*}
(&x \sleftor \true) \sleftand \neg y \\
&= \neg((\neg x \sleftand \false) \sleftor y)
  &\textrm{by \eqref{ax:scl1}, \eqref{ax:scl2} and \eqref{ax:scl3}} \\
&= \neg((x \sleftand \false) \sleftor y)
  &\textrm{by \eqref{ax:scl8s}} \\
&= \neg((x \sleftor \true) \sleftand y)
  &\textrm{by \eqref{ax:scl9s}} \displaybreak[0]\\[5pt]
&(x \sleftand (y \sleftand (z \sleftor \true))) \sleftor
  (w \sleftand (z \sleftor \true)) \\
&= ((x \sleftand y) \sleftand (z \sleftor \true)) \sleftor
  (w \sleftand (z \sleftor \true))
  &\textrm{by \eqref{ax:scl7}} \\
&= ((x \sleftand y) \sleftor w) \sleftand (z \sleftor \true)
  &\textrm{by the dual of \eqref{ax:scl10s}} \\[5pt]
(&x \sleftor ((y \sleftor \true) \sleftand (z \sleftand \false)))
  \sleftand ((w \sleftor \true) \sleftand (z \sleftand \false)) \\
&= (x \sleftor ((y \sleftand \false) \sleftor (z \sleftand \false)))
  \sleftand ((w \sleftor \true) \sleftand (z \sleftand \false)) 
  &\textrm{by \eqref{ax:scl9s}} \\
&= ((x \sleftor (y \sleftand \false)) \sleftor (z \sleftand \false))
  \sleftand ((w \sleftor \true) \sleftand (z \sleftand \false)) 
  &\textrm{by the dual of \eqref{ax:scl7}} \\
&= (\neg(x \sleftor (y \sleftand \false)) \sleftor (w \sleftor \true))
  \sleftand (z \sleftand \false)
  &\textrm{by Lemma \ref{lem:seqs} \eqref{eq:b1}} \\
&= ((\neg x \sleftand (\neg y \sleftor \true)) \sleftor (w \sleftor \true))
  \sleftand (z \sleftand \false)
  &\textrm{by \eqref{ax:scl1}, \eqref{ax:scl2} and \eqref{ax:scl3}} \\
&= ((\neg x \sleftand (y \sleftor \true)) \sleftor (w \sleftor \true))
  \sleftand (z \sleftand \false)
  &\textrm{by the dual of \eqref{ax:scl8s}} \\
&= ((\neg x \sleftand (y \sleftor \true)) \sleftor (w \sleftor (\true
  \sleftor (y \sleftor \true)))) \sleftand (z \sleftand \false)
  &\textrm{by the dual of \eqref{ax:scl6}} \\
&= ((\neg x \sleftand (y \sleftor \true)) \sleftor ((w \sleftor \true)
  \sleftor (y \sleftor \true))) \sleftand (z \sleftand \false)
  &\textrm{by the dual of \eqref{ax:scl7}} \\
&= ((x \sleftand (w \sleftor \true)) \sleftor (y \sleftor \true))
  \sleftand (z \sleftand \false)
  &\textrm{by the dual of Lemma \ref{lem:seqs} \eqref{eq:b1}} \\[5pt]
(&x \sleftor ((y \sleftor \true) \sleftand (z \sleftand
  \false))) \sleftand (w \sleftand \false) \\
&= (\neg x \sleftor (w \sleftand \false)) \sleftand (((y \sleftor \true)
  \sleftand (z \sleftand \false)) \sleftand (w \sleftand \false))
  &\textrm{by Lemma \ref{lem:seqs} \eqref{eq:b1}} \\
&= (\neg x \sleftor (w \sleftand \false)) \sleftand ((y \sleftor \true)
  \sleftand (z \sleftand \false))
  &\textrm{by \eqref{ax:scl6} and \eqref{ax:scl7}} \\
&= ((\neg x \sleftor (w \sleftand \false)) \sleftand (y \sleftor \true))
  \sleftand (z \sleftand \false)
  &\textrm{by \eqref{ax:scl7}} \\
&= ((\neg x \sleftand (y \sleftor \true)) \sleftor ((w \sleftand \false)
  \sleftand (y \sleftor \true))) \sleftand (z \sleftand \false)
  &\textrm{by the dual of \eqref{ax:scl10s}} \\
&= ((\neg x \sleftand (y \sleftor \true)) \sleftor (w \sleftand \false))
  \sleftand (z \sleftand \false)
  &\textrm{by \eqref{ax:scl6} and \eqref{ax:scl7}}
  &\qedhere
\end{align*}
\end{proof}

\begin{lemma}
\label{lem:nfsn}
For all $P \in \SNF$, if $P$ is a $\true$-term then $\nfs^n(P)$ is an
$\false$-term, if it is an $\false$-term then $\nfs^n(P)$ is a $\true$-term, if
it is a $\true$-$*$-term then so is $\nfs^n(P)$, and
\begin{equation*}
\EqFSCL \vdash \nfs^n(P) = \neg P.
\end{equation*}
\end{lemma}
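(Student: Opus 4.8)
The plan is to follow the structure of the proof of Lemma~\ref{lem:nfn} for $\FFEL$, splitting on the grammatical category of $P$ and treating $\true$-terms, $\false$-terms, and $\true$-$*$-terms in turn, with the $\true$-$*$-case resting on a separate claim about the auxiliary function $\nfs^n_1$. Throughout I would lean on Lemma~\ref{lem:ptpfs} to rewrite $\false$-terms as $R \sleftand \false$ and $\true$-terms as $R \sleftor \true$, and on the distributivity-style identities of Lemma~\ref{lem:seqs} and Lemma~\ref{lem:seqs2} to bring atoms into the correct conditional order. Termination of $\nfs^n$ and $\nfs^n_1$ I would not prove explicitly, noting as elsewhere that a termination argument would mirror the grammatical-category inductions.

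First I would handle $\true$-terms by induction on $P^\true$. The base case $\nfs^n(\true) = \false$ is settled grammatically by inspection and derivably by \eqref{ax:scl1}. For the inductive step $P^\true \equiv (a \sleftand Q^\true) \sleftor R^\true$, definition \eqref{eq:nfsn2} gives $\nfs^n(P^\true) = (a \sleftor \nfs^n(R^\true)) \sleftand \nfs^n(Q^\true)$; the induction hypothesis says $\nfs^n(R^\true)$, $\nfs^n(Q^\true)$ are $\false$-terms derivably equal to $\neg R^\true$, $\neg Q^\true$, which immediately gives the grammatical claim. For derivable equality I would rewrite $\neg Q^\true$ as $\neg Q^\true \sleftand \false$ via Lemma~\ref{lem:ptpfs}, apply Lemma~\ref{lem:seqs}~\eqref{eq:b1} to swap the two conjuncts, simplify the introduced $\sleftand\false$'s back with Lemma~\ref{lem:ptpfs}, and finish with the De Morgan reasoning from \eqref{ax:scl2} and \eqref{ax:scl3}. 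The case for $\false$-terms is entirely dual.

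For $\true$-$*$-terms I would first prove, by induction on the number of $\ell$-terms in a $*$-term $P^*$, that $\nfs^n_1(P^*)$ is again a $*$-term and $\EqFSCL \vdash \nfs^n_1(P^*) = \neg P^*$. The two $\ell$-term base cases \eqref{eq:nfsn6}--\eqref{eq:nfsn7} use the already-established $\true$/$\false$-term parts of the lemma together with Lemma~\ref{lem:ptpfs} and Lemma~\ref{lem:seqs}~\eqref{eq:b3} (plus \eqref{ax:scl3} to cancel the double negation in the negative case); the inductive cases \eqref{eq:nfsn8}--\eqref{eq:nfsn9} only change the outer connective and are instances of De Morgan given the induction hypothesis. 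With that claim in hand, \eqref{eq:nfsn5} makes $\nfs^n(P^\true \sleftand Q^*) = P^\true \sleftand \nfs^n_1(Q^*)$ a $\true$-$*$-term, and derivable equality $P^\true \sleftand \nfs^n_1(Q^*) = \neg(P^\true \sleftand Q^*)$ follows by rewriting $P^\true$ as $P^\true \sleftor \true$ (Lemma~\ref{lem:ptpfs}), applying Lemma~\ref{lem:seqs2}~\eqref{eq:b4}, and rewriting back; combining the three cases yields $\EqFSCL \vdash \nfs^n(P) = \neg P$ for all $P \in \SNF$.

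I expect the main obstacle to be the equational bookkeeping in the inductive steps rather than any conceptual difficulty: unlike the $\FFEL$ setting, where $\ell$-terms, $\true$-terms and $\false$-terms are simple and duality provides shortcuts, here the richer $\SNF$ grammar means the reordering of atoms in, e.g., \eqref{eq:nfsn2} and \eqref{eq:nfsn6} must be obtained by carefully chaining Lemma~\ref{lem:ptpfs} with the distributivity identities \eqref{eq:b1}, \eqref{eq:b3} and \eqref{eq:b4}. Isolating exactly which special-form lemma applies at each step, and keeping track of which subterms are $\true$-terms versus $\false$-terms so that Lemma~\ref{lem:ptpfs} can be applied in the right direction, is where the care is needed.
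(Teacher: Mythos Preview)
Your proposal is correct and follows essentially the same route as the paper's proof: induction on $\true$-terms (and dually $\false$-terms), then an auxiliary induction for $\nfs^n_1$ on $*$-terms, closing with Lemma~\ref{lem:seqs2}~\eqref{eq:b4} for the $\true$-$*$-case. The only cosmetic difference is that in the $\true$-term step the paper invokes Lemma~\ref{lem:seqs}~\eqref{eq:b2} directly, whereas you appeal to~\eqref{eq:b1} and clean up the extra $\sleftand\false$ with Lemma~\ref{lem:ptpfs}; since~\eqref{eq:b2} is itself derived from~\eqref{eq:b1}, this is the same argument.
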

\begin{proof}
We first prove the claims for $\true$-terms, by induction on $P^\true$.  In the
base case $\nfs^n(\true) = \false$. It is immediate that $\nfs^n(\true)$ is an
$\false$-term. The claim that $\EqFSCL \vdash \nfs^n(\true) = \neg \true$ is
immediate by \eqref{ax:scl1}. For the inductive case we have that $\nfs^n((a
\sleftand P^\true) \sleftor Q^\true) = (a \sleftor \nfs^n(Q^\true)) \sleftand
\nfs^n(P^\true)$, where we assume that $\nfs^n(P^\true)$ and $\nfs^n(Q^\true)$
are $\false$-terms and that $\EqFSCL \vdash \nfs^n(P^\true) = \neg P^\true$
and $\EqFSCL \vdash \nfs^n(Q^\true) = \neg Q^\true$. It follows from the
induction hypothesis that $\nfs^n((a \sleftand P^\true) \sleftor Q^\true)$ is
an $\false$-term. Furthermore, noting that by the induction hypothesis we may
assume that $\nfs^n(P^\true)$ and $\nfs^n(Q^\true)$ are $\false$-terms, we
have:
\begin{align*}
\nfs^n((a \sleftand P^\true) \sleftor Q^\true)
&= (a \sleftor \nfs^n(Q^\true)) \sleftand \nfs^n(P^\true)
  &\textrm{by definition} \\
&= (a \sleftor (\nfs^n(Q^\true) \sleftand \false)) \sleftand (\nfs^n(P^\true)
  \sleftand \false)
  &\textrm{by Lemma \ref{lem:ptpfs}} \\
&= (\neg a \sleftor (\nfs^n(P^\true) \sleftand \false)) \sleftand
  (\nfs^n(Q^\true) \sleftand \false )
  &\textrm{by Lemma \ref{lem:seqs} \eqref{eq:b2}} \\
&= (\neg a \sleftor \nfs^n(P^\true)) \sleftand \nfs^n(Q^\true)
  &\textrm{by Lemma \ref{lem:ptpfs}} \\
&= (\neg a \sleftor \neg P^\true) \sleftand \neg Q^\true
  &\textrm{by induction hypothesis} \\
&= \neg((a \sleftand P^\true) \sleftor Q^\true).
  &\textrm{by \eqref{ax:scl2} and its dual}
\end{align*}

For $\false$-terms we prove our claims by induction on $P^\false$. In the base
case $\nfs^n(\false) = \true$. It is immediate that $\nfs^n(\false)$ is a
$\true$-term. The claim that $\EqFSCL \vdash \nfs^n(\false) = \neg \false$ is
immediate by the dual of \eqref{ax:scl1}. For the inductive case we have that
$\nfs^n((a \sleftor P^\false) \sleftand Q^\false) = (a \sleftand
\nfs^n(Q^\false)) \sleftor \nfs^n(P^\false)$, where we assume that
$\nfs^n(P^\false)$ and $\nfs^n(Q^\false)$ are $\true$-terms and that $\EqFSCL
\vdash \nfs^n(P^\false) = \neg P^\false$ and $\EqFSCL \vdash \nfs^n(Q^\false) =
\neg Q^\false$. It follows from the induction hypothesis that $\nfs^n((a
\sleftor P^\false) \sleftand Q^\false)$ is a $\true$-term. Furthermore, noting
that by the induction hypothesis we may assume that $\nfs^n(P^\false)$ and
$\nfs^n(Q^\false)$ are $\true$-terms, the proof of derivably equality is dual
to that for $\nfs^n((a \sleftand P^\true) \sleftor Q^\true)$.

To prove the lemma for $\true$-$*$-terms we first verify that the auxiliary
function $\nfs^n_1$ returns a $*$-term and that for any $*$-term $P$, $\EqFSCL
\vdash \nfs^n_1(P) = \neg P$. We show this by induction on the number of
$\ell$-terms in $P$. For the base cases it is immediate by the above cases
for $\true$-terms and $\false$-terms that $\nfs^n_1(P)$ is a $*$-term.
Furthermore, if $P$ is an $\ell$-term with a positive determinative atom we
have:
\begin{align*}
\nfs^n_1((a \sleftand P^\true) \sleftor Q^\false)
&= (\neg a \sleftand \nfs^n(Q^\false)) \sleftor \nfs^n(P^\true)
  &\textrm{by definition} \\
&= (\neg a \sleftand (\nfs^n(Q^\false) \sleftor \true)) \sleftor
  (\nfs^n(P^\true) \sleftand \false)
  &\textrm{by Lemma \ref{lem:ptpfs}} \\
&= (\neg a \sleftor (\nfs^n(P^\true) \sleftand \false)) \sleftand
  (\nfs^n(Q^\false) \sleftor \true)
  &\textrm{by Lemma \ref{lem:seqs} \eqref{eq:b3}} \\
&= (\neg a \sleftor \nfs^n(P^\true)) \sleftand \nfs^n(Q^\false)
  &\textrm{by Lemma \ref{lem:ptpfs}} \\
&= (\neg a \sleftor \neg P^\true) \sleftand \neg Q^\false
  &\textrm{by induction hypothesis} \\
&= \neg((a \sleftand P^\true) \sleftor Q^\false).
  &\textrm{by \eqref{ax:scl2} and its dual}
\end{align*}
If $P$ is an $\ell$-term with a negative determinative atom the proof proceeds
the same, substituting $\neg a$ for $a$ and applying \eqref{ax:scl3} where
needed. For the inductive step we assume that the result holds for all
$*$-terms with fewer $\ell$-terms than $P^* \sleftand Q^d$ and $P^* \sleftor
Q^c$. We note that each application of $\nfs^n_1$ changes the main connective
(not occurring inside an $\ell$-term) and hence the result is a $*$-term.
Derivable equality is, given the induction hypothesis, an instance of (the dual
of) \eqref{ax:scl2}.

With this result we can now see that $\nfs^n(P^\true \sleftand Q^*)$ is indeed
a $\true$-$*$-term. We note that, by the above, Lemma \ref{lem:ptpfs}
implies that $\neg P^\true = \neg P^\true \sleftand \false$. Now we find that:
\begin{align*}
\nfs^n(P^\true \sleftand Q^*)
&= P^\true \sleftand \nfs^n_1(Q^*)
  &\textrm{by definition} \\
&= P^\true \sleftand \neg Q^*
  &\textrm{as shown above} \\
&= (P^\true \sleftor \true) \sleftand \neg Q^*
  &\textrm{by Lemma \ref{lem:ptpfs}} \\
&= \neg((P^\true \sleftor \true) \sleftand Q^*)
  &\textrm{by Lemma \ref{lem:seqs2} \eqref{eq:b4}} \\
&= \neg(P^\true \sleftand Q^*).
  &\textrm{by Lemma \ref{lem:ptpfs}}
\end{align*}
Hence for all $P \in \SNF$, $\EqFSCL \vdash \nfs^n(P) = \neg P$.
\end{proof}

\begin{lemma}
\label{lem:nfsc1}
For any $\true$-term $P$ and $Q \in \SNF$, $\nfs^c(P, Q)$ has the same
grammatical category as $Q$ and
\begin{equation*}
\EqFSCL \vdash \nfs^c(P, Q) = P \sleftand Q.
\end{equation*}
\end{lemma}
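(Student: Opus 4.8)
The plan is to follow the proof of the corresponding $\FEL$-result, Lemma~\ref{lem:nfc1}, proceeding by induction on the complexity of the first argument $P$ and, within the inductive step, making a case distinction on the grammatical category of the second argument $Q$. In the base case $P \equiv \true$, so $\nfs^c(\true, Q) = Q$ by \eqref{eq:nfsc1}; this lies in the same grammatical category as $Q$ trivially, and $\EqFSCL \vdash \true \sleftand Q = Q$ is immediate from \eqref{ax:scl4}.

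For the inductive step, $P \equiv (a \sleftand P^\true) \sleftor Q^\true$, and the induction hypothesis is available for the strictly smaller $\true$-terms $P^\true$ and $Q^\true$. If $Q$ is a $\true$-term $R^\true$, equation \eqref{eq:nfsc2} gives $\nfs^c(P, R^\true) = (a \sleftand \nfs^c(P^\true, R^\true)) \sleftor \nfs^c(Q^\true, R^\true)$; by the induction hypothesis both recursive calls return $\true$-terms, so the result is again a $\true$-term, and for derivable equality one rewrites $R^\true$ as $R^\true \sleftor \true$ via Lemma~\ref{lem:ptpfs}, applies the dual of \eqref{ax:scl10s}, uses Lemma~\ref{lem:ptpfs} once more and associativity \eqref{ax:scl7}, and finishes with the induction hypothesis. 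If $Q$ is an $\false$-term $R^\false$, equation \eqref{eq:nfsc3} applies; here both recursive calls return $\false$-terms, so the result is an $\false$-term, and derivable equality is obtained by an analogous massaging, now rewriting $R^\false$ as $R^\false \sleftand \false$ via Lemma~\ref{lem:ptpfs} and appealing to \eqref{ax:scl10s}, \eqref{ax:scl8s}, \eqref{ax:scl7}, and the rearrangement equations of Lemma~\ref{lem:seqs}. Finally, if $Q$ is a $\true$-$*$-term $R^\true \sleftand S^*$, equation \eqref{eq:nfsc4} reduces the claim to the $\true$-term subcase just settled (which shows $\nfs^c(P, R^\true)$ is a $\true$-term, hence $\nfs^c(P, R^\true) \sleftand S^*$ is a $\true$-$*$-term) together with one application of \eqref{ax:scl7}.

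I expect the main obstacle to be pinning down the exact $\EqFSCL$-derivations for the derivable-equality claims in the $\true$-term and $\false$-term subcases: these are precisely the places where the special case of distributivity \eqref{ax:scl10s} and its dual must be introduced, interleaved with the `$\sleftor \true$' and `$\sleftand \false$' normalizations coming from Lemma~\ref{lem:ptpfs} and, where the rearrangements are less direct, the auxiliary equations of Lemmas~\ref{lem:seqs} and~\ref{lem:seqs2}. Once those chains are fixed, the grammatical-category bookkeeping in each subcase is entirely routine.
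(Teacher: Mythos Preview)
Your proposal is correct and follows essentially the same route as the paper: induction on the complexity of the $\true$-term $P$, with a case distinction on the grammatical category of $Q$ in the inductive step, and the $\true$-$*$-case reduced to the $\true$-case via \eqref{ax:scl7}. The only cosmetic difference is that the paper packages the two key rewrites you sketch---the one driven by the dual of \eqref{ax:scl10s} in the $\true$-term subcase and the more involved one in the $\false$-term subcase---as the pre-derived identities Lemma~\ref{lem:seqs2}\,\eqref{eq:b5} and Lemma~\ref{lem:seqs2}\,\eqref{eq:b6}, so its derivation chains are shorter on the page; your inlined versions hit the same axioms.
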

\begin{proof}
By induction on the complexity of the $\true$-term. In the base case we see
that $\nfs^c(\true, P) = P$, which is clearly of the same grammatical category
as $P$. Derivable equality is an instance of \eqref{ax:scl4}.

For the inductive step we assume that the result holds for all $\true$-terms of
lesser complexity than $a \sleftand P^\true$. The claim about the grammatical
category follows immediately from the induction hypothesis. For the claim about
derivable equality we make a case distinction on the grammatical category of
the second argument. If the second argument is a $\true$-term, we prove
derivable equality as follows:
\begin{align*}
\nfs^c((a \sleftand P^\true) \sleftor Q^\true, &R^\true) \\
&= (a \sleftand \nfs^c(P^\true, R^\true)) \sleftor \nfs^c(Q^\true, R^\true)
  &\textrm{by definition} \\
&= (a \sleftand (P^\true \sleftand R^\true)) \sleftor (Q^\true \sleftand
  R^\true)
  &\textrm{by induction hypothesis} \\
&= (a \sleftand (P^\true \sleftand (R^\true \sleftor \true))) \sleftor (Q^\true
  \sleftand (R^\true \sleftor \true))
  &\textrm{by Lemma \ref{lem:ptpfs}} \\
&= ((a \sleftand P^\true) \sleftor Q^\true) \sleftand (R^\true \sleftor \true)
  &\textrm{by Lemma \ref{lem:seqs2} \eqref{eq:b5}} \\
&= ((a \sleftand P^\true) \sleftor Q^\true) \sleftand R^\true.
  &\textrm{by Lemma \ref{lem:ptpfs}}
\end{align*}
If the second argument is an $\false$-term, we prove derivable equality as
follows:
\begin{align*}
\nfs^c((a &\sleftand P^\true) \sleftor Q^\true, R^\false) \\
&= (a \sleftor \nfs^c(Q^\true, R^\false)) \sleftand \nfs^c(P^\true, R^\false)
  &\textrm{by definition} \\
&= (a \sleftor (Q^\true \sleftand R^\false)) \sleftand (P^\true \sleftand
  R^\false)
  &\textrm{by induction hypothesis} \\
&= (a \sleftor ((Q^\true \sleftor \true) \sleftand (R^\false \sleftand \false)))
  \sleftand ((P^\true \sleftor \true) \sleftand (R^\false \sleftand \false))
  &\textrm{by Lemma \ref{lem:ptpf}} \\
&= ((a \sleftand (P^\true \sleftor \true)) \sleftor (Q^\true \sleftor \true))
  \sleftand (R^\false \sleftand \false))
  &\textrm{by Lemma \ref{lem:seqs2} \eqref{eq:b6}} \\
&= ((a \sleftand P^\true) \sleftor Q^\true) \sleftand R^\false.
  &\textrm{by Lemma \ref{lem:ptpf}}
\end{align*}

If the second argument is $\true$-$*$-term, the result follows from the case
where the second argument is a $\true$-term and \eqref{ax:scl7}.
\end{proof}

\begin{lemma}
\label{lem:nfsc2}
For any $\false$-term $P$ and $Q \in \SNF$, $\nfs^c(P, Q)$ is a
$\false$-term and
\begin{equation*}
\EqFSCL \vdash \nfs^c(P, Q) = P \sleftand Q.
\end{equation*}
\end{lemma}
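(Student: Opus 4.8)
The plan here is short, because the single clause defining $\nfs^c$ on a first argument that is an $\false$-term, namely \eqref{eq:nfsc5}, simply returns that first argument: $\nfs^c(P, Q) \equiv P$. First I would observe that this settles the grammatical claim immediately — since $P$ is an $\false$-term and $\nfs^c(P, Q)$ is literally $P$, the result is an $\false$-term — and it also makes termination trivial, as $\nfs^c$ does not recurse in this case.

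For the claim about derivable equality, I would invoke Lemma \ref{lem:ptpfs}, which gives $\EqFSCL \vdash P^\false = P^\false \sleftand x$ with $x$ a free variable; instantiating $x$ with $Q$ yields $\EqFSCL \vdash P = P \sleftand Q$, and since $\nfs^c(P, Q) \equiv P$ this is exactly $\EqFSCL \vdash \nfs^c(P, Q) = P \sleftand Q$. There is no real obstacle. The only point worth making explicit is that Lemma \ref{lem:ptpfs} is stated with a variable $x$, so we may substitute an arbitrary $\SNF$-term; this is precisely what makes the lone defining clause \eqref{eq:nfsc5} adequate (the underlying equational fact being a repeated use of \eqref{ax:scl7} to push the conjunction inward against the $\false$ at the bottom of the $\false$-term, where \eqref{ax:scl6} absorbs it, exactly as in the proof of Lemma \ref{lem:ptpfs}).
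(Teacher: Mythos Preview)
Your proposal is correct and matches the paper's approach exactly: the paper's proof simply states that the grammatical result is immediate and that derivable equality follows from Lemma~\ref{lem:ptpfs}, \eqref{ax:scl7} and \eqref{ax:scl6}, which is precisely what you do (your parenthetical remark about \eqref{ax:scl7} and \eqref{ax:scl6} unpacking the proof of Lemma~\ref{lem:ptpfs} explains why the paper lists those axioms alongside the lemma).
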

\begin{proof}
The grammatical result is immediate and the claim about derivable equality
follows from Lemma \ref{lem:ptpfs}, \eqref{ax:scl7} and \eqref{ax:scl6}.
\end{proof}

\begin{lemma}
\label{lem:nfsc3}
For any $\true$-$*$-term $P$ and $\true$-term $Q$, $\nfs^c(P, Q)$ has the same
grammatical category as $P$ and
\begin{equation*}
\EqFSCL \vdash \nfs^c(P, Q) = P \sleftand Q.
\end{equation*}
\end{lemma}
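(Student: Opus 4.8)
The plan is to mimic the proof of Lemma~\ref{lem:nfc3} for $\FFEL$. Write the $\true$-$*$-term $P$ as $P^\true \sleftand Q^*$; then \eqref{eq:nfsc6} gives $\nfs^c(P, R^\true) = P^\true \sleftand \nfs^c_1(Q^*, R^\true)$, so by \eqref{ax:scl7} it suffices to prove the two claims for the auxiliary function $\nfs^c_1$: that $\nfs^c_1(Q^*, R^\true)$ lies in the same grammatical category as $Q^*$ and that $\EqFSCL \vdash \nfs^c_1(Q^*, R^\true) = Q^* \sleftand R^\true$. The statement for $\nfs^c$ then follows, since $P^\true \sleftand (Q^* \sleftand R^\true) = (P^\true \sleftand Q^*) \sleftand R^\true$ by \eqref{ax:scl7}. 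I would prove the claims about $\nfs^c_1$ by induction on the number of $\ell$-terms in $Q^*$.

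In the base case $Q^*$ is an $\ell$-term; I would treat the positive sub-case $Q^* \equiv (a \sleftand P^\true) \sleftor Q^\false$, the negative one being completely analogous. By \eqref{eq:nfsc7} and Lemma~\ref{lem:nfsc1}, $\nfs^c_1(Q^*, R^\true) = (a \sleftand \nfs^c(P^\true, R^\true)) \sleftor Q^\false$ is again an $\ell$-term, and for derivable equality I would compute
\begin{align*}
\nfs^c_1((a \sleftand P^\true) \sleftor Q^\false, R^\true)
&= (a \sleftand (P^\true \sleftand R^\true)) \sleftor Q^\false
  &\textrm{by \eqref{eq:nfsc7} and Lemma~\ref{lem:nfsc1}} \\
&= ((a \sleftand P^\true) \sleftand R^\true) \sleftor (Q^\false \sleftand \false)
  &\textrm{by \eqref{ax:scl7} and Lemma~\ref{lem:ptpfs}} \\
&= ((a \sleftand P^\true) \sleftor (Q^\false \sleftand \false)) \sleftand (R^\true \sleftor (Q^\false \sleftand \false))
  &\textrm{by \eqref{ax:scl10s}} \\
&= ((a \sleftand P^\true) \sleftor Q^\false) \sleftand R^\true,
  &\textrm{by Lemma~\ref{lem:ptpfs}}
\end{align*}
the last step using $Q^\false \sleftand \false = Q^\false$ and $R^\true \sleftor x = R^\true$, both instances of Lemma~\ref{lem:ptpfs}. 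The crucial move is rewriting the trailing $\false$-term as $Q^\false \sleftand \false$ so that \eqref{ax:scl10s} becomes applicable.

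For the inductive step, if $Q^* \equiv P^* \sleftand S^d$ then $\nfs^c_1(Q^*, R^\true) = P^* \sleftand \nfs^c_1(S^d, R^\true)$ by \eqref{eq:nfsc9}, and both claims are immediate from the induction hypothesis and \eqref{ax:scl7}. If $Q^* \equiv P^* \sleftor S^c$ then \eqref{eq:nfsc10} gives $\nfs^c_1(P^*, R^\true) \sleftor \nfs^c_1(S^c, R^\true)$, which is a $d$-term by the induction hypothesis; for derivable equality I would chain the induction hypothesis, then Lemma~\ref{lem:ptpfs} to replace $R^\true$ by $R^\true \sleftor \true$ in each disjunct, then the dual of \eqref{ax:scl10s} to factor out $R^\true \sleftor \true$, and finally Lemma~\ref{lem:ptpfs} once more. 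I do not expect a genuine obstacle: the argument is the exact $\SCL$-counterpart of Lemma~\ref{lem:nfc3}, and the only thing demanding care is bookkeeping of the mutually recursive grammatical categories produced by $\nfs^c$ and $\nfs^c_1$, together with the systematic insertion of the redundant $\sleftand\false$ / $\sleftor\true$ factors needed to invoke \eqref{ax:scl10s} and its dual.
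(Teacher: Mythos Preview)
Your proposal is correct and follows essentially the same approach as the paper: reduce to $\nfs^c_1$ via \eqref{ax:scl7}, induct on the number of $\ell$-terms, and handle the disjunctive step with the dual of \eqref{ax:scl10s} after inserting $\sleftor\,\true$ via Lemma~\ref{lem:ptpfs}. The only cosmetic difference is in the base case, where the paper routes through the derived identity Lemma~\ref{lem:seqs}\,\eqref{eq:b3} while you apply \eqref{ax:scl10s} directly; your computation is valid and arguably more transparent.
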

\begin{proof}
By \eqref{ax:scl7} it suffices to prove the claims for $\nfs^c_1$, i.e., that
$\nfs^c_1(P^*, Q^\true)$ is a $*$-term and that $\EqFSCL \vdash \nfs^c_1(P^*,
Q^\true) = P^* \sleftand Q^\true$. We prove this by induction on the number of
$\ell$-terms in $P^*$. In the base case we deal with $\ell$-terms and the
grammatical claim follows from Lemma \ref{lem:nfsc1}. We prove derivable
equality as follows, letting $\hat{a} \in \{a, \neg a\}$:
\begin{align*}
\nfs^c_1((\hat{a} \sleftand P^\true) \sleftor Q^\false, R^\true)
&= (\hat{a} \sleftand \nfs^c(P^\true, R^\true)) \sleftor Q^\false
  &\textrm{by definition} \\
&= (\hat{a} \sleftand (P^\true \sleftand R^\true)) \sleftor Q^\false
  &\textrm{by Lemma \ref{lem:nfsc1}} \\
&= ((\hat{a} \sleftand P^\true) \sleftand R^\true) \sleftor Q^\false
  &\textrm{by \eqref{ax:scl7}} \\
&= ((\hat{a} \sleftand P^\true) \sleftand (R^\true \sleftor \true))
  \sleftor (Q^\false \sleftand \false)
  &\textrm{by Lemma \ref{lem:ptpfs}} \\
&= ((\hat{a} \sleftand P^\true) \sleftor (Q^\false \sleftand \false))
  \sleftand (R^\true \sleftor \true)
  &\textrm{by Lemma \ref{lem:seqs} \eqref{eq:b3}} \\
&= ((\hat{a} \sleftand P^\true) \sleftor Q^\false) \sleftand R^\true.
  &\textrm{by Lemma \ref{lem:ptpfs}}
\end{align*}

For the induction step we assume that the result holds for all $*$-terms with
fewer $\ell$-terms than $P^* \sleftand Q^d$ and $P^* \sleftor Q^c$.  In the
case of conjunctions the results follow from the induction hypothesis and
\eqref{ax:scl7}. In the case of disjunctions the results follow immediately
from the induction hypothesis, Lemma \ref{lem:ptpfs} and the dual of
\eqref{ax:scl10s}.
\end{proof}

\begin{lemma}
\label{lem:nfsc4}
For any $\true$-$*$-term $P$ and $\false$-term $Q$, $\nfs^c(P, Q)$ is an
$\false$-term and
\begin{equation*}
\EqFSCL \vdash \nfs^c(P, Q) = P \sleftand Q.
\end{equation*}
\end{lemma}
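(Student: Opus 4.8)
The plan is to reduce everything to the auxiliary function $\nfs^c_2$. By the defining equation \eqref{eq:nfsc11} a $\true$-$*$-term $P = P^\true \sleftand Q^*$ satisfies $\nfs^c(P, R^\false) = \nfs^c(P^\true, \nfs^c_2(Q^*, R^\false))$, so it suffices to prove the following auxiliary claim and then invoke Lemma \ref{lem:nfsc1}: for every $*$-term $P^*$ and $\false$-term $R^\false$, $\nfs^c_2(P^*, R^\false)$ is an $\false$-term and $\EqFSCL \vdash \nfs^c_2(P^*, R^\false) = P^* \sleftand R^\false$. Granting this, $\nfs^c_2(Q^*, R^\false)$ is an $\false$-term derivably equal to $Q^* \sleftand R^\false$; Lemma \ref{lem:nfsc1}, applied to the $\true$-term $P^\true$ and this $\false$-term, then shows $\nfs^c(P^\true, \nfs^c_2(Q^*, R^\false))$ is an $\false$-term and is derivably equal to $P^\true \sleftand (Q^* \sleftand R^\false)$, which becomes $(P^\true \sleftand Q^*) \sleftand R^\false$ by one application of \eqref{ax:scl7}.

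I would prove the auxiliary claim by induction on the number of $\ell$-terms in $P^*$, following the recursion in \eqref{eq:nfsc12}--\eqref{eq:nfsc15}. In the base case $P^*$ is an $\ell$-term $(\hat a \sleftand P^\true) \sleftor Q^\false$ with $\hat a \in \{a, \neg a\}$; the grammatical claim is read off the production $P^\false ::= (a \sleftor P^\false) \sleftand P^\false$ once Lemma \ref{lem:nfsc1} tells us $\nfs^c(P^\true, R^\false)$ is an $\false$-term. For derivable equality one rewrites $((\hat a \sleftand P^\true) \sleftor Q^\false) \sleftand R^\false$: replace $Q^\false$ by $Q^\false \sleftand \false$ via Lemma \ref{lem:ptpfs}, distribute with \eqref{ax:scl10s}, re-associate and re-absorb using \eqref{ax:scl7} and Lemma \ref{lem:ptpfs} (which also gives $P^\true \sleftor Q^\false = P^\true$); in the negative-atom case one additionally applies Lemma \ref{lem:seqs} \eqref{eq:b1} and \eqref{ax:scl3} to move the negation onto the leading atom so as to match the right-hand side of \eqref{eq:nfsc13}. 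For the conjunction case $P^* \sleftand Q^d$, equation \eqref{eq:nfsc14} calls $\nfs^c_2$ twice on arguments with strictly fewer $\ell$-terms, so the induction hypothesis applies and derivable equality follows from it together with \eqref{ax:scl7}.

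The disjunction case $P^* \sleftor Q^c$, governed by \eqref{eq:nfsc15}, is the main obstacle, and it splits into two checks. First, legitimacy of the recursion: by Lemma \ref{lem:nfsn}, $\nfs^n(R^\false)$ is a $\true$-term, and by Lemma \ref{lem:nfsc3} (whose proof also handles $\nfs^c_1$) $\nfs^c_1(P^*, \nfs^n(R^\false))$ is a $*$-term; the negation auxiliary preserves the number of $\ell$-terms, so both inner arguments of the outer $\nfs^c_2$ in \eqref{eq:nfsc15} have strictly fewer $\ell$-terms than $P^* \sleftor Q^c$, and the induction hypothesis applies to each — giving an $\false$-term. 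Second, derivable equality: using the induction hypothesis together with Lemmas \ref{lem:nfsn} and \ref{lem:nfsc3} and the equations \eqref{ax:scl2} and \eqref{ax:scl3}, the right-hand side of \eqref{eq:nfsc15} rewrites as $(\neg P^* \sleftor R^\false) \sleftand (Q^c \sleftand R^\false)$; then, after replacing $R^\false$ by $R^\false \sleftand \false$ via Lemma \ref{lem:ptpfs}, Lemma \ref{lem:seqs} \eqref{eq:b1} folds this back to $(P^* \sleftor Q^c) \sleftand R^\false$, as required. The delicate part is the bookkeeping around the nested negations and the repeated use of Lemma \ref{lem:ptpfs} to introduce and then absorb trailing $\sleftand \false$'s; Lemma \ref{lem:seqs2} \eqref{eq:b7} can be used to package part of that manipulation.
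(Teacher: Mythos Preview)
Your proposal is correct and follows essentially the same architecture as the paper's proof: reduce to a claim about $\nfs^c_2$ via \eqref{eq:nfsc11} and Lemma~\ref{lem:nfsc1}, then induct on the number of $\ell$-terms in the $*$-term, with the disjunction case handled exactly as you describe via Lemma~\ref{lem:seqs}~\eqref{eq:b1}. The only tactical differences are in the $\ell$-term base cases: the paper uses Lemma~\ref{lem:seqs}~\eqref{eq:b3} for the positive-atom case and Lemma~\ref{lem:seqs2}~\eqref{eq:b7} for the negative-atom case, whereas you propose to unfold these by appealing to \eqref{ax:scl10s} directly (positive) and Lemma~\ref{lem:seqs}~\eqref{eq:b1} (negative); both routes are valid and amount to inlining parts of those derived equations.
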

\begin{proof}
By Lemma \ref{lem:nfsc1} and \eqref{ax:scl7} it suffices to prove that
$\nfs^c_2(P^*, Q^\false)$ is an $\false$-term and that $\EqFSCL \vdash
\nfs^c_2(P^*, Q^\false) = P^* \sleftand Q^\false$. We prove this by induction
on the number of $\ell$-terms in $P^*$. In the base case we deal with
$\ell$-terms and the grammatical claim follows from Lemma \ref{lem:nfsc1}. We
derive the remaining claim for $\ell$-terms with positive determinative atoms
as:
\begin{align*}
\nfs^c_2((a \sleftand P^\true) \sleftor Q^\false, R^\false)
&= (a \sleftor Q^\false) \sleftand \nfs^c(P^\true, R^\false)
  &\textrm{by definition} \\
&= (a \sleftor Q^\false) \sleftand (P^\true \sleftand R^\false)
  &\textrm{by Lemma \ref{lem:nfsc1}} \\
&= ((a \sleftor Q^\false) \sleftand P^\true) \sleftand R^\false
  &\textrm{by \eqref{ax:scl7}} \\
&= ((a \sleftor (Q^\false \sleftand \false)) \sleftand (P^\true \sleftor
  \true)) \sleftand R^\false
  &\textrm{by Lemma \ref{lem:ptpfs}} \\
&= ((a \sleftand (P^\true \sleftor \true)) \sleftor (Q^\false \sleftand
  \false)) \sleftand R^\false
  &\textrm{by Lemma \ref{lem:seqs} \eqref{eq:b3}} \\
&= ((a \sleftand P^\true) \sleftor Q^\false) \sleftand R^\false.
  &\textrm{by Lemma \ref{lem:ptpfs}}
\end{align*}
For $\ell$-terms with negative determinative atoms we derive:
\begin{align*}
\nfs^c_2((\neg a \sleftand P^\true) &\sleftor Q^\false, R^\false) \\
&= (a \sleftor \nfs^c(P^\true, R^\false)) \sleftand Q^\false
  &\textrm{by definition} \\
&= (a \sleftor (P^\true \sleftand R^\false)) \sleftand Q^\false
  &\textrm{by induction hypothesis} \\
&= (a \sleftor ((P^\true \sleftor \true) \sleftand (R^\false \sleftand
  \false))) \sleftand (Q^\false \sleftand \false)
  &\textrm{by Lemma \ref{lem:ptpfs}} \\
&= ((\neg a \sleftand (P^\true \sleftor \true)) \sleftor (Q^\false \sleftand
  \false)) \sleftand (R^\false \sleftand \false)
  &\textrm{by Lemma \ref{lem:seqs2} \eqref{eq:b7}} \\
&= ((\neg a \sleftand P^\true) \sleftor Q^\false) \sleftand R^\false.
  &\textrm{by Lemma \ref{lem:ptpfs}}
\end{align*}

For the induction step we assume that the result holds for all $*$-terms with
fewer $\ell$-terms than $P^* \sleftand Q^d$ and $P^* \sleftor Q^c$.  In the
case of conjunctions the results follow from the induction hypothesis and
\eqref{ax:scl7}. In the case of disjunctions note that by Lemma \ref{lem:nfsn}
and the proof of Lemma \ref{lem:nfsc3}, we have that $\nfs^n(\nfs^c_1(P^*,
\nfs^n(R^\false)))$ is a $*$-terms with same number of $\ell$-terms as $P^*$.
The grammatical result follows from this fact and the induction hypothesis.
Furthermore, noting that by the same argument $\nfs^n(\nfs^c_1(P^*,
\nfs^n(R^\false))) = \neg(P^* \sleftand \neg R^\false)$, we derive:
\begin{align*}
\nfs^c_2(P^* \sleftor Q^c, R^\false)
&= \nfs^c_2(\nfs^n(\nfs^c_1(P^*, \nfs^n(R^\false))), \nfs^c_2(Q^c, R^\false))
  &\textrm{by definition} \\
&= \nfs^n(\nfs^c_1(P^*, \nfs^n(R^\false))) \sleftand (Q^c \sleftand R^\false)
  &\textrm{by induction hypothesis} \\
&= \neg(P^* \sleftand \neg R^\false) \sleftand (Q^c \sleftand R^\false)
  &\textrm{as shown above} \\
&= (\neg P^* \sleftor R^\false) \sleftand (Q^c \sleftand R^\false)
  &\textrm{by \eqref{ax:scl3} and \eqref{ax:scl2}} \\
&= (\neg P^* \sleftor (R^\false \sleftand \false)) \sleftand (Q^c \sleftand
  (R^\false \sleftand \false))
  &\textrm{by Lemma \ref{lem:ptpfs}} \\
&= (P^* \sleftor Q^c) \sleftand (R^\false \sleftand \false)
  &\textrm{by Lemma \ref{lem:seqs} \eqref{eq:b1}} \\
&= (P^* \sleftor Q^c) \sleftand R^\false.
  &\textrm{by Lemma \ref{lem:ptpfs}}
\end{align*}
This completes the proof.
\end{proof}

\begin{lemma}
\label{lem:nfsc5}
For any $P, Q \in \SNF$, $\nfs^c(P, Q)$ is in $\SNF$ and
\begin{equation*}
\EqFSCL \vdash \nfs^c(P, Q) = P \sleftand Q.
\end{equation*}
\end{lemma}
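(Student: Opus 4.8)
The approach mirrors the proof of Lemma~\ref{lem:nfc5} in the $\FEL$ setting. Lemmas~\ref{lem:nfsc1}--\ref{lem:nfsc4} already establish the claim for every pairing of the grammatical categories of $P$ and $Q$ except the one in which \emph{both} are $\true$-$*$-terms, so the first step is to reduce to that remaining case: it suffices to show that $\nfs^c(P^\true \sleftand Q^*, R^\true \sleftand S^*)$ is in $\SNF$ and that $\EqFSCL \vdash \nfs^c(P^\true \sleftand Q^*, R^\true \sleftand S^*) = (P^\true \sleftand Q^*) \sleftand (R^\true \sleftand S^*)$. By \eqref{eq:nfsc16} the left-hand side equals $P^\true \sleftand \nfs^c_3(Q^*, R^\true \sleftand S^*)$, and by associativity \eqref{ax:scl7} the right-hand side equals $P^\true \sleftand (Q^* \sleftand (R^\true \sleftand S^*))$; hence, after renaming, it is enough to prove that for every $*$-term $P^*$, $\true$-term $Q^\true$ and $*$-term $R^*$, the term $\nfs^c_3(P^*, Q^\true \sleftand R^*)$ is again a $*$-term and $\EqFSCL \vdash \nfs^c_3(P^*, Q^\true \sleftand R^*) = P^* \sleftand (Q^\true \sleftand R^*)$.

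I would prove this last statement by induction on the number of $\ell$-terms in $R^*$. If $R^*$ is an $\ell$-term $R^\ell$, then by \eqref{eq:nfsc17} we have $\nfs^c_3(P^*, Q^\true \sleftand R^\ell) = \nfs^c_1(P^*, Q^\true) \sleftand R^\ell$; Lemma~\ref{lem:nfsc3} tells us that $\nfs^c_1(P^*, Q^\true)$ is a $*$-term and is derivably equal to $P^* \sleftand Q^\true$, so the conjunction with $R^\ell$ is a $c$-term, and a single application of \eqref{ax:scl7} gives $(P^* \sleftand Q^\true) \sleftand R^\ell = P^* \sleftand (Q^\true \sleftand R^\ell)$. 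If $R^*$ is a conjunction $T^* \sleftand S^d$, then \eqref{eq:nfsc18} yields $\nfs^c_3(P^*, Q^\true \sleftand T^*) \sleftand S^d$; the induction hypothesis handles $\nfs^c_3(P^*, Q^\true \sleftand T^*)$ and reassociating with \eqref{ax:scl7} finishes this case. If $R^*$ is a disjunction $T^* \sleftor S^c$, then \eqref{eq:nfsc19} yields $\nfs^c_1(P^*, Q^\true) \sleftand (T^* \sleftor S^c)$, and again Lemma~\ref{lem:nfsc3} together with \eqref{ax:scl7} delivers both the grammatical claim and the derivable equality.

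Since only the final output needs to lie in $\SNF$ — and it does, being the conjunction of a $\true$-term with a $*$-term, hence a $\true$-$*$-term — while every intermediate rewriting step is just associativity, there is no genuine obstacle here; the substance of the lemma was already carried by the four preceding lemmas. The one point warranting a moment's care is confirming that the heavy mutual recursion among $\nfs^c_1$, $\nfs^c_2$, $\nfs^c_3$ and the call to $\nfs^n$ inside \eqref{eq:nfsc15} does not bite: the clauses for $\nfs^c_3$ appeal only to $\nfs^c_1$, whose correctness has already been fully discharged inside the proof of Lemma~\ref{lem:nfsc3}, so there is no circularity. As elsewhere in this appendix, termination of $\nfs^c$ is not proved explicitly.
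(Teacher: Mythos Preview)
Your proposal is correct and follows essentially the same approach as the paper's proof: reduce via Lemmas~\ref{lem:nfsc1}--\ref{lem:nfsc4} to the case of two $\true$-$*$-terms, then by \eqref{ax:scl7} to the claim about $\nfs^c_3$, and prove that by induction on the number of $\ell$-terms in $R^*$, handling the base case and the disjunction case with Lemma~\ref{lem:nfsc3} and \eqref{ax:scl7}, and the conjunction case with the induction hypothesis and \eqref{ax:scl7}. Your write-up is more detailed than the paper's, but the structure and the ingredients are identical.
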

\begin{proof}
By the four preceding lemmas it suffices to show that $\nfs^c(P^\true \sleftand
Q^*, R^\true \sleftand S^*)$ is in $\SNF$ and that $\EqFSCL \vdash
\nfs^c(P^\true \sleftand Q^*, R^\true \sleftand S^*) = (P^\true \sleftand Q^*)
\sleftand (R^\true \sleftand S^*)$. By \eqref{ax:scl7}, in turn, it suffices to
prove that $\nfs^c_3(P^*, Q^\true \sleftand R^*)$ is a $*$-term and that
$\EqFSCL \vdash \nfs^c_3(P^*, Q^\true \sleftand R^*) = P^* \sleftand (Q^\true
\sleftand R^*)$. We prove this by induction on the number of $\ell$-terms in
$R^*$. In the base case we have that $\nfs^c_3(P^*, Q^\true \sleftand R^\ell)
= \nfs^c_1(P^*, Q^\true) \sleftand R^\ell$. The results follow from Lemma
\ref{lem:nfsc3} and \eqref{ax:scl7}.

For conjunctions the result follows from the induction hypothesis and
\eqref{ax:scl7} and for disjunctions it follows from Lemma \ref{lem:nfsc3}
and \eqref{ax:scl7}.
\end{proof}

\thmnfscorrect*
\begin{proof}
By induction on the complexity of $P$. If $P$ is an atom, the result is by
\eqref{ax:scl4}, \eqref{ax:scl5} and its dual. If $P$ is $\true$ or $\false$
the result is by identity. For the induction we get the result by Lemma
\ref{lem:nfsn}, Lemma \ref{lem:nfsc5} and \eqref{ax:scl2}.
\end{proof}

\section{Correctness of \invstitle}
\label{sec:sclinv}

\thminvscorrect*
\begin{proof}
We first prove that for all $\true$-terms $P$, $\invs^\true(\SE(P)) \equiv P$,
by induction on $P$. In the base case $P \equiv \true$ and we have
$\invs^\true(\SE(P)) \equiv \invs^\true(\true) \equiv \true \equiv P$. For the
inductive case we have $P \equiv (a \sleftand Q^\true) \sleftor R^\true$ and
\begin{align*}
\invs^\true(\SE(P)) &\equiv \invs^\true(\SE(Q^\true) \tlef a \trig
  \SE(R^\true))
  &\textrm{by definition of $\SE$} \\
&\equiv (a \sleftand \invs^\true(\SE(Q^\true))) \sleftor
  \invs^\true(\SE(R^\true))
  &\textrm{by definition of $\invs^\true$} \\
&\equiv (a \sleftand Q^\true) \sleftor R^\true
  &\textrm{by induction hypothesis} \\
&\equiv P.
\end{align*}

Similarly we see that for all $\false$-terms $P$, $\invs^\false(\SE(P)) \equiv
P$, by induction on $P$. In the base case $P \equiv \false$ and we have
$\invs^\false(\SE(P)) \equiv \invs^\false(\false) \equiv \false \equiv P$. For
the inductive case we have $P \equiv (a \sleftor Q^\false) \sleftand R^\false$
and
\begin{align*}
\invs^\false(\SE(P)) &\equiv \invs^\false(\SE(R^\false) \tlef a \trig
  \SE(Q^\false))
  &\textrm{by definition of $\SE$} \\
&\equiv (a \sleftor \invs^\false(\SE(Q^\false))) \sleftand
  \invs^\false(\SE(R^\false))
  &\textrm{by definition of $\invs^\false$} \\
&\equiv (a \sleftor Q^\false) \sleftand R^\false
  &\textrm{by induction hypothesis} \\
&\equiv P.
\end{align*}

Now we check that for all $\ell$-terms $P$, $\invs^\ell(\SE(P)) \equiv P$.  We
observe that either $P \equiv (a \sleftand Q^\true) \sleftor R^\false$ or $P
\equiv (\neg a \sleftand Q^\true) \sleftor R^\false$. In the first case we have
\begin{align*}
\invs^\ell(\SE(P)) &\equiv \invs^\ell(\SE(Q^\true) \tlef a \trig \SE(R^\false))
  &\textrm{by definition of $\SE$} \\
&\equiv (a \sleftand \invs^\true(\SE(Q^\true))) \sleftor
  \invs^\false(\SE(R^\false))
  &\textrm{by definition of $\invs^\ell$} \\
&\equiv (a \sleftand Q^\true) \sleftor R^\false
  &\textrm{as shown above} \\
&\equiv P.
\end{align*}
In the second case we have that
\begin{align*}
\invs^\ell(\SE(P)) &= \invs^\ell(\SE(R^\false) \tlef a \trig \SE(Q^\true))
  &\textrm{by definition of $\SE$} \\
&\equiv (\neg a \sleftand \invs^\true(\SE(Q^\true))) \sleftor
  \invs^\false(\SE(R^\false))
  &\textrm{by definition of $\invs^\ell$} \\
&\equiv (\neg a \sleftand Q^\true) \sleftor R^\false
  &\textrm{as shown above} \\
&\equiv P.
\end{align*}

We now prove that for all $*$-terms $P$, $\invs^*(\SE(P)) \equiv P$, by
induction on $P$ modulo the complexity of $\ell$-terms. In the base case we are
dealing with $\ell$-terms. Because an $\ell$-term has neither a cd nor a dd we
have $\invs^*(\SE(P)) \equiv \invs^\ell(\SE(P)) \equiv P$, where the first
equality is by definition of $\invs^*$ and the second was shown above. For the
induction we have either $P \equiv Q \sleftand R$ or $P \equiv Q \sleftor R$.
In the first case note that by Theorem \ref{thm:scddd}, $\SE(P)$ has a cd and
no dd.  So we have 
\begin{align*}
\invs^*(\SE(P)) &\equiv \invs^*(\cd_1(\SE(P))\sub{\Box}{\true}) \sleftand
  \invs_*(\cd_2(\SE(P)))
  &\textrm{by definition of $\invs^*$} \\
&\equiv \invs^*(\SE(Q)) \sleftand \invs^*(\SE(R))
  &\textrm{by Theorem \ref{thm:scddd}} \\
&\equiv Q \sleftand R
  &\textrm{by induction hypothesis} \\
&\equiv P.
\end{align*}
In the second case, again by Theorem \ref{thm:scddd}, $P$ has a dd and no
cd. So we have that
\begin{align*}
\invs^*(\SE(P)) &\equiv \invs^*(\dd_1(\SE(P))\sub{\Box}{\false}) \sleftor
  \invs_*(\dd_2(\SE(P)))
  &\textrm{by definition of $\invs^*$} \\
&\equiv \invs^*(\SE(Q)) \sleftor \invs^*(\SE(R))
  &\textrm{by Theorem \ref{thm:scddd}} \\
&\equiv Q \sleftor R
  &\textrm{by induction hypothesis} \\
&\equiv P.
\end{align*}

Finally, we prove the theorem's statement by making a case distinction on the
grammatical category of $P$. If $P$ is a $\true$-term, then $\SE(P)$ has only
$\true$-leaves and hence $\invs(\SE(P)) \equiv \invs^\true(\SE(P)) \equiv P$,
where the first equality is by definition of $\invs$ and the second was shown
above. If $P$ is a $\false$-term, then $\SE(P)$ has only $\false$-leaves and
hence $\invs(\SE(P)) \equiv \invs^\false(\SE(P)) \equiv P$, where the first
equality is by definition of $\invs$ and the second was shown above. If $P$ is
a $\true$-$*$-term, then it has both $\true$ and $\false$-leaves and hence,
letting $P \equiv Q \sleftand R$,
\begin{align*}
\invs(\SE(P)) &\equiv \invs^\true(\tsd_1(\SE(P))\sub{\Box}{\true}) \sleftand
  \invs^*(\tsd_2(\SE(P)))
  &\textrm{by definition of $\invs$} \\
&\equiv \invs^\true(\SE(Q)) \sleftand \invs^*(\SE(R))
  &\textrm{by Theorem \ref{thm:stsd}} \\
&\equiv Q \sleftand R
  &\textrm{as shown above} \\
&\equiv P,
\end{align*}
which completes the proof.
\end{proof}

%
%
\cleardoublepage
\phantomsection
\addcontentsline{toc}{chapter}{Bibliography}
\bibliography{references}
\end{document}